\theoremstyle{plain}
\newtheorem{theorem}{Theorem}[section]
\newtheorem{corollary}[theorem]{Corollary}
\newtheorem{proposition}[theorem]{Proposition}
\newtheorem{lemma}[theorem]{Lemma}
\newtheorem{definition}[theorem]{Definition}
\newcommand{\define}[1]{{\em #1}}
\newtheoremstyle{myRemark}%
   {}%	space above
   {}%	space below
   {}%	Body font, empty = normal	    
   {}%	Indent amount, empty = no indent
   {\bfseries}%	Thm head font
   {.}%		Punctuation after Thm head
   {5pt plus 1.0pt minus 1.0pt}%  Space after thm head: " " = normal interword space; \newline = linebreak%
\theoremstyle{myRemark}
\newtheorem{remark}[theorem]{Remark}
\newcommand{\thmref}[1]{{\upshape\ref{#1}}%
}
\newcounter{proofenumi}
\newenvironment{proofenumerate}%
   {\begin{list}{(\roman{proofenumi})}{\usecounter{proofenumi}%
	    \setlength{\itemindent}{5.475pt}%
	    \setlength{\labelsep}{5.475pt}%
	    \setlength{\leftmargin}{0pt}%
	    }}%
   {\end{list}}
\newcommand{\C}{\mathbb C}
\newcommand{\N}{\mathbb N}
\newcommand{\R}{\mathbb R}
\newcommand{\Z}{\mathbb Z}
\newcommand{\mA}{\mathcal A}
\newcommand{\mB}{\mathcal B}
\newcommand{\mD}{\mathcal D}
\newcommand{\mN}{\mathcal N}
\newcommand{\mT}{\mathcal T}
\newcommand{\FA}{\mathfrak A}	% The formal differential expression 'A'
\newcommand{\FB}{\mathfrak B}	% The formal differential expression 'B'
\newcommand{\FR}{\mathfrak R}	% The formal differential expression 'R'
\newcommand{\fs}{\mathfrak s}	% The form 's'
\newcommand{\ft}{\mathfrak t}	% The form 't'
\renewcommand{\a}{\alpha}
\renewcommand{\phi}{\varphi}
\renewcommand{\theta}{\vartheta}
\newcommand{\eps}{\varepsilon}
\newcommand{\ga}{\gamma}
\newcommand{\la}{\lambda}
\newcommand{\om}{\omega}
\newcommand{\Om}{\Omega}
\newcommand{\specspace}{\mathcal L}
\newcommand{\closed}{\mathscr{C}}	% space of closed operators
\newcommand{\spectrum}{\sigma}
\newcommand{\pointspec}{\sigma_p}
\newcommand{\essspec}{\sigma_{ess}}
\newcommand{\disspec}{\sigma_{d}}
\newcommand{\lae}{\la_e}
\newcommand{\dimlae}{n}			% used in the variational theorem in ch. 5
\newcommand{\hil}{\mathcal H}
\newcommand{\LL}{\mathscr{L}}	% space of square integrable functions
\newcommand{\Ltwo}[2]{\LL^2(#1, #2)}  % L^2 Funktionen im gewichteten Raum
\providecommand{\range}{\mathop{\rm ran}\nolimits}
\providecommand{\codim}{\mathop{\rm codim}\nolimits}
\providecommand{\re}{\mathop{\rm Re}\nolimits}		% real part of a complex number
\providecommand{\sign}{\mathop{\rm sign}\nolimits}
\newcommand{\id}{I}
\newcommand{\diff}[1]{\frac{\mathrm{d}}{\mathrm{d} #1}}
\newcommand{\pa}[1]{\frac{\partial}{\partial #1}}
\newcommand{\rd}{{\rm d}}
\newcommand{\SC}[1]{S_{#1}}
\newcommand{\SCmin}[1]{S_{#1}^{\mathrm{{\scriptscriptstyle[}min{\scriptscriptstyle]}} }}
\newcommand{\tform}[1]{\ft_{#1}}
\newcommand{\tformclosed}[1]{\widetilde\ft_{#1}}
\newcommand{\SCfs}[1]{\fs_{#1}}
\newcommand{\SCft}[1]{\ft_{#1}}
\newcommand{\sig}[1]{\sigma_{#1}}
\newcommand{\Dup}{c_2}
\newcommand{\Alo}{c_1}
\newcommand{\Dcup}{c_2}
\newcommand{\Dclo}{c_2^-}
\newcommand{\Acup}{c_1^+}
\newcommand{\Aclo}{c_1}
\newcommand{\laup}{\lambda^{[u]}}		% upper bound for the eigenvalue
\newcommand{\lalo}{\lambda^{[l]}}		% lower bound for the eigenvalue
\newcommand{\lanumplus}[1][noargument]{%	% negative numerical value with contfrac.mws
\ifthenelse{\equal{#1}{noargument}}{\la^{[num]}}{\la^{[num]}_{#1}}%
}
\newcommand{\lanumminus}[1][noargument]{%	% negative numerical value by with contfrac.mws
\ifthenelse{\equal{#1}{noargument}}{\la^{[num]}}{\la^{[num]}_{-#1}}%
}
\newcommand{\lanumplusS}[1][noargument]{%	% positive numerical value by Suffern
\ifthenelse{\equal{#1}{noargument}}{\la^{[num]}_S}{\la^{[num]}_{S,#1}}%
}
\newcommand{\lanumminusS}[1][noargument]{%	% negative numerical value by Suffern
\ifthenelse{\equal{#1}{noargument}}{\la^{[num]}_S}{\la^{[num]}_{S,-#1}}%
}
\newcommand{\lanumS}[1][noargument]{%	% numerical value by Suffern
\ifthenelse{\equal{#1}{noargument}}{\la^{[num]}_S}{\la^{[num]}_{S,#1}}%
}
\newcommand{\laQ}{\la_Q}
\newcommand{\ax}{\alpha}	
\newcommand{\abx}{\alpha_{21}}	
\newcommand{\eignumber}{n}			% enumerates the eigenvalues of $\mT$
\newcommand{\Aeignumber}{n}			% enumerates the eigenvalues of $\mA$
\definecolor{SuffernColour}{gray}{0.9}
\newcolumntype{N}{>{\columncolor{NumColour}}r}
\newcolumntype{S}{>{\columncolor{SuffernColour}}r}
\newcommand{\tm}[4]%	% 2x2-matrix in the text
   {\left(\begin{smallmatrix}#1&#2\\#3 &#4\end{smallmatrix}\right)}
\newcommand{\ttv}[2]{(#1,\,#2)^t}	% transposed text vector
\newcommand{\tv}[2]%	% vector with two entries in the text
   {\left(\begin{smallmatrix}#1\\#2\end{smallmatrix}\right)}
\newlength{\arraycolsepSave}	% vector in mathmode
\newcommand{\mv}[1]{
   \setlength{\arraycolsepSave}{\arraycolsep}
   \setlength{\arraycolsep}{2pt}
   \left(\begin{array}{c}#1\end{array}\right)
}
\newcommand{\scalar}[2]{(#1,\, #2)}
\newcommand{\bigscalar}[2]{\bigl(#1,\, #2\bigr)}
\newcommand{\norm}[1]{\|#1\|}
\newcommand{\form}[2]{[#1,\, #2]}
\newcommand{\ltwo}[1]{\|#1\|_2}			% L_2-norm
\newcommand{\msub}[2]{\genfrac{}{}{0pt}{}{#1}{#2}}
\providecommand{\maxp}{\mathop{\rm max\vphantom{p}} }
\providecommand{\minp}{\mathop{\rm min\vphantom{p}} }
\newcommand{\noz}[1]{#1^\times}
\newcommand{\ts}{\textstyle}
\renewcommand*\refstepcounter[1]{\stepcounter{#1}%
  \protected@edef\@currentlabel{%
    \csname p@#1\expandafter\endcsname
      \csname the#1\endcsname
  }%
}
\newcounter{listcounter}
\newcommand{\CondItem}[1]{
   \renewcommand{\p@listcounter}[1]{#1}	% changes appearance of ref's
   \item[#1]
   \refstepcounter{listcounter}		% triggers change of ref's
}
\newenvironment{condlist}
{% begin der Umgebung:
\begin{list}{counter is \arabic{listcounter}}{
      \usecounter{listcounter}
      \setlength{\rightmargin}{0.0\textwidth}
      \setlength{\leftmargin}{0.12\textwidth}
      \setlength{\itemindent}{0pt}
      \setlength{\labelsep}{3ex}
      \settowidth{\labelwidth}{ (D2.b) }
   } 
} % Ende von \begin{list}
{\end{list}}
\newcommand{\AOne}{A1}
\newcommand{\BOne}{B1}
\newcommand{\BTwo}{B2}
\newcommand{\DOne}{D1}
\newcommand{\DTwoa}{D2}
\newcommand{\DTwob}{D2$'$}
\newcommand{\ATwo}{A2}
\newcommand{\mytitle}{A Variational Principle for Block Operator Matrices and its Application to the Angular Part of the Dirac Operator in Curved Spacetime}
\definecolor{refkey}{named}{CadetBlue}
\definecolor{labelkey}{named}{Mahogany}
\begin{document}

\title{\mytitle}
\author{Monika Winklmeier\thanks{Mathematisches Institut, Universit\"at Bern, Sidlerstrasse 5, CH- 3012 Bern,\newline winklmeier@math.unibe.ch}}
% \classno{ 49R50 (primary), 83C57 (secondary)}

\maketitle

% \tableofcontents

\begin{abstract} %%%{{{
   \noindent
   The operator associated to the angular part of the Dirac equation in the Kerr-Newman background metric is a block operator matrix with bounded diagonal and unbounded off-diagonal entries. 
   The aim of this paper is to establish a variational principle for block operator matrices of this type and to derive thereof upper and lower bounds for the angular operator mentioned above.
   In the last section, these analytic bounds are compared to numerical values from the literature.
\end{abstract} %%%}}}

\section{Introduction} %%%{{{
\label{sec:intro}
Variational principles are an important tool to give estimates for eigenvalues of a selfadjoint operator $A$ since no knowledge about the corresponding eigenvectors is needed.
The classical variational principle based on the Rayleigh functional $\frac{\scalar{\psi}{A\psi}}{\|\psi\|^2}$, $\psi\in\mD(A)\setminus\{0\}$, applies only to semibounded operators, see, e.g., \cite{reed_simonIV}.

For example, the eigenvalues of $A$ below its essential spectrum are given by
\begin{align}
   \label{eq:classicalVP}
   \la_n\, =\, 
   \min_{\msub{L\subseteq\mathcal D(A)}{\dim L = n}}
   \max_{\psi\in L\setminus\{0\}}
   \frac{\scalar{\psi}{A\psi}}{\|\psi\|^2}.
\end{align}

In this paper, however, we are interested in analytic bounds for the eigenvalues of the angular part of the Dirac operator in the Kerr-Newman metric which has been studied for instance in~\cite{BSW}.
It is neither bounded from below nor from above as is usually the case for Dirac operators.
Hence, the classical variational principle cannot be applied.
% However, in the application we have in mind the operator under consideration is not semibounded.

Recently, the Rayleigh functional was used to establish variational principles also for Dirac operators in flat space time.
The idea is to decompose the given Hilbert space on which the Dirac operator is defined into the direct sum of two Hilbert spaces and then, in~\eqref{eq:classicalVP}, to take the minimum over certain subspaces $L$ in the first Hilbert space, and the maximum over certain $\psi$ whose first component lies in $L$.

Griesemer and Siedentop~\cite{GS} have proved a variational principle for the eigenvalues of a block operator matrix $\mT$ on a Hilbert space $\hil=\hil_1\oplus\hil_2$ in a gap of its essential spectrum where the authors did not assume that the operator is semibounded. 
They assumed that the spectra of the operators on the diagonal do not overlap, so that, roughly speaking, the decomposition of the Hilbert space $\hil$ into spectral subspaces of $\mT$ is close to the given decomposition. 
In \cite{GLS}, Griesemer, Lewis and Siedentop gave a variational principle for the Dirac operator with Coulomb potential.
In the case of the block operator matrix in section~\ref{sec:application}, however, the spectra of the diagonal entries do overlap, hence this principle cannot be applied.
Under assumptions different from those in~\cite{GS}, Dolbeault, Esteban and S{\'e}r{\'e} proved a variational principle for the eigenvalues of operators in gaps \cite{DES00a}, in particular they considered Dirac operators with Coulomb potential~\cite{DES00b}.
Their techniques differ from those used in \cite{GS} and \cite{GLS}, but they also make use of the Rayleigh functional.

Various types of block operator matrices and their spectral properties have been investigated recently. 
A survey of some recent results can be found in~\cite{Tr99}. 
In this paper we are interested in so-called off-diagonally dominant selfadjoint block operator matrices
\begin{align}
   \mT\, =\, 
   \begin{pmatrix} T_{11} & T_{12} \\ T_{12}^* & T_{22} \end{pmatrix},
   \qquad\qquad
   \mD(\mT) = \mD(T_{12}^*)\oplus\mD(T_{12})
\end{align}
on a Hilbert space $\hil=\hil_1\oplus\hil_2$ where the linear operators $T_{ij}(\hil_j\rightarrow\hil_i)$ are closed and $T_{11}$ and $T_{22}$ are symmetric and bounded with respect to $T_{12}^*$ and $T_{12}$ respectively.
Further we assume that the operator $T_{11}$ is bounded from below by $\Alo$ and that $T_{22}$ is bounded from above by $\Dup$.
However, we do not assume that the spectra of $T_{11}$ and $T_{22}$ are disjoint.
The crucial step to obtain a variational principle for the eigenvalues of $\mT$ is to associate with it an operator valued function, the so-called Schur complement
\begin{align*}
   \SC{1}(\la)\, =\, T_{11}-\la - T_{12}(T_{22}-\la)^{-1}T_{21},
   \qquad \la\in\rho(T_{22}).
\end{align*}

The Schur complement plays an important role in the Schur-Frobenius factorisation of block operator matrices, see, e.g., \cite{Nagel89}, \cite{ALMS}, \cite{ALMSau}.
It turns out that $\pointspec(\mT)\cap(\Dup,\lae) = \pointspec(\SC{1})\cap(\Dup,\lae)$ where $\lae = \inf \essspec(\SC{1})$.
To the Schur complement $\SC{1}$ we can apply a variational principle proved by  by Binding, Eschw{\'e} and Langer~\cite{BEL00} and by Eschw{\'e} and Langer~\cite{eschwe}.
We follow an approach that has already been used by Langer, Langer and Tretter in~\cite{LLT02} for the so-called diagonally dominant case with bounded off-diagonal elements $T_{12}$ and $T_{12}^*$.
The off-diagonally dominant case has been studied also by Kraus, Langer and Tretter~\cite{KLT04} under the assumption the diagonal entries $T_{11}$ and $T_{22}$ of the block operator matrix are bounded.

The main result of this paper is a variational characterisation of the eigenvalues of off-diagonally dominant block operator matrices $\mT$ where one of the diagonal entries, e.g. $T_{11}$, of the block operator matrix may be unbounded (Theorem~\thmref{theorem:schurvar:var2}).
Under additional assumptions on the operator $T_{12}$ we derive in Theorem~\ref{theorem:schurvar:estimate} the following upper and lower bounds for the eigenvalues $\la_1\ge\la_2\ge \dots $ of $\mT$ which are greater than $\Dup$:
\begin{alignat*}{2}
   % \label{eq:schurvar:lale}
   \la_{\eignumber}\ &\le\
   \frac{\abx}{2} \sqrt{\nu_{\eignumber+n_0}}  + 
   \sqrt{ \nu_{\eignumber+n_0} + {\ts\frac{1}{4}} (\abx\sqrt{\nu_{\eignumber+n_0}}+\|T_{22}\|+\ax)^2  }
   \ + \ts\frac{1}{2} \left( \ax + \Dup  \right)
   ,\quad
   \\[2ex]
   % \label{eq:schurvar:lage}
   \la_{\eignumber}\ 
   &\ge\ \sqrt{\nu_{\eignumber+n_0}}\  + \ts\frac{1}{2}(\Alo -\|T_{22}\|)
\end{alignat*}
where the constants $\abx$, $\ax$, $\Aclo$ and $\Dup$ are such that 
\begin{gather*}
   \begin{alignedat}{3}
      \scalar{y}{T_{11}y}\ &\le\ \Aclo\, \|y\|^2,
      \quad
      & y\in\hil_2,
      \qquad
      \scalar{x}{T_{22}x}\ &\le &\ \Dup\, \|x\|^2,
      \quad
      &x\in\hil_1,
   \end{alignedat}
   \\[1ex]
   \|T_{11}x\|\ \le\ \ax \|x\| + \abx \|T_{12}^*x\|, 
   \qquad x\in\mD(T_{12}^*),
\end{gather*}
and $0\, <\, \nu_1\, \le\, \nu_2\, \le\, \dots $ are the eigenvalues of $T_{12}^{}T_{12}^*$.
% $\Acup$ is an upper bound for $T_{11}$, $\Dclo$ is a lower bound for $T_{22}$ and 
The index shift $n_0$ is given by $n_0=\min\limits_{\la>\Dcup}\dim\specspace_{(-\infty,0)} \SC{1}(\la)$ where $\specspace_{(-\infty,0)} \SC{1}(\la)$ denotes the spectral subspace of $\SC{1}(\la)$ corresponding to the interval $(-\infty, 0)$.
In the case $T_{11}=0$ and $T_{22}=0$ the formulae above give the exact values of eigenvalues, hence Theorem~\ref{theorem:schurvar:var3} can be regarded as a perturbational result for block operator matrices with a certain type of unbounded perturbation.

The variational principle established in this paper is applied to the angular part of the Dirac operator in curved spacetime to derive upper and lower bounds for its eigenvalues in Theorem~\thmref{theorem:angschur:var1}.
These bounds are given explicitly in terms of the physical quantities involved.
Suffern, Fackerell and Cosgrove~\cite{SFC83} derived numerical approximations for the eigenvalues by applying a power series ansatz in two of the physical variables involved.
They obtained numerical approximations for the eigenvalues, but they did not give asymptotics of the eigenvalues as are obtained by our analytical approach.

\par
The paper is organised as follows.
In section~\ref{sec:SC} we consider off-diagonally dominant block operator matrices and define the Schur complements associated with them. 
%
% if the Sturm-Liouville  application IS NOT included
 Section~\ref{sec:variatonal} contains the variational principle for block operator matrices which is applied to the angular part of the Dirac operator in curved spacetime in section~\ref{sec:application}.
Finally, in section~\ref{subsec:comparison} the analytic bounds are compared to numerical approximations calculated by Suffern et al.~\cite{SFC83}.
% 
%
% if the Sturm-Liouville  application IS included
% Section~\ref{sec:variatonal} contains the variational principle for block operator matrices. 
% In section~\ref{sec:SL:application} this principle is applied to a certain class of Sturm-Liouville equations, which can be transformed into an appropriate first order problem, and in section~\ref{sec:application} the variational principle is used to derive analytic bounds for the eigenvalues of the angular part of the Dirac operator in curved spacetime.
% Finally, in section~\ref{subsec:comparison} these bounds are compared to numerical approximations calculated by Suffern et al.~\cite{SFC83}.

%%%}}}

\section{Schur complements and sesquilinear forms related to a block operator matrix} %%%{{{
\label{sec:SC}
Let $\hil_1$ and $\hil_2$ be Hilbert spaces with norm and scalar product denoted by $\norm{\cdot}_j$ and $\scalar{\,\cdot\ }{\cdot\,}_j$, $j=1,2$.
Consider the Hilbert space $\hil := \hil_1\oplus\hil_2$ equipped with the norm $\norm{\cdot}$ and the scalar product $\scalar{\phi}{\psi} = \scalar{\phi_1}{\phi_1}_1 + \scalar{\phi_2}{\psi_2}_2$ for $\phi=\ttv{\phi_1}{\phi_2},\ \psi=\ttv{\psi_1}{\psi_2}\in\hil$.
On $\hil$ we consider block operator matrices
\begin{align*}
   \mT\, =\, 
   \begin{pmatrix} T_{11} & T_{12} \\ T_{21} & T_{22} \end{pmatrix},
   \hspace{5ex}
   \mD(\mT)\, =\, \mD_1\oplus\mD_2\, \subseteq\, \hil_1\oplus\hil_2
\end{align*}
with linear operators $T_{ij}(\hil_j\rightarrow\hil_i)$, $i,j=1,2$.

In this paper we will always assume that the following conditions concerning the entries of $\mT$ hold:

\begin{condlist}
\CondItem{{\rm(\BOne)}}
   \label{B1}
   {$T_{12}$ is a closed densely defined operator from $\hil_2$ to $\hil_1$ and $T_{12}^*=T_{21}$;}
   \CondItem{{\rm (\AOne)}}
   \label{A1}
   $\mD(T_{12}^*)\subseteq\mD(T_{11})$ and 
   $T_{11}$ is symmetric in $\hil_1$ and semibounded from below, i.e., there is a  constant
   $\Alo\in\R$ such that 
   \begin{align*}
      \scalar{x}{T_{11}x}\ \ge\ \Alo\|x\|^2,\qquad x\in\mD(T_{11});
   \end{align*}
   \CondItem{{\rm(\DOne)}}%
   \label{D1}%
   $\mD(T_{12})\subseteq\mD(T_{22})$ and $T_{22}$ is symmetric in $\hil_2$ and 
      semibounded from above, i.e., there is a constant $\Dup\in\R$ such 
      that  
      \begin{align*}
	 \scalar{x}{T_{22}x}\ \le\ \Dup\|x\|^2, \qquad x\in\mD(T_{22});
      \end{align*}
      furthermore, $T_{22}$ is closed and $(\Dup, \infty)\subseteq \rho(T_{22})$.
\end{condlist}
We always assume that the block operator matrix $\mT$ is given by
\begin{condlist}
   \CondItem{{\rm($\mT 1$)}}
   \label{mT1}
   $\mT=\begin{pmatrix}T_{11}&T_{12}\\ T_{12}^*&T_{22}\end{pmatrix}$,
      \quad $\mD(\mT) = \mD(T_{12}^*)\oplus\mD(T_{12})$.
\end{condlist}

\begin{remark}
%    \begin{proofenumerate}
   \begin{enumerate}

      \item 
      Since $T_{11}$ is closable by assumption, the condition concerning its domain implies (see~\cite[chap. IV, remark 1.5]{kato}) that $T_{11}$ is $T_{21}$-bounded, i.e., that there are positive numbers 
      $\ax$  and $\abx$ such that 
      \begin{align*} 
	 \|T_{11}x\|\ \le\ \ax\,\|x\| + \abx\,\|T_{21}x\|,
	 \qquad x\in\mD(T_{21}).
      \end{align*} 

      \item 
      Condition \thmref{D1} implies that $T_{22}$ is even selfadjoint because
      the \index{defect index} defect index of the closed operator $T_{22}$ is constant on the connected set $\C\setminus\overline{W(T_{22})}$ where $W(T_{22})= \{ \scalar{x}{T_{22}x}\, :\, x\in\mD(T_{22}), \|x\|=1 \}$ is the numerical range of $T_{22}$.
      Now, $\rho(T_{22})\cap \C \setminus\overline{W(T_{22})}$ being nonempty implies that $T_{22}$ has zero defect, hence it is essentially selfadjoint. 
      Since $T_{22}$ is already closed, its selfadjointness is proved.
   \end{enumerate}
%    \end{proofenumerate}
\end{remark}

Observe that the above conditions do not imply that $\mT$ is closed. 
\medskip\par

Next we associate an operator valued function, the so-called Schur complement, to the block operator matrix $\mT$.
In 
Corollary~\thmref{cor:schurvar:essspec}
we show that the spectrum of the Schur complement and the spectrum of $\mT$ are related.
For $\la\in\rho(T_{22})$ define 
\begin{align}\label{eq:SCmin}
   \begin{aligned}
      \mD(\SCmin{1}(\la))\, &:=\, \{ \psi\in\mD(T_{12}^*)\, :\,
	 (T_{22}-\la)^{-1}T_{12}^*\psi \in\mD(T_{12}) \},\\[1ex]
      \SCmin{1}(\la)\, &:=\,
      T_{11}-\la - T_{12}(T_{22}-\la)^{-1}T_{12}^*.
   \end{aligned}
\end{align}
The family $\SCmin{1}(\la)$, $\la\in\rho(T_{22})$ is the so-called \define{minimal Schur complement of $\mT$}.
In the following we are interested in real $\la$.

\begin{proposition}\label{prop:schurvar:schur}
   Assume that conditions \ref{B1}, \ref{A1}, \ref{D1} and \ref{mT1} hold.
   Then  for all $\la> \Dup$ the operator $\SCmin{1}(\la)$ is bounded from below. 
   If in addition the condition 
   \begin{condlist}
      \CondItem{{\rm(\DTwoa)}}{
	 $T_{22}$ is bounded}
      \label{D2a}
   \end{condlist}
   holds, then $\SCmin{1}(\la)$ is also symmetric, and therefore densely defined and closable.
\end{proposition}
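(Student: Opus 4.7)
The plan is to read off both assertions from the fact that, for $\la>\Dup$, the resolvent $R_\la:=(T_{22}-\la)^{-1}$ is a bounded selfadjoint operator on $\hil_2$ which is strictly negative. Selfadjointness of $T_{22}$ comes from the remark preceding the proposition, and the spectral theorem combined with $T_{22}\le\Dup<\la$ then yields $R_\la\le 0$ and bounded.

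For the lower bound, fix $\psi\in\mD(\SCmin{1}(\la))$. By the definition of the domain and by \ref{A1}, $\psi$ lies in $\mD(T_{12}^*)\subseteq\mD(T_{11})$, and in addition $R_\la T_{12}^*\psi\in\mD(T_{12})$. This last condition is exactly what is needed to move $T_{12}$ across the inner product,
\[
   \scalar{\psi}{T_{12}R_\la T_{12}^*\psi}\;=\;\scalar{T_{12}^*\psi}{R_\la T_{12}^*\psi}\;\le\;0,
\]
the inequality coming from $R_\la\le 0$. Combined with $\scalar{\psi}{T_{11}\psi}\ge\Aclo\|\psi\|^2$ from \ref{A1}, this gives $\scalar{\psi}{\SCmin{1}(\la)\psi}\ge(\Aclo-\la)\|\psi\|^2$, proving semiboundedness.

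For the second part, assume \ref{D2a}. Symmetry is verified term-by-term on $\mD(\SCmin{1}(\la))$: symmetry of $T_{11}$ handles the first summand, reality of $\la$ the second, and selfadjointness of $R_\la$ together with the domain condition handles the third, yielding $\scalar{\psi}{\SCmin{1}(\la)\phi}=\scalar{\SCmin{1}(\la)\psi}{\phi}$. The delicate step is dense definedness: my plan is to factor $-R_\la=B^2$ with $B$ bounded, selfadjoint, and strictly positive, set $C:=BT_{12}^*$ on $\mD(T_{12}^*)$, and observe that $C$ is closed and densely defined because $B$ is bounded and $T_{12}^*$ is closed and densely defined (the latter from \ref{B1}). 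Then $C^*C$ is selfadjoint, hence has dense domain, and a short computation using $C^*=T_{12}B$ identifies $\mD(C^*C)$ with $\mD(\SCmin{1}(\la))$. Symmetric together with densely defined then gives closable, as required.

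The main obstacle I expect is precisely this density step, since the defining condition $R_\la T_{12}^*\psi\in\mD(T_{12})$ couples the unbounded operators $T_{12}$ and $T_{12}^*$ through the bounded resolvent in a way that does not visibly single out a dense subspace. The square-root factorisation converts the mixed condition into the natural form-product domain $\mD(C^*C)$, where density is automatic from selfadjointness of $C^*C$.
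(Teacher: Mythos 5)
Your argument is correct and follows essentially the same route as the paper: the lower bound comes from positivity of $(\la-T_{22})^{-1}$, symmetry is read off term by term, and density is obtained by identifying $-T_{12}(T_{22}-\la)^{-1}T_{12}^*$ with $C^*C$ for $C=(\la-T_{22})^{-1/2}T_{12}^*$ and invoking von~Neumann's theorem, which is exactly the paper's argument. One small imprecision to fix: closedness of $C=BT_{12}^*$ does not follow merely from ``$B$ bounded and $T_{12}^*$ closed'' --- composing a bounded operator on the left with a closed one need not give a closed operator. What saves the step is that \thmref{D2a} makes $T_{22}$ bounded, so $B^{-1}=(\la-T_{22})^{1/2}$ is bounded as well; it is this bounded invertibility of $B$ that transfers closedness from $T_{12}^*$ to $BT_{12}^*$, and it is precisely where condition \thmref{D2a} enters the density argument.
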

\begin{proof}
   Because of the inclusion $\mD(T_{12}^*)\subseteq\mD(T_{11})$, the Schur complement is well defined. 
   To show that $\SCmin{1}(\la)$ is semibounded, we use that $(\la-T_{22})^{-1}$ is a positive operator for $\la>\Dup$; hence, for all $x\in\mD(\SCmin{1}(\la))$ we have
   \begin{align*} 
      \scalar{x}{\SCmin{1}(\la)x}\ 
      &=\
      \scalar{x}{(T_{11}-\la)x} - \scalar{x}{T_{12}(T_{22}-\la)^{-1}T_{12}^*x}\\ 
      &=\ \scalar{x}{(T_{11}-\la)x} + \scalar{T_{12}^*x}{(\la-T_{22})^{-1}T_{12}^*x}\ 
      \ge\, (\Alo-\la)\|x\|^2.
   \end{align*} 
   In particular it follows that the scalar product on the left hand side is real, hence $\SCmin{1}(\la)$ is formally symmetric. 
   It remains to be shown that $\mD(\SCmin{1}(\la))$ is dense in $\hil_1$.
   By assumption, the operator $(\la-T_{22})^{-1}$ is selfadjoint, bounded and positive for fixed $\la>\Dup$. 
   Hence there exists a positive square root $(\la-T_{22})^{-\frac{1}{2}}$ which is also bounded and selfadjoint. 
   Therefore $((\la-T_{22})^{-\frac{1}{2}}T_{12}^*)^*=T_{12}(\la-T_{22})^{-\frac{1}{2}}$ holds. 
   Condition \ref{D2a} implies that the operator $(\la-T_{22})^{-\frac{1}{2}}T_{12}^*$ is closed, hence by  von~Neumann's theorem (see, for instance, \cite[chap. V, theorem~3.24]{kato}) the operator 
   $((\la-T_{22})^{-\frac{1}{2}}T_{12}^*)^* ((\la-T_{22})^{-\frac{1}{2}}T_{12}^*) = -T_{12}(T_{22}-\la)^{-1}T_{12}^* $ with domain 
   \begin{multline*}
      \{x\in\mD( (\la-T_{22})^{-\frac{1}{2}}T_{12}^*)\ :\ 
	 (\la-T_{22})^{-\frac{1}{2}}T_{12}^*x\in\mD( T_{12}(\la-T_{22})^{-\frac{1}{2}} ) \} \\
      = \{ x\in\mD(T_{12}^*)\, :\, (T_{22}-\la)^{-1}T_{12}^*\,x\,\in\mD(T_{12}) \}\
      =\ \mD(\SCmin{1}(\la))
   \end{multline*}
   is selfadjoint and its domain is a core of $(T_{22}-\la)^{-\frac{1}{2}}T_{12}^*$;
   in particular, its domain is dense in $\hil_1$.
\end{proof}

In order to find selfadjoint extensions of the operators $\SCmin{1}(\la)$ we define sesqui\-linear forms associated with them.

\begin{proposition}\label{prop:schurvar:form}
   Assume that the conditions \ref{B1}, \ref{A1}, \ref{D1}, \ref{D2a} and \ref{mT1} hold. 
   Then for all $\la\in(\Dup,\infty)$ the sesquilinear form 
   \label{first:SCfsmin}
   \begin{align*}
      \mD(\SCfs{1}(\la)) := \mD(T_{12}^*),
      \;\,
      \SCfs{1}(\la)[u,\, v] := \bigscalar{u}{(T_{11}-\la)v} - \bigscalar{T_{12}^*u}{(T_{22}-\la)^{-1}T_{12}^*v} 
   \end{align*}
   in $\hil_1$ is symmetric, semibounded from below and closed.
\end{proposition}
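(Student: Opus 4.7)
The plan is to verify the three claimed properties of $\SCfs{1}(\la)$ separately, with the main work reserved for closedness.

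Symmetry on $\mD(T_{12}^*)\subseteq\mD(T_{11})$ is immediate: $\scalar{u}{(T_{11}-\la)v}$ is symmetric because $T_{11}$ is symmetric by \ref{A1} and $\la$ is real, while the second summand is symmetric because, by the preceding remark, $T_{22}$ is selfadjoint, so $(T_{22}-\la)^{-1}$ is selfadjoint for $\la\in\rho(T_{22})\cap\R$.

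For semiboundedness from below I would simply repeat the computation of Proposition~\ref{prop:schurvar:schur}: for $\la>\Dup$ the bounded operator $(\la-T_{22})^{-1}$ is positive, so
$$\SCfs{1}(\la)[u,u]\ =\ \scalar{u}{(T_{11}-\la)u}+\scalar{T_{12}^*u}{(\la-T_{22})^{-1}T_{12}^*u}\ \ge\ (\Alo-\la)\|u\|^2$$
by \ref{A1}.

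The main obstacle is closedness. I would reduce it to showing that, for fixed $\la>\Dup$, the form norm
$$\|u\|_\ft^2\ :=\ \SCfs{1}(\la)[u,u]+(\la-\Alo+1)\|u\|^2$$
is equivalent to the graph norm $\|u\|^2+\|T_{12}^*u\|^2$ of $T_{12}^*$ on $\mD(T_{12}^*)$. Since $T_{12}^*=T_{21}$ is the adjoint of a densely defined operator it is automatically closed, so its graph norm is complete on $\mD(T_{12}^*)$; norm equivalence then yields completeness of $\|\cdot\|_\ft$, which is precisely closedness of the form.

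For the lower estimate, condition \ref{D2a} together with the remark gives that $T_{22}$ is bounded selfadjoint, whence the operator inequality $(\la-T_{22})^{-1}\ge(\la+\|T_{22}\|)^{-1}\id$ holds; combined with \ref{A1} this yields
$$\|u\|_\ft^2\ \ge\ \|u\|^2+(\la+\|T_{22}\|)^{-1}\|T_{12}^*u\|^2.$$
For the upper estimate I would invoke the relative bound $\|T_{11}u\|\le\ax\|u\|+\abx\|T_{12}^*u\|$ from the remark (together with Cauchy--Schwarz and Young's inequality) to control $|\scalar{u}{T_{11}u}|$, and $(\la-T_{22})^{-1}\le(\la-\Dup)^{-1}\id$ from \ref{D1} to control the second term, obtaining $\|u\|_\ft^2\le C(\la)(\|u\|^2+\|T_{12}^*u\|^2)$ for some constant $C(\la)>0$. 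This routine bookkeeping completes the proof.
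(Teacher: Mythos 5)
Your proof is correct and establishes closedness by a genuinely different route than the paper. The paper decomposes $\SCfs{1}(\la)$ as the sum $\tformclosed{11}(\la)+\tform{12}(\la)$ of two closed semibounded forms: $\tform{12}(\la)$ is shown to be closed from the closedness of $(\la-T_{22})^{-\frac{1}{2}}T_{12}^*$ (which relies on \ref{D2a} so that $(\la-T_{22})^{\frac{1}{2}}$ is bounded with bounded inverse), and $\tformclosed{11}(\la)$ is the form closure of the semibounded symmetric operator $T_{11}-\la$; one then invokes the fact that the sum of two closed semibounded forms is closed, and the inclusion $\mD(T_{12}^*)\subseteq\mD(T_{11})$ shows that the domain of the sum is exactly $\mD(T_{12}^*)$. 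You instead prove directly that the form norm $\|\cdot\|_{\ft}$ is equivalent to the graph norm of $T_{12}^*$ on $\mD(T_{12}^*)$, controlling the $T_{11}$-term by its relative bound $\|T_{11}u\|\le\ax\|u\|+\abx\|T_{12}^*u\|$ rather than by form closure, and using the two-sided operator bounds $(\la+\|T_{22}\|)^{-1}\id\le(\la-T_{22})^{-1}\le(\la-\Dup)^{-1}\id$ for the second term; since $T_{12}^*=T_{21}$ is closed, its graph norm is complete, and equivalence of norms transfers completeness to $\|\cdot\|_{\ft}$, which is exactly closedness of the form. Your route avoids the sum-of-closed-forms lemma and the passage to $\tformclosed{11}(\la)$ (whose domain is in general strictly larger than $\mD(T_{11})$), at the cost of carrying out the two explicit norm estimates; both arguments ultimately rest on the same two ingredients, namely \ref{D2a} and the $T_{12}^*$-boundedness of $T_{11}$.
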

\begin{proof}
   The symmetry and boundedness from below can be shown as in Proposition~\thmref{prop:schurvar:schur}. 
   Since $T_{22}$ is bounded by assumption, it follows that also $(T_{22}-\la)^{\frac{1}{2}}$ is bounded and therefore the  operator $(\la-T_{22})^{-\frac{1}{2}}T_{12}^*$ is closed.
   Consider the auxiliary sesquilinear form 
   \begin{align*}
      \mD(\SCft{12}(\la))\ :=\ \mD(T_{12}^*),
      \qquad
      \SCft{12}(\la)[u,\, v]\ :=\ \bigscalar{T_{12}^*u}{(\la - T_{22})^{-1}T_{12}^*v} .
   \end{align*}
   For every $\tform{12}(\la)$-convergent sequence $(x_n)_{n\in\N}\subseteq\mD(T_{12}^*)$ we have
   \begin{align*}
      \|(T_{22}-\la)^{-\frac{1}{2}}T_{12}^*(x_n-x_m)\|\ =\
      \tform{12}(\la)[x_n-x_m]\ \rightarrow 0,
      \qquad\qquad n,\, m\rightarrow\infty.
   \end{align*}
   Since $(T_{22}-\la)^{-\frac{1}{2}}T_{12}^*$ is closed, it follows that
   $x:=\lim\limits_{n\rightarrow\infty}x_n \in\mD((T_{22}-\la)^{-\frac{1}{2}}T_{12}^*) = \mD(T_{12}^*)$.
   This shows that $\tform{12}(\la)$ is closed. 

   The operator $T_{11}-\la$ is symmetric and bounded from below, hence it is form-closable, i.e., the symmetric form $\tform{11}(\la)$ defined by $\tform{11}(\la)\form{\phi}{\psi} = \scalar{T_{11}\phi}{\psi}$ for $\phi,\ \psi\in\mD(T_{11})$ is closable;
    let $\tformclosed{11}(\la)$ denote its closure.
   Then it follows that the form 
   $\SCfs{1}(\la) = \tformclosed{11}(\la) + \tform{12}(\la)$ with domain $\mD(\tformclosed{11}(\la))\cap\mD(\tform{12}(\la))$ is also closed. 
   Since $\mD(\tform{12}(\la))\subseteq\mD(\tform{11}(\la))\subseteq\mD(\tformclosed{11}(\la))$, the form $\SCfs{1}(\la)$ with domain $\mD(\tform{12}(\la)) = \mD(T_{12}^*)$ is closed.
\end{proof}

\begin{remark}
   If we assume that instead of \ref{D2a} the condition
   \begin{condlist}
      \CondItem{{\rm(\DTwob)}}{
	 $(T_{22}-\la)^{-1}(\range( T_{12}^*\cap \mD(T_{12}) )\ 
	    \subseteq\ \mD(T_{12})$,\quad $\la>\Dup$.
      }
      \label{D2b}
   \end{condlist}
   holds, then Proposition~\thmref{prop:schurvar:schur} is also valid. 
   However, the sesquilinear form $\SCfs{1}$ defined in Proposition~\thmref{prop:schurvar:form}
   is closable but not necessarily closed (cf.~\cite{thesis}). 
%    \hfill\remarkend
\end{remark}

Throughout the rest of this section we assume that condition \ref{D2a} holds.
Since the forms $\SCfs{1}(\la)$ are closed, symmetric and semibounded, there are uniquely defined selfadjoint operators $\SC{1}(\la)$ such that 
$\mD(\SC{1}(\la)) \subseteq \mD(\SCfs{1}(\la))$ and 
\begin{align*}
   \SCfs{1}(\la)\form{\phi}{\psi}\, =\, \scalar{S_1(\la)\phi}{\psi},
   \hspace{2ex}
   \phi\in\mD(\SC{1}(\la)),\ \psi\in\mD(\SCfs{1}(\la)).
\end{align*}
In addition, if for fixed $\psi\in\mD(\SCfs{1}(\la))$ there exists a $w\in\hil_1$ such that $\SCfs{1}(\la)\form{\psi}{\phi} = \scalar{w}{\phi}$ for all $\phi$ belonging to a core of $\SCfs{1}(\la)$, then $\psi\in\mD(\SC{1}(\la))$ and $\SC{1}(\la)\psi = w$.

Obviously, for all $\la\in\rho(T_{22})$ the operator $\SC{1}(\la)$ is an extension of $\SCmin{1}(\la)$; it is the so-called \define{Friedrichs extension}.
Note that in general this extension is not the only possible selfadjoint extension (cf.~\cite[chap. VI]{kato}).
We call the family $\SC{1}(\la)$, $\la\in\rho(T_{22})$, the \define{Schur complement} of the block operator matrix $\mT$.

\par
The following proposition gives a sufficient condition such that $\SCmin{1}(\la)$ is already selfadjoint.

\begin{proposition}\label{prop:schurvar:inclusion}
   Assume that the conditions \ref{B1}, \ref{A1}, \ref{D1}, \ref{D2a} and \ref{mT1} hold and let $\la\in(\Dup,\infty)$.
   In addition assume that condition
   \begin{condlist}
      \CondItem{{\rm(\ATwo)}}
      \label{A2}
      $T_{11}$ is symmetric and $T_{12}(T_{22}-\la)^{-1}T_{12}^*$-bounded with relative bound less than $1$, i.e., there are real numbers $\a>0$ and $1>\widetilde\a \ge 0$ {\upshape(}which may depend on $\la${\upshape )} such that for all $x\in\mD(T_{12}(T_{22}-\la)^{-1}T_{12}^*)$
 	 \begin{align*}
	    \|T_{11}x\|\, \le\, \a\, \|x\| + \widetilde\a\, \|T_{12}(T_{22}-\la)^{-1}T_{12}^*x\|, 
% 	    \quad 
% 	    x\in\mD(T_{12}(T_{22}-\la)^{-1}T_{12}^*),
 	 \end{align*}
   \end{condlist}
   holds.
   Then 
   $\SC{1}(\la)=\SCmin{1}(\la)$; in particular, $\SCmin{1}(\la)$ is selfadjoint.
\end{proposition}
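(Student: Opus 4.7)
My plan is to reduce the statement to an application of the Kato--Rellich theorem, viewing $\SCmin{1}(\la)$ as the sum of the selfadjoint operator $B(\la) := -T_{12}(T_{22}-\la)^{-1}T_{12}^*$ and the symmetric $B(\la)$-bounded perturbation $T_{11}-\la$.

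First I would recall from the proof of Proposition \ref{prop:schurvar:schur} that for $\la>\Dup$ the operator $(\la-T_{22})^{-1/2}T_{12}^*$ is closed (this is where \ref{D2a} enters), so von~Neumann's theorem identifies $B(\la) = ((\la-T_{22})^{-1/2}T_{12}^*)^*(\la-T_{22})^{-1/2}T_{12}^*$ as a nonnegative selfadjoint operator whose domain coincides exactly with
\begin{align*}
   \{\psi\in\mD(T_{12}^*)\, :\, (T_{22}-\la)^{-1}T_{12}^*\psi\in\mD(T_{12})\}\, =\, \mD(\SCmin{1}(\la)).
\end{align*}
Because of \ref{A1} we have $\mD(B(\la))\subseteq\mD(T_{12}^*)\subseteq\mD(T_{11})$, so $T_{11}-\la$ is defined and symmetric on all of $\mD(B(\la))$.

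Next I would translate \ref{A2} into a Kato--Rellich bound: for every $x\in\mD(B(\la))$,
\begin{align*}
   \|(T_{11}-\la)x\|\, \le\, \|T_{11}x\| + |\la|\,\|x\|\, \le\, (\a+|\la|)\,\|x\| + \widetilde\a\,\|B(\la)x\|,
\end{align*}
with $\widetilde\a<1$. Applying the Kato--Rellich theorem (\cite[chap.~V, theorem~4.3]{kato}) then yields that $B(\la)+(T_{11}-\la)$ with domain $\mD(B(\la))=\mD(\SCmin{1}(\la))$ is selfadjoint; but this sum is precisely $\SCmin{1}(\la)$. Since $\SC{1}(\la)$ is by construction a selfadjoint extension of $\SCmin{1}(\la)$ and $\SCmin{1}(\la)$ is already selfadjoint, maximality gives $\SC{1}(\la)=\SCmin{1}(\la)$.

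The only delicate point is the identification of $B(\la)$ as selfadjoint on the domain specified in \eqref{eq:SCmin}, which requires \ref{D2a} to ensure the closedness of $(\la-T_{22})^{-1/2}T_{12}^*$; everything else is bookkeeping once the perturbation theorem is in hand. I do not foresee a substantial obstacle beyond verifying that the estimate in \ref{A2} is already phrased as a bound in terms of $\|B(\la)x\|$, so that the Kato--Rellich hypothesis applies verbatim without any further manipulation.
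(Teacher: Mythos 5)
Your proof is correct and follows essentially the same route as the paper: identify $-T_{12}(T_{22}-\la)^{-1}T_{12}^*$ as selfadjoint on $\mD(\SCmin{1}(\la))$ via von~Neumann's theorem (as in Proposition~\ref{prop:schurvar:schur}), then apply Kato--Rellich with $T_{11}-\la$ as the symmetric relatively bounded perturbation. Your write-up is somewhat more explicit than the paper's (it records the absorption of $|\la|$ into the constant $\a$ and the maximality argument giving $\SC{1}(\la)=\SCmin{1}(\la)$), but there is no difference in substance.
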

\begin{proof}
   Recall that $T_{12}(T_{22}-\la)^{-1}T_{12}^*$ is selfadjoint, see proof of Proposition~\thmref{prop:schurvar:schur}. 
   If $T_{22}$ is bounded and $T_{11}$ is bounded with respect to $T_{12}(T_{22}-\la)^{-1}T_{12}^*$ with relative bound less than $1$, then the Kato-Rellich theorem~\cite[chap. V, theorem~4.3]{kato} yields that $\SCmin{1}(\la)$ is selfadjoint.
   Hence $\SCmin{1}(\la)=\SC{1}(\la)$ follows.
\end{proof}

Next we will show the relation between the spectra of $\SC{1}(\la)$ and $\mT$.

\begin{proposition}\label{prop:schurvar:spectrum}
   For given Hilbert spaces $\hil_1$ and $\hil_2$ we consider linear operators $T_{ij}(\hil_j\rightarrow\hil_i)$, $i,j = 1,2$, with $\mD(T_{21})\subseteq\mD(T_{11})$ and $\mD(T_{12})\subseteq\mD(T_{22})$. 
   Let $\mT=\tm{T_{11}}{T_{12}}{T_{21}}{T_{22}}$ be the % closed 
   block operator matrix with domain $\mD(\mT) = \mD(T_{21})\oplus\mD(T_{12})$ in the Hilbert space $\hil_1\oplus\hil_2$.
   If\, $T_{22}$ is bijective, then the operator 
   \begin{align*}
      S\ :=\ T_{11} - T_{12}T_{22}^{-1}T_{21},\
      \qquad\mD(S):= \{x\in\mD(T_{21})\, :\, T_{22}^{-1}T_{21}x\in\mD(T_{12})\}
   \end{align*}
   is well defined and the following holds:
   \begin{enumerate}
      \item\label{item:schurvar:kernel}
      \hfill
      $\mT \text{ is injective }
      \qquad\Longleftrightarrow\qquad
      S\text{ is injective}$
      \hspace*{\fill}\\
      and
      \begin{align*}
	 \ker(\mT)\ &=\ 
	 \left\{\mv{f \\ -T_{22}^{-1}T_{21}f}\ :\ f\in\ker S \right\}\
	 \cong\ \ker(S).
      \end{align*}
   \end{enumerate}
   \begin{enumerate}
      \setcounter{enumi}{1}
      \item\label{item:schurvar:surj} 
      If additionally $T_{21}$ is surjective, then 
      $\range(S)\oplus\{0\} = \range(\mT)\cap (\hil_1\oplus\{0\})$ and
      \\[2ex]
      \hspace*{\fill}
      $\mT \text{ is surjective}
      \qquad\Longleftrightarrow\qquad
      S\text{ is surjective}$.
      \hspace*{\fill}
   \end{enumerate}
\end{proposition}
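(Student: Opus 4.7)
The plan is to exploit the bijectivity of $T_{22}$ to eliminate the second component and reduce everything to the Schur complement $S$, mimicking the familiar block Gauss elimination $\mT = \bigl(\begin{smallmatrix} I & T_{12}T_{22}^{-1} \\ 0 & I \end{smallmatrix}\bigr)\bigl(\begin{smallmatrix} S & 0 \\ 0 & T_{22}\end{smallmatrix}\bigr)\bigl(\begin{smallmatrix} I & 0 \\ T_{22}^{-1}T_{21} & I \end{smallmatrix}\bigr)$ at the level of vectors. The substantive content is to make this identity work despite the unbounded off-diagonal entries, which means carefully tracking the domains $\mD(T_{21})$, $\mD(T_{12})$, and $\mD(S)$.

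For part~(i) the argument is a direct computation. Given $\ttv{u}{v}\in\ker\mT$, the second equation $T_{21}u + T_{22}v = 0$ and the bijectivity of $T_{22}$ force $v = -T_{22}^{-1}T_{21}u$. Since $v\in\mD(T_{12})$, this membership together with $u\in\mD(T_{21})$ is exactly the condition defining $\mD(S)$. Substituting into $T_{11}u + T_{12}v = 0$ gives $Su = 0$. Conversely, for $u\in\ker S$ set $v := -T_{22}^{-1}T_{21}u$; by definition of $\mD(S)$ one has $v\in\mD(T_{12})$, and $\mT\ttv{u}{v}=\ttv{Su}{0}=\ttv{0}{0}$. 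This yields the stated bijection between $\ker S$ and $\ker\mT$.

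For part~(ii), I would first establish the range identity. The inclusion $\range(S)\oplus\{0\}\subseteq\range(\mT)\cap(\hil_1\oplus\{0\})$ comes from the kernel computation above applied to an arbitrary $u\in\mD(S)$: the pair $\ttv{u}{-T_{22}^{-1}T_{21}u}$ is mapped by $\mT$ to $\ttv{Su}{0}$. The reverse inclusion is essentially the same calculation run backward: if $\mT\ttv{u}{v}=\ttv{y}{0}$, then $v=-T_{22}^{-1}T_{21}u$ forces $u\in\mD(S)$ and $y=Su$. Notice that surjectivity of $T_{21}$ is \emph{not} used here.

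The equivalence of surjectivity is where $T_{21}$ being onto enters. The direction $\mT$ surjective $\Rightarrow$ $S$ surjective is immediate from the range identity: for any $y\in\hil_1$ pick a preimage of $\ttv{y}{0}$ under $\mT$. The converse is the main obstacle, since given $\ttv{f}{g}\in\hil_1\oplus\hil_2$ one must produce a preimage whose second component lies in $\mD(T_{12})$. The idea is to use surjectivity of $T_{21}$ to write $g = T_{21}u_0$ for some $u_0\in\mD(T_{21})$, then look for the preimage in the form $u=u_0-u_1$. The second equation then rewrites as $v = T_{22}^{-1}T_{21}u_1$, so the membership $v\in\mD(T_{12})$ is precisely $u_1\in\mD(S)$; and the first equation collapses to $Su_1 = T_{11}u_0 - f$, which is solvable for $u_1\in\mD(S)$ by surjectivity of $S$. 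With $u := u_0 - u_1 \in \mD(T_{21})$ and $v := -T_{22}^{-1}T_{21}u_1 \in \mD(T_{12})$, one checks that $\mT\ttv{u}{v} = \ttv{f}{g}$, completing the proof.
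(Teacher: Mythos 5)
Your argument is correct and follows essentially the same route as the paper: part (i) is the same direct computation, the range identity in part (ii) is proved identically (and you are right that surjectivity of $T_{21}$ plays no role there), and for the implication ``$S$ surjective $\Rightarrow\mT$ surjective'' both you and the paper use surjectivity of $T_{21}$ to produce $u_0$ with $T_{21}u_0 = g$ and then solve the Schur equation $Su_1 = T_{11}u_0 - f$; the paper merely packages the final step abstractly as the additive correction $\ttv{x}{y} = \mT\ttv{f'}{0} + \ttv{x-T_{11}f'}{0}$ via the already-proved range identity, rather than writing out the preimage explicitly. One small slip to fix: in your closing sentence you set $v := -T_{22}^{-1}T_{21}u_1$, but your own derivation two lines earlier gives $v = T_{22}^{-1}T_{21}u_1$ (no minus sign); with the minus sign the second equation reads $T_{21}u + T_{22}v = T_{21}u_0 - 2T_{21}u_1$, which need not equal $g$.
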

\begin{proof} 
   %%%{{{2
   \begin{proofenumerate}
      \item
      First assume that $\mT$ is not injective. 
      Then there are $f\in\mD(T_{21})$, $g\in\mD(T_{12})$ such that
      \begin{align*}
	 T_{11}f + T_{12}g = 0,\qquad
	 T_{21}f + T_{22}g = 0
	 \qquad\text{and}\qquad
	 \mv{f\\ g}\neq 0.
      \end{align*}
      From the second equality it follows that 
      $T_{22}^{-1}T_{21}f = -g\in\mD(T_{12})$. 
      Consequently, $f$ lies in $\mD(S)$ and $f\neq 0$. 
      Inserting the expression for $g$ into the first equality gives $Sf =0$, hence $S$ is not injective.
      Now assume that $S$ is not injective and fix an element $f\neq 0$ in its kernel.
      For $g:=-T_{22}^{-1}T_{21}f$ it follows that
      \begin{align*}
	 0\ &=\ Sf\ =\ 
	 T_{11}f - T_{12}T_{22}^{-1}T_{21}f\
	 =\ T_{11}f + T_{12}g,\\
	 0\ &=\ g+ T_{22}^{-1}T_{21}f\ =\ 
	 T_{22}^{-1}(T_{22}g + T_{21}f).
      \end{align*}
      Since $T_{22}^{-1}$ is injective, the above equations show 
      $0\neq\ttv{f}{g}\in \ker(\mT)$. 
      
      \item 
      For every $f\in\mD(S)$, the element $g:= -T_{22}^{-1}T_{21}f$ lies in $\mD(T_{12})$. 
      Consequently, $\ttv{f}{g}\in\mD(\mT)$ and 
      \begin{align*}
	 \mT\mv{f\\g}\ =\ \mv{T_{11}f + T_{12} g \\ T_{21}f + T_{22} g}\
	 =\ \mv{T_{11}f - T_{12}T_{22}^{-1}T_{21}f \\ 0}\
	 =\ \mv{Sf\\ 0}
      \end{align*}
      which implies that $\range(S)\oplus\{0\}\subseteq\range(\mT)\cap(\hil_1\oplus\{0\})$.
      Conversely, let $\ttv{f}{g}\in\mD(\mT)$ such that 
      $\mT\ttv{f}{g}=  \ttv{x}{0}$ for some $x\in\hil_1$. 
      From $T_{21} f + T_{22} g = 0$ it follows that $g=-T_{22}^{-1}T_{21} f \in\mD(T_{12})$. 
      Thus we have $f\in\mD(S)$ and 
      \begin{align*}
	 x\ =\ T_{11} f + T_{12} g\ =\ T_{11}f - T_{12}T_{22}^{-1}T_{21} f\ 
	 =\ Sf,
      \end{align*}
      implying $\range(\mT)\cap (\hil_1\oplus\{0\})\subseteq \range(S)\oplus\{0\}$.
      In particular, the surjectivity of $\mT$ implies that of $S$.
      Finally, assume that $S$ is surjective and fix $\ttv{x}{y}\in\hil_1\oplus\hil_2$. 
      Since $\range(T_{21})=\hil_2$ by assumption, there is an $f'\in\mD(T_{21})\subseteq\mD(T_{11})$ such that $T_{21}f'=y$.
      Therefore, $\ttv{f'}{0}$ lies in the domain of $\mT$ and we have $\mT\ttv{f'}{0}=\ttv{T_{11}f'}{y}$. 
      Since we have already shown that $\range(S)\oplus\{0\}=\range(\mT)\cap(\hil_1\oplus\{0\})$, the surjectivity of $S$ implies 
      $\hil_1\oplus\{0\} = \range(\mT)\cap (\hil_1\oplus\{0\}) \subseteq\range(\mT)$,
      hence we finally have 
      $\ttv{\vphantom{f}x}{y} = \mT\ttv{f'}{0} + \ttv{x-T_{11}f'}{0}\in \range{\mT}$ because both terms on the right hand side lie in $\range(\mT)$.
      \qedhere
   \end{proofenumerate}
   %%%}}}2
\end{proof}

\par\medskip

The spectrum and resolvent set of an operator valued function are defined as follows.
\begin{definition}\label{definition:schurvar:spectrum}
   Let $S=(S(\zeta))_\zeta$ be a family of closed operators, where $\zeta$ varies in some set $U\subseteq \C$.  Then the \define{spectrum\index{spectrum of an operator function}}, \define{point spectrum} and \define{resolvent set\index{resolvent set of an operator function}} of $S$ are defined as
   \begin{align*}
      \spectrum(S)\ &:=\  \{\zeta\in U\, :\, 0\in \spectrum(S(\zeta))\},\\
      \pointspec(S)\ &:=\  \{\zeta\in U\, :\, 0\in \pointspec(S(\zeta))\},\\
      \rho(S)\ &:=\  \{\zeta\in U\, :\, 0\in \rho(S(\zeta))\}.
   \end{align*}
   Analogous definitions apply to the other parts of the spectrum of $S$, e.g., the \define{essential spectrum}.
\end{definition}

Recall  that for a linear operator $S$ the \define{\index{spectrum!essential}essential spectrum} and \define{\index{spectrum!discrete}discrete spectrum} are defined by
\begin{align*}
   \essspec(S)\, &:=\, \{ \la\in\C\, :\,  \dim(\ker(S-\la))=\infty \text{ or } \codim(\range(S-\la))=\infty \},\\
   \disspec(S)\, &:=\, \{ \la\in\C\, :\, \la \text{ is an isolated eigenvalue of S with finite multiplicity} \}.
\end{align*}
For a selfadjoint operator $S$ we have $\disspec(S) = \spectrum(S)\setminus\essspec(S)$.
\par\medskip

\begin{corollary}\label{cor:schurvar:essspec}
   In addition to the assumptions of Proposition~\thmref{prop:schurvar:schur} suppose that the operator $\mT$ is selfadjoint and that $T_{12}^*$ is surjective.
   Furthermore, assume that the operator function $\SCmin{1}$ defined in~\eqref{eq:SCmin} is a selfadjoint holomorphic operator family.
   Then we have 
   \begin{align}
      \label{eq:schurvar:pointspec}
      \pointspec(\mT)\cap \rho(T_{22})\ &=\ \pointspec(\SCmin{1}),\\
      \label{eq:schurvar:essspec}
      \essspec(\mT)\cap(\Dup,\infty)\ &=\ \essspec(\SCmin{1}) \cap(\Dup,\infty).
   \end{align}
\end{corollary}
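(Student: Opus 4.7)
The bridge between the spectra of $\mT$ and $\SCmin{1}$ is Proposition~\thmref{prop:schurvar:spectrum} applied to the shifted block operator matrix $\mT - \la = \bigl(\begin{smallmatrix} T_{11}-\la & T_{12} \\ T_{12}^* & T_{22}-\la \end{smallmatrix}\bigr)$. For every $\la \in \rho(T_{22})$ the lower-right block $T_{22}-\la$ is bijective, and the Schur complement of $\mT-\la$ in the sense of that proposition coincides exactly, both in action and in domain, with $\SCmin{1}(\la)$ as defined in~\eqref{eq:SCmin}. This identification drives all three parts of the argument.

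For the point-spectrum equality~\eqref{eq:schurvar:pointspec}, fix $\la \in \rho(T_{22})$ and apply item~(i) of Proposition~\thmref{prop:schurvar:spectrum} to $\mT - \la$: it gives $\ker(\mT-\la) \cong \ker \SCmin{1}(\la)$, so $\mT-\la$ has a nontrivial kernel iff $\SCmin{1}(\la)$ does, which is exactly $\la \in \pointspec(\mT)$ iff $\la \in \pointspec(\SCmin{1})$.

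For the essential-spectrum equality~\eqref{eq:schurvar:essspec}, I would first upgrade this to a full spectrum identity on $(\Dup,\infty)$. Here the surjectivity hypothesis on $T_{12}^* = T_{21}$ is what is needed: by Proposition~\thmref{prop:schurvar:spectrum}(ii), combined with item~(i), the operator $\mT-\la$ is bijective iff $\SCmin{1}(\la)$ is. Since $\mT$ and $\SCmin{1}(\la)$ are selfadjoint (the latter being part of the ``selfadjoint holomorphic family'' hypothesis), bijectivity of the shifted operator is equivalent to $0$ lying in the resolvent, yielding $\spectrum(\mT) \cap (\Dup,\infty) = \spectrum(\SCmin{1}) \cap (\Dup,\infty)$. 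From this I would deduce the matching identity for the discrete spectrum on $(\Dup,\infty)$: finite multiplicity transfers through the kernel isomorphism of item~(i); and since $(\Dup,\infty)$ is open, any $\la_0>\Dup$ admits a symmetric neighborhood contained in $(\Dup,\infty)$, so $\la_0$ is isolated in $\spectrum(\mT)$ iff it is isolated in $\spectrum(\mT)\cap(\Dup,\infty)$, and likewise for $\SCmin{1}$. Invoking $\essspec = \spectrum\setminus\disspec$, valid for selfadjoint operators, then gives~\eqref{eq:schurvar:essspec} by subtraction.

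The main step to get right is the transfer of bijectivity via Proposition~\thmref{prop:schurvar:spectrum}(ii); without the surjectivity of $T_{12}^*$, one could only expect a one-way inclusion on $(\Dup,\infty)$, and the essential-spectrum identity would collapse. Everything else is bookkeeping once that spectral identification is in place; the holomorphic character of the family $\SCmin{1}$ is used implicitly to ensure that the operator-function spectral notions of Definition~\thmref{definition:schurvar:spectrum} agree pointwise with the spectral notions of the individual selfadjoint operators $\SCmin{1}(\la)$.
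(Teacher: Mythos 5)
Your handling of~\eqref{eq:schurvar:pointspec} via Proposition~\thmref{prop:schurvar:spectrum}(i) applied to $\mT-\la$ is correct and is also the paper's route. The treatment of~\eqref{eq:schurvar:essspec}, however, has a genuine gap at precisely the point you wave off as ``implicit'' bookkeeping via holomorphy.

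The issue is that two notions of isolation are being conflated. Your argument establishes that $\la_0$ is isolated in $\spectrum(\mT)$ if and only if $\la_0$ is isolated in the \emph{parameter set} $\spectrum(\SCmin{1})\cap(\Dup,\infty)=\{\la>\Dup: 0\in\spectrum(\SCmin{1}(\la))\}\subset\R$. But by Definition~\thmref{definition:schurvar:spectrum}, $\la_0\in\disspec(\SCmin{1})$ means $0\in\disspec(\SCmin{1}(\la_0))$, i.e.\ that $0$ is an isolated point of the \emph{spectrum of the fixed operator} $\SCmin{1}(\la_0)$, a subset of a different real line. These two statements are not interchangeable: a priori $\la_0$ could be an isolated zero of the operator family while $0$ sits in the essential spectrum of the single operator $\SCmin{1}(\la_0)$ (finite kernel, non-closed range), or conversely $0$ could be an isolated eigenvalue of $\SCmin{1}(\la_0)$ while $\la_0$ is an accumulation point of $\spectrum(\SCmin{1})$. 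Bridging them is the substantive content of the paper's proof, not a footnote: for one inclusion the paper uses that $\la_0\in\disspec(\mT)$ forces $\range(\mT-\la_0)$ closed, transfers this through Proposition~\thmref{prop:schurvar:spectrum}(ii) to closedness of $\range(\SCmin{1}(\la_0))$, and combines it with the finite-dimensional kernel to get $0\notin\essspec(\SCmin{1}(\la_0))$; for the converse it invokes Kato's analytic perturbation theory to get eigenvalue branches $\mu_j(\widetilde\la)$ through $0$ at $\la_0$ and then crucially computes $\frac{\mathrm{d}}{\mathrm{d}\la}\mu_j(\la_0)=-\|x_j\|^2-\|(T_{22}-\la_0)^{-1}T_{12}^*x_j\|^2<0$, which forces the branches to leave $0$ and hence $\la_0$ to be isolated in $\spectrum(\SCmin{1})$. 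Your proposal skips both of these steps; without the closed-range argument and the nonvanishing derivative, the identity~\eqref{eq:schurvar:essspec} does not follow from the set identity $\spectrum(\mT)\cap(\Dup,\infty)=\spectrum(\SCmin{1})\cap(\Dup,\infty)$ together with the kernel isomorphism.
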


\begin{proof}
   Proposition~\thmref{prop:schurvar:spectrum} applied to $\mT-\la$ shows that $\la\in\spectrum(\mT)\cap(\Dup,\infty)$ if and only if $\la\in\spectrum(\SCmin{1})$.
   Moreover, it follows from Proposition~\thmref{prop:schurvar:spectrum} that  $\la\in\pointspec(\SCmin{1})$ if and only if $\la\in\pointspec(\mT)\cap (\Dup,\infty)$ with  $\dim(\ker(\mT-\la)) = \infty$ if and only if $\dim(\ker(\SCmin{1}(\la)))=\infty$.
   Hence,~\eqref{eq:schurvar:pointspec} is proved.
   By assumption, for each $\la\in(\Dup,\infty)$, the operator $\SCmin{1}(\la)$ is selfadjoint.
   Therefore, in order to show~\eqref{eq:schurvar:essspec}, it suffices to show $\disspec(\mT)\cap(\Dup,\infty) = \disspec(\SCmin{1})\cap(\Dup,\infty)$.
   Let $\la\in \disspec(\mT)\cap(\Dup,\infty)$. Then we have $\dim\range(\SCmin{1}(\la))^\perp =  \dim\ker(\SCmin{1}(\la)) = \dim\ker (T -\la) < \infty$. 
   Further, the range of $\mT-\la$ is closed because $\la\in\disspec(\mT)$. So Proposition~\thmref{prop:schurvar:spectrum} shows that $\range(\SCmin{1}(\la)) = \range(\mT)\cap (\hil_1\oplus\{0\})$ is also closed. 
   Hence it follows $0\in\disspec(\SCmin{1}(\la))$ and consequently $\la\in\disspec(\SCmin{1})$.
   
   Let $\la\in\disspec(\SCmin{1})$.
   Then $\la\in\pointspec(\mT)$ with $\dim\ker(\mT-\la) = \dim\ker(\SC{1}(\la))<\infty$ and we have to show that $\la$ is no accumulation point of $\spectrum(\mT)$. 
   Since $0\in\disspec(\SCmin{1}(\la))$ and $\SCmin{1}$ is holomorphic, 
   there are $\delta>0$, $\eps>0$ and holomorphic functions $\mu_j:(\la-\delta,\la+\delta)\rightarrow\R$ with $\mu_j(\la)=0$ for $j=1,\,\dots,\, \dim\ker(\SC{1}(\la))$,
   such that  for all $\widetilde\la\in(\la-\delta,\, \la+\delta)$ we have that 
   $\mu\in\spectrum(\SCmin{1}(\widetilde\la))\cap (-\eps, \eps)$ if and only if $\mu$ is an eigenvalue of $\SCmin{1}(\widetilde\la)$ with finite multiplicity and $\mu=\mu_j(\widetilde\la)$ for some $j$
   (see~\cite[chap. IV, \S3 and chap. VII]{kato}).
   Furthermore, for $ j=1,\,\dots,\, \dim\ker(\SCmin{1}(\la))$ we have 
   \begin{align*}
      \diff{\la}\mu_j(\la)\ =\ \diff{\la} \scalar{x_j}{ \SCmin{1}(\la)x_j}\ 
      =\  -\|x_j\|^2 - \|(T_{22}-\la)^{-1}T_{12}^*\,x_j \|^2 
      \ <\ 0
   \end{align*}
   for normalised eigenvectors $x_j$ of $\SCmin{1}(\la)$ with eigenvalue $0$,
   hence the functions $\mu_j$ are not constant in a neighbourhood of $\la$.
   Consequently, there exists a nonempty interval $(\la-\widetilde\delta,\ \la+\widetilde\delta)$ such that 
   $0\in\rho(\SCmin{1}(\widetilde\la))$ for all $\widetilde\la\in(\la-\widetilde\delta,\ \la+\widetilde\delta)\setminus\{\la\}$.
   Consequently, $\spectrum(\mT)\cap (\la-\widetilde\delta,\ \la+\widetilde\delta) = \{\la\}$ which completes the proof.
\end{proof} 

\begin{remark}
   In this section we have considered only the case $\la\in\rho(T_{22})$.
   For $\la\in \rho(T_{11})$ the Schur complement
   \begin{align*}
      \mD(\SCmin{2}(\la))\, &:=\,
      \{ \psi\in\mD(T_{12})\, :\, (T_{11}-\la)^{-1}T_{12}\psi\in\mD(T_{12}^*) \},\\[1ex]
	 \SCmin{2}(\la)\, &:=\, 
	 T_{22}-\la - T_{12}^*(T_{11}-\la)^{-1}T_{12}
   \end{align*}
   acting on the Hilbert space $\hil_2$ can be used to obtain statements analogous to those given for the Schur complements $\SC{1}(\la)$.
\end{remark}

%%%}}}}

\section{Variational principle for the block operator matrix $\mT$} %%%{{{
\label{sec:variatonal}
On the Hilbert space $\hil=\hil_1\oplus\hil_2$ we consider the block operator matrix $\mT=\tm{T_{11}}{T_{12}}{T_{12}^*}{T_{22}}$.
We assume that the conditions \ref{B1}, \ref{A1}, \ref{A2}, \ref{D1}, \ref{D2a} and \ref{mT1} hold.
Let $\SC{1}(\la)$, $\la\in\rho(T_{22})$, be the Schur complement of $\mT$ (cf. Proposition~\thmref{prop:schurvar:schur} and Proposition~\ref{prop:schurvar:inclusion}).

The next propositions summarise the properties of the Schur complements and its associated forms.

\begin{proposition}\label{prop:schurvar:summary}
   Consider the selfadjoint block operator matrix
   \begin{align*}
      \mT=\tm{T_{11}}{T_{12}}{T_{12}^*}{T_{22}},
      \hspace{3ex} 
      \mD(\mT)=\mD(T_{12}^*)\oplus\mD(T_{12})
   \end{align*}
   on the Hilbert space
   $\hil_1\oplus\hil_2$.
   Assume that the conditions \ref{B1}, \ref{A1}, \ref{A2}, \ref{D1} and  \ref{D2a} hold. 

   \begin{enumerate}
      \item\label{item:schurvar:formdomain}
      For every $\la\in(\Dup,\infty)$, the form
      \begin{align*}
	 \mD(\SCfs{1}(\la))\, &:=\, \mD(T_{12}^*),\\
	 \SCfs{1}(\la)[u,\, v]\, &:=\, \scalar{u}{(T_{11}-\la)v} - \bigl(\scalar{T_{12}^*u}{(T_{22}-\la)^{-1}T_{12}^*v}\bigr)
      \end{align*}
      is closed and its domain is independent of $\la$.
      The operator $\SC{1}(\la)$ associated with the form $\SCfs{1}(\la)$ is a well defined selfadjoint operator and $\SCmin{1}(\la)=\SC{1}(\la)$, $\la\in(\Dup,\infty)$.%\\
   \end{enumerate}
   Define the operator valued function
   \begin{align}\label{eq:schurvar:S2}
      \SC{1}: (\Dup,\infty) \longrightarrow \closed(\hil_1),
      \quad
      \lambda \mapsto \SC{1}(\la)
   \end{align}
   and, for fixed $x\in\mD(\SCfs{1})$, the function 
   \label{first:sigmax}
   \begin{align}\label{eq:schurvar:s2}
      \sig{1}^x: (\Dup,\infty)\longrightarrow \R,\qquad
      \sig{1}^x(\la)= \SCfs{1}(\la)[x]
   \end{align}
   where $\closed(\hil_1)$ denotes the set of all closed operators on $\hil_1$.
   \begin{enumerate}
      \setcounter{enumi}{1}
      
      \item The operator valued function $\SC{1}:(\Dup,\infty)\rightarrow\closed(\hil_1)$ of~\eqref{eq:schurvar:S2} is continuous in the norm resolvent topology, and for every $x\in\mD(\SCfs{1})$ the function 
      $\sig{1}^x$ defined in~\eqref{eq:schurvar:s2} is continuous.
   
      \item \label{item:schurvar:decrease}
      For every $x\in\mD(\SCfs{1})\setminus\{0\}$ the function
      $\sig{1}^x$ is decreasing and unbounded from below. 
   
      \item If additionally the condition
      \begin{condlist}
	 \CondItem{{\rm(\BTwo)}}
	 $T_{12}^*$ is surjective 
	 \label{B2}
      \end{condlist}
      is satisfied, then it follows that
      $\essspec(\SC{1})=\essspec(\mT)\cap(\Dup,\infty)$  and
      $\pointspec(\SC{1})=\pointspec(\mT)\cap(\Dup,\infty)$.
   \end{enumerate}
\end{proposition}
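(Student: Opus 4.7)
The four claims follow from the preceding propositions plus short additional arguments; I will proceed in order.

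For (i), Proposition \ref{prop:schurvar:form} already shows that $\SCfs{1}(\la)$ is closed, symmetric and semibounded on the $\la$-independent domain $\mD(T_{12}^*)$, so by the first representation theorem (\cite[chap.~VI, thm.~2.1]{kato}) it gives rise to a unique selfadjoint operator $\SC{1}(\la)$. Since hypothesis \ref{A2} is in force, Proposition \ref{prop:schurvar:inclusion} tells us that $\SCmin{1}(\la)$ is itself selfadjoint; being contained in $\SC{1}(\la)$ (the Friedrichs extension, as noted in the paragraph preceding Proposition \ref{prop:schurvar:inclusion}), it must coincide with it.

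For (ii), pointwise continuity of $\sig{1}^x$ is immediate from the norm-holomorphy of $\la\mapsto(T_{22}-\la)^{-1}$ on $\rho(T_{22})$ and the affine dependence of the first summand on $\la$. To obtain norm-resolvent continuity of $\SC{1}$, I recognise $\{\SCfs{1}(\la)\}_{\la\in(\Dup,\infty)}$ as a holomorphic family of forms of type (B) in the sense of Kato: the form domain $\mD(T_{12}^*)$ is fixed, each form value $\SCfs{1}(\la)[u,v]$ is holomorphic in $\la$, and the lower bound $\Alo-\la$ is locally uniformly controlled on $(\Dup,\infty)$. Kato's theorem (see \cite[chap.~VII]{kato}) then yields that $\SC{1}(\la)$ is a holomorphic family of selfadjoint operators in the norm-resolvent topology, in particular norm-resolvent continuous.

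For (iii), fix $0\neq x\in\mD(T_{12}^*)$ and rewrite
\begin{align*}
   \sig{1}^x(\la)\ =\ \scalar{x}{T_{11}x} - \la\|x\|^2 + \scalar{T_{12}^*x}{(\la-T_{22})^{-1}T_{12}^*x}.
\end{align*}
Differentiating, and using selfadjointness of $T_{22}-\la$ on the real axis,
\begin{align*}
   (\sig{1}^x)'(\la)\ =\ -\|x\|^2 - \|(T_{22}-\la)^{-1}T_{12}^*x\|^2\ <\ 0,
\end{align*}
so $\sig{1}^x$ is strictly decreasing. Since $T_{22}$ is bounded by \ref{D2a}, the third summand is nonnegative and $O(\la^{-1})$ as $\la\to\infty$, while the middle summand tends to $-\infty$; hence $\sig{1}^x$ is unbounded from below.

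Finally, (iv) is a direct application of Corollary \ref{cor:schurvar:essspec}: its hypotheses---selfadjointness of $\mT$, surjectivity of $T_{12}^*$ (namely \ref{B2}), and $\SCmin{1}$ being a selfadjoint holomorphic operator family---are provided respectively by assumption, \ref{B2}, and parts (i)--(ii). Combined with the identification $\SC{1}=\SCmin{1}$ from (i), the stated identities on $(\Dup,\infty)$ follow at once. The only genuine obstacle is the norm-resolvent continuity in (ii): one must identify $\SC{1}$ as a Kato-holomorphic family of type (B) and check the locally uniform semiboundedness. The remaining pieces are essentially repackagings of the earlier propositions or short routine calculations.
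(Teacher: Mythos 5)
Your proof is correct and follows essentially the same route as the paper's: part~(i) cites Propositions~\ref{prop:schurvar:form} and~\ref{prop:schurvar:inclusion}, part~(ii) invokes Kato's notion of a holomorphic family of forms with constant domain to get norm-resolvent continuity, part~(iii) computes the explicit derivative bound $-\|x\|^2$, and part~(iv) appeals to Corollary~\ref{cor:schurvar:essspec}. The only quibble is terminological: in Kato's classification the family of \emph{forms} is holomorphic of type~(a) and the associated family of \emph{operators} is of type~(B), whereas you call the forms ``type (B)''; the underlying argument is otherwise identical.
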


\begin{proof}
   \begin{proofenumerate}
      \item The assertions concerning $\SCfs{1}(\la)$ are shown in Proposition~\thmref{prop:schurvar:form}
   while the identity $\SCmin{1}(\la)=\SC{1}(\la)$ was proved in Proposition~\thmref{prop:schurvar:inclusion}.
   In particular, the mapping $\SC{1}$ is well defined.
      
   \item From \ref{item:schurvar:formdomain} it follows that the family of sesquilinear forms $(\SCfs{1}(\la))_{\la\in(\Dup,\infty)}$ is of type~(a) according to the classification in~\cite{kato}.
      Hence $\SC{1}$ is a holomorphic family of type~(B), 
      which implies the holomorphy  of $\SC{1}$ in the norm resolvent topology. Obviously, for every $x\in\mD(\SCfs{1})$ the function $\sig{1}^x$ is even smooth on $(\Dup,\infty)$.
      
      \item For every $x\in\mD(\SCfs{1})$, $x\neq 0$,  the function $\sig{1}^x$ is monotonously decreasing since
      \begin{align}\label{eq:schurvar:deriv}
	 \diff{\la}\sig{1}^x(\la)\ =\
	 \diff{\la}\SCfs{1}(\la)[x]\ =\ 
	 -\|x\|^2 - \|(T_{22}-\la)^{-1}T_{12}^*x\|^2\ \le\ -\|x\|^2 \ <\ 0.
      \end{align}

      \item This has been shown in Corollary~\thmref{cor:schurvar:essspec}.
      \qedhere
   \end{proofenumerate}
\end{proof}

\begin{proposition}\label{prop:schurvar:summaryadd}
   Suppose that in addition to the assumptions of Proposition~\thmref{prop:schurvar:summary} there is a constant $b>0$ such that for all $x\in\mD(T_{12}^*)$ the estimate
   \begin{align}\label{eq:schurvar:b}
      \| T_{12}^*x\| \ge b \|x\|
   \end{align}
   holds. 
   For $\la\in(\Dup,\infty)$ let $d(\la)$ be a nonnegative lower bound for $(\la-T_{22})^{-1}$, i.e., 
   \begin{align}\label{eq:schurvar:dla}
      \bigscalar{x}{(\la-T_{22})^{-1}x}\ \ge\ d(\la)\|x\|^2\ge 0,\qquad
   x\in\hil_2,\ \la\in(\Dup,\infty).
   \end{align}
   % todo todo todo: folgt dann schon, dass $t_{22}$ beschraenkt ist?
   If there is a $\delta>0$ with 
   \begin{align}\label{eq:schurvar:est1}
      \delta \ <\  d(\la)b^2 + \Alo - \la
   \end{align}
   for all $\la$ in a sufficiently small right neighbourhood $(\Dup,\, \Dup+\eps)$ of $\Dup$, then
   \begin{enumerate}
   \setcounter{enumi}{4}
   \item\label{item:schurvar:specspace}
   the spectral subspace $\specspace_{(-\infty,\,0)}\SC{1}(\la)$ is trivial
   for all $\la\in (\Dup,\, \Dup+\eps)$;
   \item  $\essspec(\SC{1})\cap(\Dup,\, \Dup+\eps)=\emptyset$.
   \end{enumerate}
   If we allow $\delta=0$ in equation~\eqref{eq:schurvar:est1}, then we can show \ref{item:schurvar:specspace} only.
\end{proposition}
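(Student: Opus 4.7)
The plan is to deduce both conclusions from a single lower bound on the sesquilinear form $\SCfs{1}(\lambda)$ on the assumed right neighbourhood of $\Dup$. First I would rewrite, for $\lambda \in (\Dup, \Dup+\eps)$ and $x \in \mD(\SCfs{1}(\lambda)) = \mD(T_{12}^*)$,
\begin{align*}
   \SCfs{1}(\lambda)[x]\ =\ \scalar{x}{(T_{11}-\lambda)x}\ +\ \bigscalar{T_{12}^*x}{(\lambda - T_{22})^{-1}T_{12}^*x},
\end{align*}
which is the same manipulation used in the proof of Proposition~\ref{prop:schurvar:schur}. Into this I would plug in three lower bounds in succession: $\scalar{x}{T_{11}x}\ge \Alo\|x\|^2$ from condition~\ref{A1}, then the operator inequality $\scalar{y}{(\lambda-T_{22})^{-1}y}\ge d(\lambda)\|y\|^2$ from~\eqref{eq:schurvar:dla} applied to $y = T_{12}^*x$, and finally $\|T_{12}^*x\|^2 \ge b^2 \|x\|^2$ from~\eqref{eq:schurvar:b}. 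Since $d(\lambda)\ge 0$ the multiplications go the correct way and the three estimates combine to
\begin{align*}
   \SCfs{1}(\lambda)[x]\ \ge\ \bigl(\Alo - \lambda + d(\lambda)b^2\bigr)\,\|x\|^2\ >\ \delta\,\|x\|^2,
   \qquad x\in\mD(T_{12}^*).
\end{align*}

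From this single estimate both claims fall out. By Proposition~\ref{prop:schurvar:summary}\thmref{item:schurvar:formdomain}, the form $\SCfs{1}(\lambda)$ is closed, symmetric, and the associated selfadjoint operator is $\SC{1}(\lambda)$; hence the bound above transfers to $\SC{1}(\lambda) \ge \delta\,\id$. For claim~\ref{item:schurvar:specspace}, a nonnegative selfadjoint operator has trivial spectral projection on $(-\infty,0)$, so $\specspace_{(-\infty,0)}\SC{1}(\lambda) = \{0\}$ for every $\lambda \in (\Dup,\Dup+\eps)$. This part of the argument only requires $\delta \ge 0$, which explains the weaker conclusion in the case $\delta = 0$. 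For claim~(vi), the strict bound $\SC{1}(\lambda)\ge \delta\,\id$ with $\delta>0$ yields $0\in\rho(\SC{1}(\lambda))$ for every such $\lambda$; by Definition~\ref{definition:schurvar:spectrum} applied to the operator function $\SC{1}$ this means $(\Dup,\Dup+\eps) \subseteq \rho(\SC{1})$, which in particular excludes the essential spectrum from this interval.

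I do not expect any serious obstacle: the whole argument is a quadratic form estimate and the reduction from the form bound to spectral conclusions on $\SC{1}(\lambda)$ is standard once Proposition~\ref{prop:schurvar:summary} is invoked. The only points requiring care are making sure that the inequality $(\lambda - T_{22})^{-1} \ge d(\lambda)\,\id$ is exploited via a vector in $\hil_2$ of the form $T_{12}^*x$ (so that~\eqref{eq:schurvar:b} can then be applied in $\hil_1$), and noting that the failure of strictness when $\delta = 0$ is exactly what prevents ruling out $0$ from $\spectrum(\SC{1}(\lambda))$ and hence prevents the conclusion about $\essspec(\SC{1})$.
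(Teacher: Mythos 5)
Your proposal is correct and takes essentially the same route as the paper: establish the form bound $\SCfs{1}(\la)[x]\ge(\Alo-\la+d(\la)b^2)\|x\|^2>\delta\|x\|^2$ by the three stated estimates, then pass to the selfadjoint operator $\SC{1}(\la)$ to read off the spectral conclusions. The paper phrases the step for (v) via the numerical range of $\SC{1}(\la)$ lying in the closed right half plane, while you state directly that a nonnegative selfadjoint operator has trivial spectral subspace for $(-\infty,0)$; these are equivalent, and your remarks on the role of strictness at $\delta=0$ match the paper's.
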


\begin{proof}
   For $\la\in(\Dup,\, \Dup+\eps)$, assumptions~\eqref{eq:schurvar:dla} and~\eqref{eq:schurvar:est1} imply for all $x\in\mD(\SC{1}(\la))\setminus\{0\}$ that
   \begin{align}
      \nonumber
      \scalar{x}{ \SC{1}(\la)x}\ &=\ \SCfs{1}(\la)[x]\ =\
      \scalar{x}{T_{11}x} - \la\|x\|^2 + \bigscalar{T_{12}^*x}{(\la-T_{22})^{-1}T_{12}^*x}\\
      \label{eq:schurvar:s2pos}
      &\ge\ (\Alo-\la)\|x\|^2 +  d(\la)\,b^2\,\|x\|^2\
      >\ \delta\, \|x\|^2.
   \end{align}
   \begin{proofenumerate}
   \setcounter{proofenumi}{4}
%    \begin{enumerate}
%    \setcounter{enumi}{4}
   \item 
   If $\delta\ge 0$, then for all $\la\in(\Dup,\ \Dup+\eps)$
   the numerical range of the selfadjoint operator $\SC{1}(\la)$, the closure of which equals the closure of the numerical range of $\SCfs{1}(\la)$, is contained in the right half plane $\{z\in\C\, :\, \re(z) \ge 0\}$, implying $\rho(\SC{1}(\la))\supseteq(-\infty,\ 0)$. 
   \item 
   If we assume the strict inequality $\delta>0$, then the calculation above shows that $(-\infty, \delta)\subseteq\rho(\SC{1}(\la))$ for $\la\in(\Dup, \Dup+\eps)$, hence $(\Dup, \Dup+\eps)\cap \spectrum(\SC{1}) = \emptyset$.
   \qedhere
   \end{proofenumerate}
%    \end{enumerate}
\end{proof}

Proposition~\thmref{prop:schurvar:summary}~\ref{item:schurvar:decrease} shows that for every $x\in\mD(\SCfs{1})\setminus\{0\}$ the function  $\sig{1}^x$ has at most one zero and is unbounded from below. 
If in addition \eqref{eq:schurvar:est1} holds with some $\delta>0$, then $\sig{1}^x$ is positive for $\la$ in a sufficiently small right neighbourhood of $\Dup$, see \eqref{eq:schurvar:s2pos}.
Thus the continuity of $\sig{1}^x$ implies that it has exactly one zero. We denote this zero by $p(x)$, i.e.,
% \eqref{eq:schurvar:unbounded}
\begin{align}\label{eq:schurvar:pdef}
   \sig{1}^x(\la) = 0\quad\Longleftrightarrow\quad
   \la = p(x).
\end{align}
If relation \eqref{eq:schurvar:est1} does not hold, then the function $\sig{1}^x$ does not need to have a zero. 
In this case we define $p(x):=-\infty$, so that obviously either $p(x)=-\infty$ or 
$p(x)>\Dup$. 
Further, $p(x)$ does not depend on the norm of $x$, i.e., for all $\xi\in\C\setminus \{0\}$ we have $p(x)=p(\xi x)$.

\par\medskip   

Now fix a linear manifold $\mD\subseteq\hil_1$, independent of $\la$,  such that 
\begin{align*}
   \mD(\SC{1}(\la))\ \subseteq\ \mD\ \subseteq\ \mD(\SCfs{1}(\la)),
   \qquad\qquad\la\in(\Dup,\infty).
\end{align*}
Such a manifold $\mD$ exists; for example, we can choose $\mD=\mD(T_{12}^*)$.
% or  $\mD=\mD(\SCfs{1})$.

For $n\in\N$ we define the numbers 
\label{first:muEigenvalues}
\begin{align}\label{first:noz}
   \mu_n\ :=\ \min_{\msub{L\subseteq\mathcal D}{\dim L = n}}\ \max_{x\in \noz{L}}\ p(x),
\end{align}
where 
$\noz{L}:=L\setminus\{0\}$. 
Theorem~\thmref{theorem:schurvar:var1} shows that these numbers are indeed well defined.
Here and in the following, a sequence $\la_1\le\la_2\le\dots\le \la_N$ with $N=\infty$ has to be understood as the infinite sequence $\la_1\le\la_2\le\dots$\ .

% Note that instead of $\La(n)$ we could also use $\La'(n)$, defined as the set of all $n$-dimensional subspaces of $\mD'$, where $\mD'$ is any linear subspace with $\mD(\SC{A})\subseteq\mD'\subseteq\mD(\fs_D)$.
For an interval $\Delta\subseteq\R$ and a selfadjoint operator $S$ we denote its spectral subspace corresponding to $\Delta$ by $\specspace_{\Delta}(S)$. 
By $\lae$ we denote the lower bound of the essential spectrum of $\SC{1}$, i.e.,
\begin{align*}
   \lae\ :=\ \begin{cases} 
      \inf\essspec(\SC{1})\quad\quad &\text{if $\essspec(\SC{1})\neq\emptyset$}, \\
      \infty\qquad &\text{if $\essspec(\SC{1})=\emptyset$}.
   \end{cases}
\end{align*}
\par\medskip\par
% If there is a gap in the spectrum of $\SC{1}$ to the left of its essential spectrum, that is, 
If $(\Dup,\, \lae)$ is not empty, then the eigenvalues of $\SC{1}$ in this interval are characterised by the following minimax principle.

\begin{theorem}\label{theorem:schurvar:var1}
   Let the block operator matrix $\mT=\tm{T_{11}}{T_{12}}{T_{12}^*}{T_{22}}$ with domain $\mD(\mT)=\mD(T_{12}^*)\oplus\mD(T_{12})$ be self\-adjoint in the Hilbert space $\hil_1\oplus\hil_2$.
   Suppose that the conditions \ref{B1}, \ref{A1}, \ref{A2}, \ref{D1} and \ref{D2a} are satisfied and that $T_{12}^*$ is surjective.
   Further, assume that the set $(\Dup,\,\lae)$ is nonempty and that there is a $\la_0\in(\Dup,\, \lae)$ such that $\dim\specspace_{(-\infty,0)}\SC{1}(\la_0)<\infty$.
   \label{first:n0}
   Then the index shift
   \begin{align}
      n_0:=\min\limits_{\la>\Dup}\dim\specspace_{(-\infty,0)}\SC{1}(\la)
   \end{align}
   %    if $\Dup\in\rho(T_{22})$.
   %    Otherwise there exists a $\Dup'>\Dup$ such that $(\Dup, \Dup')\subseteq \rho(\SC{1})$ and we set $n_0=\dim\specspace_{(-\infty,0)}\SC{1}(\Dup')$. 
   is finite and $\spectrum(\mT)\cap(\Dup, \lae)$ consists of a {\upshape(}possibly infinite{\upshape)} sequence of eigenvalues $\la_1\le\la_2\le\dots\le \la_N$ where $N\in\N_0\cup\{\infty\}$.
   If the eigenvalues are counted according to their multiplicity, then
   \begin{align}\label{eq:schurvar:lafirst}
      \la_{\eignumber} = \mu_{\eignumber+n_0}, \qquad 1\le \eignumber\le N,
   \end{align}
   and $N\in\N_0\cup\{\infty\}$ is given by
   \begin{align*}
      N\ =\ \dimlae(\lae) - n_0
   \end{align*}
   where $\dimlae(\lae)$ is the dimension of maximal subspaces of the set
   \begin{align*}
      \{x\in\mD\ :\ \exists\ \la>\Dup \text{ with } \SCfs{1}(\la)[x]<0 \}
      \cup\{0\}.
   \end{align*}
   If $N=\infty$, then $\lim\limits_{n\rightarrow\infty}\la_\eignumber =\lae$. 
   If $N<\infty$ and $\essspec(\SC{1})=\emptyset$,  then $\mu_\eignumber=\infty$ for $\eignumber>n_0+N$.
   If $N<\infty$, $\lae<\infty$,  then $\mu_\eignumber=\lae$ for $\eignumber>n_0+N$.

   If there exists a $\delta$ as in Proposition~\thmref{prop:schurvar:summaryadd}, then $n_0=0$.
\end{theorem}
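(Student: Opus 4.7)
The plan is to reduce the statement to a known minimax principle for monotonous selfadjoint operator-valued functions, namely the principle of Binding, Eschw\'e and Langer (\cite{BEL00}, \cite{eschwe}), applied to the Schur complement $\SC{1}$, and then transfer the conclusion from $\SC{1}$ to $\mT$ via the spectral correspondence established in Proposition~\thmref{prop:schurvar:summary}\,(iv).

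First I would verify that all the hypotheses of the Eschw\'e--Langer principle are in place. By Proposition~\thmref{prop:schurvar:summary}, the map $\SC{1}:(\Dup,\infty)\to\closed(\hil_1)$ is a holomorphic family of selfadjoint operators of type (B), whose associated forms $\SCfs{1}(\la)$ have the $\la$-independent domain $\mD(T_{12}^*)$, and the explicit derivative formula \eqref{eq:schurvar:deriv} gives the strict monotonicity
\begin{align*}
   \frac{\rd}{\rd\la}\,\SCfs{1}(\la)[x] \;\le\; -\|x\|^2,\qquad x\in\mD\setminus\{0\}.
\end{align*}
Together with Proposition~\thmref{prop:schurvar:summary}\,(iii) this shows that $\sig{1}^x$ is continuous, strictly decreasing, and unbounded from below, so it has at most one zero; the functional $p$ defined in \eqref{eq:schurvar:pdef} is thus the genuine Rayleigh functional of the monotonous family $\SC{1}$ (taking the value $-\infty$ on those $x$ for which no zero exists), and the numbers $\mu_n$ in \eqref{first:noz} are exactly the minimax values associated with this Rayleigh functional on the core $\mD$. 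An application of \cite{eschwe} then characterises the eigenvalues of $\SC{1}$ lying in $(\Dup,\lae)$ as the finite $\mu_n$, with the counting starting at an index shift equal to the asymptotic number of \emph{negative} eigenvalues of $\SC{1}(\la)$ as $\la$ approaches the left endpoint.

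Second I would identify that index shift with the $n_0$ in the statement. Because $\SCfs{1}(\la)[x]$ is strictly decreasing in $\la$, the function $\la \mapsto \dim\specspace_{(-\infty,0)}\SC{1}(\la)$ is monotonically nondecreasing in $\la$, hence the minimum in the definition of $n_0$ is attained as $\la\downarrow\Dup$ and coincides with the shift required by \cite{eschwe}. The finiteness hypothesis $\dim\specspace_{(-\infty,0)}\SC{1}(\la_0)<\infty$ for some $\la_0\in(\Dup,\lae)$ then forces $n_0<\infty$ by monotonicity. The description of $N$ as $\dimlae(\lae)-n_0$ comes from counting those directions in $\mD$ along which $\sig{1}^x$ actually becomes negative (i.e.\ those $x$ for which $p(x)>\Dup$), minus the directions already accounted for by the shift. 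The cases $N=\infty$, $N<\infty$ with $\essspec(\SC{1})$ empty, and $N<\infty$ with $\lae<\infty$, are the standard trichotomy in Eschw\'e--Langer and translate directly. Finally, if the $\delta$-condition of Proposition~\thmref{prop:schurvar:summaryadd} is in force, part (v) of that proposition yields $\specspace_{(-\infty,0)}\SC{1}(\la)=\{0\}$ for $\la$ close to $\Dup$, giving $n_0=0$ immediately.

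Third, I would transfer the result from $\SC{1}$ to $\mT$. By hypothesis $T_{12}^*$ is surjective, so Proposition~\thmref{prop:schurvar:summary}\,(iv) gives $\pointspec(\SC{1})=\pointspec(\mT)\cap(\Dup,\infty)$ and $\essspec(\SC{1})=\essspec(\mT)\cap(\Dup,\infty)$. Therefore the isolated eigenvalues of $\SC{1}$ in $(\Dup,\lae)$ are precisely the eigenvalues $\la_1\le\la_2\le\dots$ of $\mT$ in that interval, counted with the same multiplicity (because by Corollary~\thmref{cor:schurvar:essspec} the kernel of $\SC{1}(\la)$ is linearly isomorphic to the eigenspace of $\mT$ at $\la$), which yields \eqref{eq:schurvar:lafirst}. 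The main obstacle I anticipate is the bookkeeping for the index shift and the boundary cases: in particular, correctly relating the geometric quantity $\dimlae(\lae)$ to the number of eigenvalues produced by the Eschw\'e--Langer principle and ruling out accumulation of $\{\la_n\}$ strictly below $\lae$ other than at $\lae$ itself, both of which hinge on the strict monotonicity estimate above and the norm-resolvent continuity of $\SC{1}$.
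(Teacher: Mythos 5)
Your proposal is correct and follows essentially the same route as the paper: verify the hypotheses of the Eschw\'e--Langer minimax principle for the Schur complement family via Proposition~\ref{prop:schurvar:summary}, apply that theorem to obtain the characterisation of the eigenvalues of $\SC{1}$ with the index shift $n_0$, transfer to $\mT$ via the spectral correspondence of Corollary~\ref{cor:schurvar:essspec}, and dispatch the case $n_0=0$ with Proposition~\ref{prop:schurvar:summaryadd}. The paper's own proof is simply a terser version of exactly this argument, so your more detailed bookkeeping (monotonicity of $\la\mapsto\dim\specspace_{(-\infty,0)}\SC{1}(\la)$, the interpretation of $\dimlae(\lae)$, the kernel isomorphism for multiplicities) is a sound fleshing-out rather than a different method.
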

\begin{proof}   
   Proposition~\thmref{prop:schurvar:summary} 
   shows that all assumptions of Theorem~\cite[theorem 2.1]{eschwe} are satisfied for the Schur complement $\SC{1}(\la)$, $\la\in(\Dup,\infty)$.
%    and hence the assertions of concerning the eigenvalues of $\SC{1}$ follow.
Hence, the numbers $\mu_{\eignumber+n_0}$ exist and are equal to the eigenvalues of the operator family $\SC{1}$. 
   By Corollary~\thmref{cor:schurvar:essspec}, we have
   $\pointspec(\SC{1})=\pointspec(\mT)\cap(\Dup,\infty)$ and $\essspec(\SC{1})=\essspec(\mT)\cap(\Dup,\infty)$ so that all the assertions follow from theorem~\cite[theorem 2.1]{eschwe}.

   If even the assumptions of Proposition~\thmref{prop:schurvar:summaryadd} are valid, then it follows automatically that $(\Dup,\lae)\neq \emptyset$ and that $\dim \specspace_{(-\infty, 0)}\SC{1}(\la)=0$ for $\la$ in a sufficiently small right neighbourhood of $\Dup$,
   hence the index offset $n_0$ appearing in formula~\eqref{eq:schurvar:lafirst} vanishes. 
\end{proof} 

The numbers $p(x)$ are rather hard to estimate. 
However, there is a representation of $p(x)$ as the supremum of a functional $\la_+\tv{x}{y}$ where $y$ varies in some subspace of $\hil_2$, see~\eqref{eq:schurvar:la} and Lemma~\ref{lemma:schurvar:p}.
The functional $\la_+$ is connected with the so-called quadratic numerical range of block operator matrices, see, for example, \cite{LT98} and \cite{LMMT01}.
It was used in~\cite{LLT02} to obtain a variational principle for block operator matrices with bounded off-diagonal entries.

\begin{definition}
\label{first:QNR}
\label{defininition:schurvar:laFunctional}
   \index{quadratic numerical range}
   \index{QNR}
   Let $\mT=\tm{T_{11}}{T_{12}}{T_{21}}{T_{22}}$ be a closed block operator matrix on the Hilbert space $\hil=\hil_1\oplus\hil_2$ with domain $\mD(\mT)=\mD(T_{21})\oplus\mD(T_{12})$. 
      Assume that the operators $T_{12}$ and $T_{21}$ are closed and that $T_{11}$ is $T_{21}$-bounded and that $T_{22}$ is $T_{12}$-bounded. 
      For $\ttv{x}{y}\in\mD(\mT)$, $x\neq0$, $y\neq 0$, consider the matrices
      \begin{align*}
	 \mT_{x,y}\ :=\ 
	 \begin{pmatrix}
	    \frac{\scalar{x}{T_{11}x}}{\|x\|^2} & \frac{\scalar{x}{T_{12}y}}{\|x\|\,\|y\|}\\[2ex]
	    \frac{\scalar{y}{T_{21}x}}{\|x\|\,\|y\|} & \frac{\scalar{y}{T_{22}y}}{\|y\|^2}
	 \end{pmatrix}\ \in\ M_{2}(\C)
      \end{align*}
      with eigenvalues
      \begin{align}\label{eq:schurvar:la} 
	 \nonumber
	 \la_{\pm}\mv{x\\y}\ :=\ 
	 \frac{1}{2}\Biggl( &
	       \frac{\scalar{x}{T_{11}x}}{\|x\|^2} + \frac{\scalar{y}{T_{22}y}}{\|y\|^2} 
	      \\
	      & \hspace{2ex}\pm \sqrt{\ts
	       \left(\frac{\scalar{x}{T_{11}x}}{\|x\|^2}-\frac{\scalar{y}{T_{22}y}}{\|y\|^2}\right)^2
	       + \frac{4\scalar{x}{T_{12}y}\scalar{y}{T_{21}x}}{\|x\|^2\,\|y\|^2} } \
	   \Biggr)
      \end{align}
      and define the sets
      \begin{align*}
	 \Lambda_{\pm}(\mT)\ :=\ \left\{ \la_{\pm}\mv{x\\y}\ :\
	    x\in\noz{\mD(T_{21})},\ y\in\noz{\mD(T_{12})} \right\}
      \end{align*}
      where we use again the notation $\noz L := L \setminus\{0\}$ for linear spaces $L$.
      Note that in general the radicand may be negative or complex.
      The \define{quadratic numerical range} $W^2(\mT)$ of $\mT$ is defined as the set of all complex numbers $\la$ that are eigenvalues of some $\mT_{x,y}$, that is,
      \begin{align*}
	 W^2(\mT)\ :=\ 
	 \bigcup\limits_{\msub{x\in\noz{\mathcal D(T_{21})} }{y\in\noz{\mathcal D(T_{12})} }}
	 \pointspec(\mT_{x,y})\
	 =\ \Lambda_+(\mT) \cup \Lambda_-(\mT). 
      \end{align*}
\end{definition}

The number $\la_{\pm}\tv{x}{y}$ does not depend on the norm of the vectors $x$ and $y$. 
Therefore it suffices to restrict the definition of $\la_\pm\tv{x}{y}$ to elements $\ttv{x}{y}\in\mD(\mT)$ with $\|x\|=\|y\|=1$.

In the following we characterise $p(x)$, defined in~\eqref{eq:schurvar:pdef}, in terms of $\la_\pm\tv{x}{y}$. 
Recall that $p(x)$ is the unique zero of the function $\la\mapsto \sig{1}^x(\la)$ if it exists and $p(x)= - \infty$ otherwise.

\begin{lemma}\label{lemma:schurvar:p}
   Assume that the conditions of Proposition~\thmref{prop:schurvar:summary} 
%    and \thmref{prop:schurvar:summaryadd} 
   hold.
   Then for all  $x\in\mD(T_{12}^*)\setminus\{0\}$ with $p(x)\neq -\infty$ we have
   \begin{align}\label{eq:schurvar:pSup}
      p(x)\ =\ \sup\Bigl\{\la_+\begin{pmatrix}x\\y\end{pmatrix}\ :\ y\in\mD(T_{12})\setminus\{0\}\Bigr\}.
   \end{align}
   If in addition $x\in\mD(\SC{1}(p(x)))$, then the supremum is attained, thus we have
   \begin{align}\label{eq:schurvar:pMax}
      p(x)\ =\ \max\Bigl\{\la_+\begin{pmatrix}x\\y\end{pmatrix}\ :\ y\in\mD(T_{12})\setminus\{0\}\Bigr\}.
   \end{align}
\end{lemma}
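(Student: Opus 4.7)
The plan is to recognise $\la_+\tv{x}{y}$ as the unique zero of the scalar Schur complement of the $2\times 2$ matrix $\mT_{x,y}$, to compare it with the scalar function $\sig{1}^x$ whose unique zero defines $p(x)$, and to relate the two by Cauchy--Schwarz.

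For fixed $x\in\mD(T_{12}^*)$ with $T_{12}^*x\ne 0$ (the case $T_{12}^*x=0$ is trivial and treated as a degeneracy below), first I would write $\la_+\tv{x}{y}$ as the unique zero in $\bigl(\frac{\scalar{y}{T_{22}y}}{\norm{y}^2},\infty\bigr)\supseteq(\Dup,\infty)$ of the strictly decreasing function
\begin{align*}
\phi_y(\la)\ :=\ \frac{\scalar{x}{(T_{11}-\la)x}}{\norm{x}^2}\ +\ \frac{|\scalar{T_{12}^*x}{y}|^2}{\norm{x}^2\bigl(\la\norm{y}^2-\scalar{y}{T_{22}y}\bigr)}\,,
\end{align*}
which is essentially $\det(\mT_{x,y}-\la I)/\bigl(\la\norm{y}^2-\scalar{y}{T_{22}y}\bigr)$ divided by $\norm{x}^2$. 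By Proposition~\thmref{prop:schurvar:summary}~\ref{item:schurvar:decrease} and~\eqref{eq:schurvar:pdef}, $p(x)$ is analogously the unique zero in $(\Dup,\infty)$ of
\begin{align*}
\psi(\la)\ :=\ \frac{\sig{1}^x(\la)}{\norm{x}^2}\ =\ \frac{\scalar{x}{(T_{11}-\la)x}}{\norm{x}^2}\ +\ \frac{\bigscalar{T_{12}^*x}{(\la-T_{22})^{-1}T_{12}^*x}}{\norm{x}^2}\,.
\end{align*}

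Next I would compare $\phi_y$ and $\psi$ on $(\Dup,\infty)$. Setting $u:=T_{12}^*x$ and $R:=(\la-T_{22})^{-1}$, condition~\ref{D2a} makes $R$ and $R^{-1}=\la-T_{22}$ bounded, selfadjoint and strictly positive for $\la>\Dup$, so the bounded square roots $R^{\pm 1/2}$ exist. Writing $\scalar{u}{y}=\bigscalar{R^{1/2}u}{R^{-1/2}y}$ and applying Cauchy--Schwarz gives
\begin{align*}
|\scalar{u}{y}|^2\ \le\ \bigscalar{u}{Ru}\cdot\bigscalar{y}{R^{-1}y}\ =\ \bigscalar{u}{Ru}\cdot\bigl(\la\norm{y}^2-\scalar{y}{T_{22}y}\bigr),
\end{align*}
and rearranging is exactly $\phi_y(\la)\le\psi(\la)$. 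Evaluating at $\la=p(x)$ gives $\phi_y(p(x))\le 0$, hence by monotonicity the zero of $\phi_y$, namely $\la_+\tv{x}{y}$, satisfies $\la_+\tv{x}{y}\le p(x)$. Taking the supremum over $y\in\mD(T_{12})\setminus\{0\}$ yields one direction of~\eqref{eq:schurvar:pSup}. (If $T_{12}^*x=0$ or $\scalar{T_{12}^*x}{y}=0$, then $\phi_y$ reduces to $\la\mapsto\scalar{x}{(T_{11}-\la)x}/\norm{x}^2$ and the bound $\la_+\tv{x}{y}\le p(x)$ is trivial.)

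For the reverse inequality, equality in the preceding Cauchy--Schwarz step holds precisely when $R^{-1/2}y$ is proportional to $R^{1/2}u$, that is, when $y$ is proportional to $y^*_\la:=Ru=(\la-T_{22})^{-1}T_{12}^*x$; for such $y$ one has $\phi_y(\la)=\psi(\la)$. Under the extra hypothesis $x\in\mD(\SC{1}(p(x)))=\mD(\SCmin{1}(p(x)))$, the defining relation~\eqref{eq:SCmin} places $y^*_{p(x)}$ in $\mD(T_{12})$, so it is admissible and $\la_+\tv{x}{y^*_{p(x)}}=p(x)$, which proves~\eqref{eq:schurvar:pMax}. Without the domain hypothesis, $y^*_{p(x)}$ only lies in $\hil_2$, and one approximates: given $\eps>0$, choose $\la^*\in(p(x)-\eps,\,p(x))$ with $\psi(\la^*)>0$, set $y^*:=y^*_{\la^*}\ne 0$, and use the density of $\mD(T_{12})$ in $\hil_2$ (from~\ref{B1}) to pick $y_n\in\mD(T_{12})\setminus\{0\}$ with $y_n\to y^*$ in $\hil_2$. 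Since $T_{22}$ is bounded and $\la^*\norm{y^*}^2-\scalar{y^*}{T_{22}y^*}\ge(\la^*-\Dup)\norm{y^*}^2>0$, the map $y\mapsto\phi_y(\la^*)$ is continuous at $y^*$, so $\phi_{y_n}(\la^*)\to\psi(\la^*)>0$. For all large $n$ this forces $\la_+\tv{x}{y_n}>\la^*>p(x)-\eps$, and letting $\eps\to 0$ completes~\eqref{eq:schurvar:pSup}.

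The principal obstacle is this last approximation step: the Cauchy--Schwarz maximiser $y^*_{p(x)}$ only lives in $\hil_2$, so it need not belong to $\mD(T_{12})$---this is precisely why~\eqref{eq:schurvar:pSup} records a supremum rather than a maximum in general. Bridging the gap requires both the density $\overline{\mD(T_{12})}=\hil_2$ from~\ref{B1} and the continuity of $y\mapsto\phi_y(\la^*)$, which itself relies on the boundedness of $T_{22}$ in~\ref{D2a}.
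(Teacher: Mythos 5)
Your proof is correct, and for the ``one direction'' ($\la_+\tv{x}{y}\le p(x)$) it is essentially the same argument as the paper's: the generalised Cauchy--Schwarz inequality $|\scalar{u}{y}|^2\le\scalar{u}{Ru}\scalar{y}{R^{-1}y}$ with $u=T_{12}^*x$ and $R=(\la-T_{22})^{-1}$ is exactly the paper's estimate, and your normalisation of both $\det(\mT_{x,y}-\la)$ and $\sig{1}^x$ to the scalar functions $\phi_y$ and $\psi$ is a clean way to read off the conclusion from the monotonicity of $\phi_y$.

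For the reverse direction the two proofs part ways, and the difference is substantive. The paper keeps $\la=p(x)$ and the candidate formula $y=(T_{22}-p(x))^{-1}T_{12}^*x$ fixed, and instead approximates $x$ by a sequence $x_n$ taken from $\mD(\SC{1}(p(x)))$, using the fact (established in the proof of Proposition~\thmref{prop:schurvar:schur}) that this domain is a core of $(T_{22}-p(x))^{-1/2}T_{12}^*$; one then has to track the quadratic identity $\|x_n\|^2\|y_n\|^2\det(\mT_{x_n,y_n}-p(x))=\scalar{y_n}{(T_{22}-p(x))y_n}\,\SCfs{1}(p(x))[x_n]\to 0$ and invoke joint continuity of $\la_\pm$ in both arguments. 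Your argument instead fixes $x$, backs away from $p(x)$ to a nearby $\la^*<p(x)$ where $\psi(\la^*)>0$, identifies the Cauchy--Schwarz maximiser $y^*_{\la^*}\in\hil_2$, and approximates it in $\hil_2$ by vectors $y_n\in\mD(T_{12})$ using only the density statement in~\ref{B1}. This buys simplicity: you never leave the fixed vector $x$, you do not need the core property or a diagonal double-limit argument, and the only analytic input is continuity of the bounded quadratic form $y\mapsto\phi_y(\la^*)$, which follows directly from~\ref{D2a}. The price is that one has to argue in two stages ($\eps$ then $n$) and to check the degenerate case $\scalar{T_{12}^*x}{y}=0$ separately, both of which you do. The attainment case $x\in\mD(\SC{1}(p(x)))$ is handled identically in both proofs, via the same explicit maximiser $(\la-T_{22})^{-1}T_{12}^*x$.

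One small remark: when you dismiss the degenerate case $\scalar{T_{12}^*x}{y}=0$ as ``trivial,'' the reason $\la_+\tv{x}{y}=\max(a,d_y)\le p(x)$ requires the observation that $a=\scalar{x}{T_{11}x}/\norm{x}^2\le p(x)$, which follows because $\psi(a)=\scalar{T_{12}^*x}{(a-T_{22})^{-1}T_{12}^*x}/\norm{x}^2\ge 0=\psi(p(x))$ if $a>\Dup$ (and is immediate if $a\le\Dup<p(x)$). Spelling this out would make the dismissal airtight, but the claim itself is correct.
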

\begin{proof}
   Fix $x\in\mD(T_{12}^*)\setminus\{0\}$.
   Since $T_{21}=T_{12}^*$ and the operators $T_{11}$ and $T_{22}$ are symmetric, ~\eqref{eq:schurvar:la} shows that $\la_+\tv{x}{y}$ is real for all $y\in\mD(T_{12})\setminus\{0\}$.
   Note that for arbitrary $\la>\Dup$, $x\in\mD(T_{12}^*)\setminus\{0\}$, $y\in\mD(T_{12})\setminus\{0\}$ we have
   \begin{align}\label{eq:detSigma}
      \begin{aligned}
	 &\|x\|^2\,\|y\|^2\,\det(\mT_{x,y}-\la)\\
	 &\hspace{5ex}=\, \bigl( \scalar{x}{T_{11}x}-\la\|x\|^2 \bigr)
	 \bigl(\scalar{y}{T_{22}y}-\la\|y\|^2\bigr)
	 -\scalar{x}{T_{12}y}\scalar{y}{T_{12}^*x} 
      \end{aligned}
   \end{align}
   To prove the assertion we first show that $p(x)\ge \la_+\tv{x}{y}$ for all $y\in\mD(T_{12})\setminus\{0\}$. 
   So fix $y\in\mD(T_{12})\setminus\{0\}$ and, for simplicity of notation, set $\la_+ := \la_+\tv{x}{y}$. 
   If $\la_+\le \Dup$, then nothing has to be shown since $\Dup \le p(x)$ by assumption. 
   Now assume $\la_+>\Dup$.
   Since $p(x)$ is the unique zero of the monotonously decreasing function $\sig{1}^x$, it suffices to show $\sig{1}^x(\la_+)=\SCfs{1}(\la_+)[x]\ge 0$. 
   By definition, $\la_+$ is an eigenvalue of the complex $2\times2$-matrix $\mT_{x,y}$, thus, by~\eqref{eq:detSigma} and the definition of $\SCfs{1}(\la)$ in Proposition~\ref{prop:schurvar:form}:
   \begin{align}\nonumber
      0\ &=\ \|x\|^2\,\|y\|^2\,\det(\mT_{x,y}-\la_+)\\[1ex]
      \label{eq:schurvar:det1}
      &=\ \bigscalar{y}{(T_{22}-\la_+)y}\,\SCfs{1}(\la_+)[x] \\
      \label{eq:schurvar:det2}
      & \phantom{=\ (}+
      \bigscalar{y}{(T_{22}-\la_+)y}
      \bigscalar{T_{12}^*x}{(T_{22}-\la_+)^{-1}T_{12}^*x} - |(y,\,T_{12}^*x)|^2.
   \end{align}
   For $\la>\Dup$ the operator $(\la-T_{22})$ is strictly positive and the same holds for the induced sesquilinear form 
   $(u,v)\mapsto (u,\, (\la-T_{22})v)$ for $u,\, v\in\mD(T_{22})$. 
   For this form we have the following generalised Cauchy-Schwarz inequality 
   \begin{align*}
      |\, \scalar{u}{(\la-T_{22})v}\, |^2\
      &=\ \bigl|\, \bigscalar{(\la-T_{22})^{\frac{1}{2}}u}{(\la-T_{22})^{\frac{1}{2}}v}\, \bigr|^2\\
      &\le\ \bigl\| (\la-T_{22})^{\frac{1}{2}}u \bigr\|^2\, 
      \bigl\|(\la-T_{22})^{\frac{1}{2}}v \bigr\|^2\\
      &=\ \bigscalar{u}{(\la-T_{22}) u}  
      \bigscalar{v}{(\la-T_{22})v}
   \end{align*}
   for all $u,\,v\in\mD(T_{22})$.
   % Bemerkung: Es wuerde auch mit dem von $(T_{22}-\la)^{-1}$ induzierten Skalarprodukt funktonieren.
   Since $y\in\mD(T_{12})\subseteq\mD(T_{22})$, we can use this inequality to estimate the two terms in~\eqref{eq:schurvar:det2}:
   \begin{align*}
      &\bigscalar{y}{(T_{22}-\la_+)y}
      \bigscalar{T_{12}^*x}{(T_{22}-\la_+)^{-1}T_{12}^*x} - |\scalar{y}{T_{12}^*x}|^2\\
      &\hspace{15ex}
      \begin{aligned}
	 &=\ 
	 \bigscalar{y}{(T_{22}-\la_+)y}
	 \bigscalar{T_{12}^*x}{(T_{22}-\la_+)^{-1}T_{12}^*x} \\
	 &\phantom{=\ (}- \bigl|
	 \bigscalar{y}{(\la_+-T_{22})(\la_+-T_{22})^{-1}T_{12}^*x}
	 \bigr|^2\\
	 &\ge\ 
	 \bigscalar{y}{(\la_+-T_{22})y}
	 \bigscalar{T_{12}^*x}{(\la_+-T_{22})^{-1}T_{12}^*x} \\
	 &\phantom{=\ (}
	 - \bigscalar{y}{(\la_+-T_{22})y}
	 \bigscalar{T_{12}^*x}{(\la_+-T_{22})^{-1}T_{12}^*x}\\
	 &=\ 0.
      \end{aligned}
   \end{align*}
   Because the factor $\scalar{y}{(T_{22}-\la_+)y}$ in the term~\eqref{eq:schurvar:det1} is negative, it follows that the second factor, $\SCfs{1}(\la_+)[x]=\sig{1}^x(\la_+)$, must be nonnegative, and thus we have proved the inequality $p(x) \ge \sup\{\la_+\tv{x}{y}\, :\, y\in\mD(T_{12})\setminus\{0\}\}$.
   
   If $x\in\mD(\SC{1}(p(x)))$, then we can choose an element $y$ such that $p(x)=\la_+\tv{x}{y}$.
   To this end, define $y:= (T_{22}-p(x))^{-1}T_{12}^*x$. 
   This vector is well defined and it lies in the domain of $T_{12}$ since by assumption $x\in\mD(\SC{1}(p(x)))$.
%    cf. corollary~\thmref{cor:schurvar:Sdomain}, 
   If we use
   \begin{align}
      \label{eq:schurvar:zero}
      \bigscalar{y}{(T_{22}-p(x))y}
      \bigscalar{T_{12}^*x}{(T_{22}-p(x))^{-1}T_{12}^*x} 
      - \bigl| \bigscalar{T_{12}^*x}{\ y} \bigr|^2\ =\ 0
   \end{align}
   and $\SCfs{1}(p(x))[x]= 0$, 
   it follows with the help of~\eqref{eq:detSigma}
   \begin{align*}
      \|x\|^2\,\|y\|^2\, & \det(\mT_{x,y}-p(x))\\[1ex]
      &=\ \bigscalar{y}{(T_{22}-p(x))y}\,\SCfs{1}(p(x))[x] \\
      & \phantom{=\ (}+ 
      \bigscalar{y}{(T_{22}-p(x))y}
      \bigscalar{T_{12}^*x}{(T_{22}-p(x))^{-1}T_{12}^*x} - |(y,\,T_{12}^*x)|^2\\[1ex]
      &=\ 0.
   \end{align*}
   This implies that $p(x)$ is an eigenvalue of $\mT_{x, y}$.
   Together with $\la_-\tv{x}{y}\le\la_+\tv{x}{y}\le p(x)$ it follows that $p(x)=\la_+\tv{x}{y}$
   which proves~\eqref{eq:schurvar:pSup} and \eqref{eq:schurvar:pMax} in the case $x\in\mD(\SC{1}(p(x)))$.\\
   It remains to show~\eqref{eq:schurvar:pSup} in the case $x\notin\mD(\SC{1}(p(x)))$, i.e., for elements $x\in\mD(T_{12}^*)$ such that $(T_{22}-p(x))^{-1}T_{12}^*x\notin\mD(T_{12})$.
   If $x\in\mD(T_{12}^*)\setminus\mD(\SC{1}(p(x)))$, there exists a sequence
   $(x_n)_{n\in\N}\subseteq\mD(\SC{1}(p(x)))$ such that
   \begin{align*}
      x_n\rightarrow x 
      \quad\text{and}\quad 
      (T_{22}-\la)^{-\frac{1}{2}}T_{12}^*x_n\rightarrow (T_{22}-\la)^{-\frac{1}{2}}T_{12}^*x,
      \qquad n\rightarrow\infty,
   \end{align*}
   since in the proof of Proposition~\thmref{prop:schurvar:schur} we saw that $\mD(\SC{1}(p(x)))$ is a core of $(T_{22}-p(x))^{-\frac{1}{2}}T_{12}^*$.
   Set $y_n:=(T_{22}-p(x))^{-1}T_{12}^*x_n$, $n\in\N$. 
   Because both $(T_{22}-p(x))^{-1}$ and $T_{22}-p(x)$ are bounded, the limites $y:=\lim\limits_{n\rightarrow \infty} y_n$ and $\lim\limits_{n\rightarrow\infty}T_{12}^* x_n$ exist and are not zero; otherwise it would follow that $x\in\mD(\SC{1}(p(x)))$ in contradiction to the assumption on $x$.
   Moreover, since $T_{11}$ is relatively bounded with respect to $T_{12}^*$, also the limit $\lim\limits_{n\rightarrow\infty} T_{11}x_n$ exists.
   Therefore, all terms in 
   \begin{multline*}
      \SCfs{1}(p(x))[x_n]\\[1ex]
      \begin{aligned}[t]
      =\ & \SCfs{1}(p(x))[x] + 
      \SCfs{1}(p(x))[x_n-x, x_n] + \SCfs{1}(p(x))[x_n, x_n-x] - \SCfs{1}(p(x))[x_n-x]\\[1ex]
      =\ &
	 2\, \bigscalar{x_n-x}{(T_{11}-p(x))x_n} 
	 -  \bigscalar{x_n-x}{(T_{11}-p(x))(x_n-x)}\\
	 &+ 2\, \bigscalar{T_{12}^*(x_n-x)}{(T_{22}-p(x))^{-1}T_{12}^*x_n} \\
	 & - \bigscalar{T_{12}^*(x_n-x)}{(T_{22}-p(x))^{-1}T_{12}^*(x_n-x)}
      \end{aligned}
   \end{multline*}
   converge to zero for $n\rightarrow \infty$.
   As in~\eqref{eq:schurvar:zero}, we obtain
   \begin{align*} %\label{eq:schurvar:det1}
      \bigscalar{y_n}{(T_{22}-p(x))y_n}
      \bigscalar{T_{12}^*x_n}{(T_{22}-p(x))^{-1}T_{12}^*x_n} 
      - \bigl| \bigscalar{T_{12}^* x_n}{y_n} \bigr|^2\, 
      =\, 0,
      \quad n\in\N,
   \end{align*}
   which implies
   \begin{align*}
      \|x_n\|^2\, \|y_n\|^2\, \det(\mT_{x_n, y_n} - p(x))
      = \bigscalar{y_n}{(T_{22}-p(x))y_n}\, \SCfs{1}(p(x))[x_n]\
      \xrightarrow{\ n\to\infty\ }\ 0.
   \end{align*}
   Since neither $x_n$ nor $y_n$ tend to zero, it follows that
   \begin{align}\label{eq:schurvar:pproduct}
      \Bigl( p(x) - \la_-\tv{\vphantom{f_n}x_n}{\vphantom{f_n}y_n} \Bigr) 
      \Bigl( p(x) - \la_+\tv{\vphantom{f_n}x_n}{\vphantom{f_n}y_n} \Bigr) \ =\ 
      \det (\mT_{x_n, y_n} - p(x) )\
      \xrightarrow{\ n\to\infty\ }\ 0.
   \end{align}
   Each entry of 
   \begin{align*}
	 &\hspace{-2ex}\mT_{x_n, y_n} - \mT_{x, y}\\
      &=\ 
      \begin{pmatrix}
	 \scalar{x_n}{T_{11}(x_n-x)} + \scalar{x_n-x}{T_{11}x} &
	 \scalar{x_n}{T_{12}(y_n-y)} + \scalar{x_n-x}{T_{12}y} \\
	 \scalar{y_n}{T_{12}^*(x_n-x)} + \scalar{y_n-y}{T_{12}^*x} &
	 \scalar{y_n}{T_{22}(y_n-y)} + \scalar{y_n-y}{T_{22}y}
      \end{pmatrix}\\[2ex]
      &=\ 
      \begin{pmatrix}
	 \scalar{x_n}{T_{11}(x_n-x)} + \scalar{x_n-x}{T_{11}x} &
	 \scalar{T_{12}^*x_n}{y_n-y} + \scalar{T_{12}^*(x_n-x)}{y} \\
	 \scalar{y_n}{T_{12}^*(x_n-x)} + \scalar{y_n-y}{T_{12}^*x} &
	 \scalar{y_n}{T_{22}(y_n-y)} + \scalar{y_n-y}{T_{22}y}
      \end{pmatrix}
   \end{align*}
   converges to zero for $n\rightarrow\infty$, hence we have $\mT_{x_n,y_n}\rightarrow\mT_{x,y}$ in norm.
   Thus the eigenvalues $\la_\pm\tv{\vphantom{f_n}x_n}{\vphantom{f_n}y_n}$ of $\mT_{x_n, y_n}$ converge to the eigenvalues $\la_\pm\tv{\vphantom{f_n}x}{\vphantom{f_n}y}$ of $\mT_{x, y}$, in particular it follows that 
   \begin{align*}
      p(x) = 
      \lim\limits_{n\to\infty}
      \la_+\tv{\vphantom{f_n}x_n}{\vphantom{f_n}y_n}.
   \end{align*}
   Since $\lambda_\pm$ is continuous in both its independent variables, and since $y_n\in\mD(T_{12}^*$, it follows that 
   \begin{align}\label{eq:schurvar:plimes}
      p(x) = 
      \lim\limits_{n\to\infty}
      \la_+\tv{\vphantom{f_n}x}{\vphantom{f_n}y_n} 
      \le 
      \sup\Bigl\{ \la_+\tv{\vphantom{f}x}{\vphantom{f}y}\, :\, y\in\mD(T_{12}^*)\setminus\{0\} \Bigr\}.
   \end{align}

% Now, if $p(x)>\sup\Bigl\{ \la_+\tv{\vphantom{f}x}{\vphantom{f}y}\, :\, y\in\mD(T_{12}^*)\setminus\{0\} \Bigr\}$, then there exists an $\eps>0$ such that 
% $p(x) > \la_+\tv{\vphantom{f}x}{\vphantom{f}\widetilde y} + \eps$ for all $\widetilde y\in\mD(T_{12}^*)\setminus\{0\}$.
%    Since the functional $\la_+$ depends continuously on both its independent variables, there exists an $N$ such that 
%    $\Bigl|\la_+\tv{\vphantom{f}x}{\vphantom{f}y_{n}} - \la_+\tv{\vphantom{f}x_n}{\vphantom{f}y_n}\Bigr| < \frac{\eps}{2}$ and 
%    for all $n \ge N$.
%    Thus we have
%    \begin{align*}
%       p(x) - \la_+\begin{pmatrix}{x_n}\\{y_n}\end{pmatrix} \
%       &=\
%       p(x) - \la_+\begin{pmatrix}x\\y_n\end{pmatrix} 
%       + \la_+\begin{pmatrix}x\\ y_n\end{pmatrix} - \la_+\begin{pmatrix}x_n\\ y_n\end{pmatrix}\
%       \ge\ \eps - \frac{\eps}{2}\
%       =\ \frac{\eps}{2}.
%    \end{align*}
%    Because of $p(x) - \la_-\tv{\vphantom{f}x_n}{\vphantom{f}y_n} \ge p(x) - \la_+\tv{\vphantom{f}x_n}{\vphantom{f}y_n}$ it follows that 
%    \begin{align}
%       \Bigl(\la_-\begin{pmatrix}x_n\\ y_n\end{pmatrix} - p(x)\Bigr) 
%       \Bigl(\la_+\begin{pmatrix}x_n\\ y_n\end{pmatrix} - p(x)\Bigr)\
%       \ge\ \frac{\eps^2}{4},
%       \qquad n>N,
%    \end{align}
%    in contradiction to~\eqref{eq:schurvar:pproduct}.
\end{proof}

The following theorem is the main theorem of this paper.
It provides a variational characterisation of the eigenvalues of $\mT$ in a right half plane of $\C$ in terms of its entries $T_{ij}$.

\begin{theorem}\label{theorem:schurvar:var2}
   Suppose that the assumptions of Theorem~\thmref{theorem:schurvar:var1} hold,
   that is, suppose that conditions \ref{mT1}, \ref{B1}, \ref{A1}, \ref{A2}, \ref{D1} and \ref{D2a} are fulfilled, that $T_{12}^*$ is surjective, that $(\Dup,\, \lae)\neq\emptyset$ and that there is a $\la_0\in (\Dup,\lae)$ such that $\dim\specspace_{(-\infty, 0)}\SC{1}(\la_0)<\infty$.
   Then the eigenvalues of $\mT$ in $(\Dup,\lae)$ are given by
   \begin{align}\label{eq:schurvar:minmaxsup}
	 \la_{\eignumber}\ 
	 &=\  
	 \minp_{ \msub{L\subseteq \mathcal D(T_{12}^*)}{\dim L = \eignumber+n_0} }\ 
	 \maxp_{x\in \noz{L}}\ 
	 \sup_{y\in\noz{\mathcal D(T_{12})}} \la_+\mv{x\\y},
	 \qquad\qquad 1\le \eignumber\le N,
   \end{align}
   where we have adopted the notation of Theorem~\thmref{theorem:schurvar:var1}.
   If the domain of $\SC{1}(\la)$ does not depend on $\la$, i.e., if
   \begin{align*}
      \mD(\SC{1}(\la))\ =\ \mD(\SC{1}), \qquad\qquad \la\in(\Dup,\infty)
   \end{align*}
   then we have
   \begin{align}\label{eq:schurvar:minmaxmax}
	 \la_{\eignumber}\ 
	 =\  
	 \min_{ \msub{L\subseteq \mathcal D(\SC{1})}{\dim L = \eignumber+n_0} }\ 
	 \max_{x\in \noz{L}}\ 
	 \max_{y\in\noz{\mathcal D(T_{12})}} \la_+\mv{x\\y},
	 \qquad\qquad 1\le \eignumber\le N.
   \end{align}
\end{theorem}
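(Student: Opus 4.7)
The plan is to combine Theorem~\ref{theorem:schurvar:var1} with Lemma~\ref{lemma:schurvar:p}; essentially all the machinery has been set up and only the bookkeeping remains. By Theorem~\ref{theorem:schurvar:var1}, the eigenvalues of $\mT$ in $(\Dup,\lae)$ are given by $\la_{\eignumber}=\mu_{\eignumber+n_0}$, where the numbers $\mu_n$ are defined via
\begin{align*}
   \mu_n\ =\ \min_{\msub{L\subseteq\mD}{\dim L=n}}\ \max_{x\in\noz L}\ p(x)
\end{align*}
for any linear manifold $\mD$ satisfying $\mD(\SC{1}(\la))\subseteq\mD\subseteq\mD(\SCfs{1}(\la))=\mD(T_{12}^*)$ for every $\la\in(\Dup,\infty)$. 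The two formulae in the theorem correspond to two admissible choices of $\mD$.

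For the first formula \eqref{eq:schurvar:minmaxsup}, I would take $\mD=\mD(T_{12}^*)$, which is clearly admissible. Then it remains to replace $p(x)$ by $\sup_{y\in\noz{\mathcal D(T_{12})}}\la_+\tv{x}{y}$. For every $x\in\mD(T_{12}^*)\setminus\{0\}$ with $p(x)\neq -\infty$, this replacement is precisely the content of equation~\eqref{eq:schurvar:pSup} in Lemma~\thmref{lemma:schurvar:p}. For the remaining $x$ with $p(x)=-\infty$, the argument given in the first half of the proof of Lemma~\thmref{lemma:schurvar:p} shows that every $\la_+\tv{x}{y}>\Dup$ forces $\sig{1}^x(\la_+)\ge 0$, so $\la_+\tv{x}{y}\le \Dup$ for all $y$; consequently $\sup_y\la_+\tv{x}{y}\le\Dup<\la_\eignumber$ and such $x$ do not affect the outer max at any subspace realising the minimum.

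For the second formula \eqref{eq:schurvar:minmaxmax}, I would instead take $\mD=\mD(\SC{1})$, which is again admissible under the standing hypothesis that $\mD(\SC{1}(\la))$ is independent of $\la$. For any $x\in\mD(\SC{1})\setminus\{0\}$ with $p(x)>\Dup$, the inclusion $x\in\mD(\SC{1}(p(x)))=\mD(\SC{1})$ holds automatically, so the second half of Lemma~\thmref{lemma:schurvar:p} (equation~\eqref{eq:schurvar:pMax}) applies and the supremum in $y$ is actually attained, giving $p(x)=\max_y\la_+\tv{x}{y}$. The outer $\min$ and $\max$ are thereby identified as in~\eqref{eq:schurvar:minmaxmax}.

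The main conceptual point, and really the only step that is not bookkeeping, is ensuring that restricting the outer minimum from $\mD(T_{12}^*)$ down to $\mD(\SC{1})$ does not change the value of the min-max; this is the reason Theorem~\thmref{theorem:schurvar:var1} was stated with the flexibility of an arbitrary intermediate manifold $\mD$. The only mild subtlety is excluding the degenerate vectors with $p(x)=-\infty$, which is handled by the observation above that for them $\sup_y\la_+\tv{x}{y}\le\Dup$ lies strictly below the eigenvalues under consideration.
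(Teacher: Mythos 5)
Your proof is correct and follows the same route as the paper: invoke Theorem~\thmref{theorem:schurvar:var1} with the appropriate choice of intermediate manifold $\mD$, then use Lemma~\thmref{lemma:schurvar:p} to convert $p(x)$ into $\sup_y\la_+\tv{x}{y}$ (respectively $\max_y$ when $L\subseteq\mD(\SC{1})$). You in fact supply a small justification that the paper's proof leaves implicit: the last equality in the paper's display, replacing the max over $\{x\in\noz L : p(x)\neq-\infty\}$ by the max over all $x\in\noz L$, is valid precisely because, as you observe, $p(x)=-\infty$ forces $\sup_y\la_+\tv{x}{y}\le\Dup$, so those $x$ cannot realise the outer maximum.
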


\begin{proof}
   From Theorem~\thmref{theorem:schurvar:var1} it follows that all eigenvalues of $\mT$ greater than $\Dup$ are given by
   \begin{align*}
      \la_{\eignumber}\ =\  
      \min\limits_{ \msub{L\subseteq \mathcal D}{\dim L=\eignumber+n_0} }\ 
      \max\limits_{x\in \noz{L}}\ p(x),
      \qquad\qquad 1\le \eignumber\le N,
   \end{align*}
   where $\mD$ is any linear manifold with $\mD(\SC{1}(\la))\subseteq\mD\subseteq\mD(\SCfs{1})$.
   From Proposition~\thmref{prop:schurvar:summary} we know that the forms $\SCfs{1}(\la)$, $\la\in(\Dup,\infty)$, are closed and that $\mD(\SCfs{1}(\la))=\mD(T_{12}^*)$.
   Fix $\eignumber>0$ and a subspace $L\subseteq\mD(\SCfs{1})$ with $\dim L = \eignumber+n_0$.
   Then there exists an $x\in L$ with $p(x)\neq -\infty$.
   Lemma~\thmref{lemma:schurvar:p} yields
   \begin{align*}
      \max_{x\in \noz{L}}\ p(x)\ &=\ 
      \max_{\msub{x\in \noz{L}}{p(x)\neq-\infty}}\ p(x)\\
      & =\ 
      \max_{\msub{x\in \noz{L}}{p(x)\neq-\infty}}\ \sup_{y\in\noz{\mathcal D(T_{12})}} \la_+\mv{x\\y}\ =\
      \max_{x\in \noz{L}}\ \sup_{y\in\noz{\mathcal D(T_{12})}} \la_+\mv{x\\y}.
   \end{align*}
   If we have even $L\subseteq\mD(\SC{1})$, then the supremum can be replaced by the maximum.
\end{proof}

In general, it is not easy to determine the index shift $n_0$.
Sufficient conditions for the finiteness $n_0$, which are met the operator in the application in Section~\ref{sec:application}, are given in the following proposition.

\begin{proposition}\label{prop:schurvar:n0finite}
   \begin{enumerate}

      \item 
      If in addition to the assumptions in Theorem~\thmref{theorem:schurvar:var2} the operator $T_{12}^*$ is compactly invertible and the operator $T_{11}$ is bounded, then the index shift $n_0$ is finite.

      \item 
      If the assumptions of Proposition~\ref{prop:schurvar:summaryadd} are satisfied, then $n_0=0$.

   \end{enumerate}

\end{proposition}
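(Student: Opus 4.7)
The plan is to address the two parts separately. Part (ii) is essentially a direct consequence of what has already been established in Proposition~\thmref{prop:schurvar:summaryadd}(v): under the stated hypotheses on $b$, $d(\la)$ and $\delta>0$, the spectral subspace $\specspace_{(-\infty,0)}\SC{1}(\la)$ is trivial for every $\la$ in a right neighbourhood $(\Dup,\Dup+\eps)$ of $\Dup$. Taking the minimum over $\la>\Dup$ in the definition of $n_0$ therefore forces $n_0=0$, so nothing further is required.

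For part (i), my strategy is to bound the form $\SCfs{1}(\la_0)$, for some fixed $\la_0>\Dup$, from below by a closed semibounded form whose associated selfadjoint operator has compact resolvent; then only finitely many of its eigenvalues lie below zero, and a min--max comparison will transfer this bound to $\SC{1}(\la_0)$. First I would exploit that $T_{22}$ is bounded and selfadjoint, so $(\la_0-T_{22})^{-1}$ is a bounded, strictly positive operator admitting a lower bound $c=c(\la_0)>0$; combined with the boundedness of $T_{11}$, the form identity of Proposition~\thmref{prop:schurvar:form} gives
\begin{align*}
   \SCfs{1}(\la_0)[x]\ \ge\ -(\|T_{11}\|+\la_0)\,\|x\|^2 + c\,\|T_{12}^*x\|^2,
   \qquad x\in\mD(T_{12}^*).
\end{align*}

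Next I would invoke compact invertibility: if $(T_{12}^*)^{-1}$ is compact then so is its adjoint $T_{12}^{-1}$, hence $A:=T_{12}T_{12}^*$ has compact inverse $(T_{12}^*)^{-1}T_{12}^{-1}$. Consequently $A$ is a positive selfadjoint operator with purely discrete spectrum $0<\nu_1\le\nu_2\le\dots$ tending to $\infty$, and $\|T_{12}^*x\|^2=\scalar{x}{Ax}$ on $\mD(T_{12}^*)=\mD(A^{1/2})$. The closed semibounded form $\ft_0[x]:=-(\|T_{11}\|+\la_0)\|x\|^2+c\,\scalar{x}{Ax}$ with domain $\mD(A^{1/2})$ is associated with the selfadjoint operator $B:=-(\|T_{11}\|+\la_0)I+c\,A$, whose eigenvalues $-(\|T_{11}\|+\la_0)+c\,\nu_n$ tend to $+\infty$; only finitely many of them are negative, so $\dim\specspace_{(-\infty,0)}B<\infty$.

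The final step is the min--max comparison. From $\SCfs{1}(\la_0)[x]\ge\ft_0[x]$ on the common form domain $\mD(T_{12}^*)$ together with the selfadjointness of $\SC{1}(\la_0)$, Glazman's lemma yields $\dim\specspace_{(-\infty,0)}\SC{1}(\la_0)\le\dim\specspace_{(-\infty,0)}B<\infty$, which implies that $n_0$ is finite. I expect the main subtlety to be verifying that this form-level comparison correctly controls $\dim\specspace_{(-\infty,0)}\SC{1}(\la_0)$ despite $\SC{1}(\la_0)$ itself potentially having essential spectrum above zero; this is handled because the comparison operator $B$ has compact resolvent, so its negative spectral subspace is finite-dimensional, and the standard min--max counting principle transfers this finiteness to $\SC{1}(\la_0)$.
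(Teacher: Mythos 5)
Both parts of your proposal are correct, and part (ii) is exactly the paper's argument: read off from Proposition~\thmref{prop:schurvar:summaryadd}~\ref{item:schurvar:specspace} that the negative spectral subspace of $\SC{1}(\la)$ is trivial for $\la$ close to $\Dup$, so the minimum defining $n_0$ is zero.

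For part (i) you take a genuinely different route. The paper works directly at the operator level: since $T_{12}^*$ is compactly invertible and $(T_{22}-\la)^{-1}$ is bounded and boundedly invertible, the operator $T_{12}(T_{22}-\la)^{-1}T_{12}^*$ has compact inverse $T_{12}^{*-1}(T_{22}-\la)T_{12}^{-1}$; adding the bounded perturbation $T_{11}-\la$ preserves compactness of the resolvent, so $\SC{1}(\la)$ has discrete spectrum consisting of finite-multiplicity eigenvalues accumulating only at $+\infty$ (being bounded below), whence $\dim\specspace_{(-\infty,0)}\SC{1}(\la)<\infty$. Your argument instead works at the form level: you minorize $\SCfs{1}(\la_0)$ by the closed form of the operator $B=-(\|T_{11}\|+|\la_0|)I+c\,T_{12}T_{12}^*$, which manifestly has compact resolvent and only finitely many negative eigenvalues, and then transfer the finiteness to $\SC{1}(\la_0)$ via the min--max (Glazman) comparison of negative spectral subspaces. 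Both are sound; the paper's version is shorter, while yours makes explicit why the discreteness of $\spectrum(T_{12}T_{12}^*)$ is what controls $n_0$ and sidesteps having to argue that a bounded perturbation of an operator with compact resolvent still has compact resolvent. One cosmetic point: in your lower bound you should use $|\la_0|$ rather than $\la_0$, since $\Dup$ --- and hence $\la_0$ --- need not be positive; this does not affect the argument.
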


\begin{proof}
   Let $\la\in(\Dup,\infty)$.
   \begin{proofenumerate}

      \item 
      The assumptions imply that $T_{12}(T_{22}-\la)^{-1}T_{12}^*$ is compactly invertible.
      Since $T_{11}$ is bounded, also $\SC{1}(\la)$ is compactly invertible, hence its spectrum consists of a sequence of eigenvalues with finite multiplicity which has no accumulation point.
      Since the operator $\SC{1}(\la)$ is bounded from below, it follows that $\dim\specspace_{(-\infty,0)} \SC{1}(\la) < \infty$, in particular,
      \begin{align*}
	 n_0 = \min_{\la>\Dup}\dim\specspace_{(-\infty,0)} \SC{1}(\la) 
	 < \infty.
      \end{align*}

      \item
      Proposition~\ref{prop:schurvar:summaryadd} implies that 
      $\specspace_{(-\infty,0)} \SC{1}(\la) = \emptyset$ for $\la$ sufficiently close to $\Dup$, hence
      $n_0 = \min_{\la>\Dup}\dim_{\specspace_{(-\infty,0)}} \SC{1}(\la) = 0$.
      \qedhere
   \end{proofenumerate}
\end{proof}

In the application in section~\ref{sec:application}, the spectrum of $T_{12}T_{12}^*$ consists of simple discrete eigenvalues only. 
For this situation, we specialise Theorem~\thmref{theorem:schurvar:var2} further.

\begin{remark}\label{remark:schurvar:squares}
   Assume that $\mT_0=\tm{0}{T_{12}}{T_{12}^*}{0}$ with domain $\mD(\mT_0) = \mD(T_{12}^*)\oplus\mD(T_{12})\subseteq\hil_1\oplus\hil_2$ is closed and that $T_{12}T_{12}^*$ and $T_{12}^*T_{12}$ are strictly positive. 
   Then $\pointspec(\mT_0)\ =\ \{ \la\in\R\, :\, \la^2\in\pointspec(T_{12}T_{12}^*) \}$.
\end{remark}
\begin{proof}
      For $\la\in\pointspec(\mT_0)\setminus\{0\}$ we have $\la^2\in\pointspec(T_{12}T_{12}^*)\cap\pointspec(T_{12}^*T_{12})$ since for 
      each eigenvector $\ttv{f}{g}$ of $\mT_0$ with eigenvalue $\la$ it follows that $f\in\mD(T_{12}T_{12}^*)$, $g\in\mD(T_{12}^*T_{12})$, $f,\, g\neq 0$ and 
   $0=(\mT_0+\la)(\mT_0-\la)\tv{f}{g} 
   = \bigl( \tm{T_{12}T_{12}^*}{0}{0}{T_{12}^*T_{12}} -\la^2 \bigr)\tv{\vphantom{T_1}f}{\vphantom{T_1}g}
   = \tv{(T_{12}T_{12}^*-\la^2)f}{(T_{12}^*T_{12}-\la^2)g}$.\\
   On the other hand, if $\mu\neq 0$ is an eigenvalue of $T_{12}^*T_{12}$ with eigenfunction $g$, then it is also an eigenvalue of $T_{12}T_{12}^*$ with eigenfunction $T_{12}g$. For $\sigma=\pm 1$  we define $f=\sigma\mu^{-\frac{1}{2}}\, T_{12}g$. Then we have that $(\mT_0-\sigma\sqrt{\mu})\tv{f}{g}=0$, hence $\pm\sqrt\mu$ are eigenvalues of $\mT_0$.
\end{proof}

To estimate the functionals  $\la_+\tv{x}{y}$, we use the following auxiliary lemma.

\begin{lemma}\label{lemma:schurvar:analysisII}
   For $a_1,\ a_2,\ b_1,\ b_2,\ \ga\in\R$ with $a_1<b_1$ and $a_2<b_2$ we define the function
   \begin{align*}
      f:[a_1,b_1]\times[a_2,b_2]\rightarrow\R,\ 
      f(s,t)= s+t+\sqrt{(s-t)^2+\ga^2}.
   \end{align*}
   For fixed $t$, the function $f$ is monotonously increasing in $s$ and vice versa. 
   In particular,
   \begin{align*}
      f(a_1,a_2)\ \le\ f(s,t)\ \le\ f(b_1,b_2), \qquad (s,t)\in[a_1,b_1]\times[a_2,b_2].
   \end{align*}
\end{lemma}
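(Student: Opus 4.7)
The plan is to verify the claimed monotonicity via a direct computation of the partial derivatives, since the rest of the lemma (the two-sided bound on the rectangle) is an immediate consequence of monotonicity in each variable separately.

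First I would fix $t\in[a_2,b_2]$ and differentiate $f$ with respect to $s$:
\begin{align*}
   \frac{\partial f}{\partial s}(s,t) \ =\ 1 \,+\, \frac{s-t}{\sqrt{(s-t)^2+\ga^2}}.
\end{align*}
The second summand has absolute value at most $1$ because $|s-t|\le\sqrt{(s-t)^2+\ga^2}$, with equality only in the degenerate case $\ga=0$ and $s\le t$. Hence $\partial_s f(s,t)\ge 0$ on the rectangle, which gives the desired monotonicity in $s$. By the symmetry $f(s,t)=f(t,s)$, the analogous inequality $\partial_t f(s,t)\ge 0$ follows without extra work, so $f$ is also monotonously increasing in $t$ for each fixed $s$.

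To conclude the two-sided bound, I would chain the two monotonicities: for any $(s,t)\in[a_1,b_1]\times[a_2,b_2]$,
\begin{align*}
   f(a_1,a_2)\ \le\ f(s,a_2)\ \le\ f(s,t)\ \le\ f(s,b_2)\ \le\ f(b_1,b_2).
\end{align*}
There is no real obstacle here; the only point requiring a moment of care is the sign check $\bigl|(s-t)/\sqrt{(s-t)^2+\ga^2}\bigr|\le 1$, which is routine. The lemma is purely an elementary calculus fact, stated in this form so that it can later be applied to bound the functionals $\la_+\tv{x}{y}$ componentwise.
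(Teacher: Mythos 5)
Your proof is correct and follows essentially the same route as the paper: compute $\partial_s f = 1 + (s-t)/\sqrt{(s-t)^2+\ga^2}$, observe the fraction is bounded by $1$ in absolute value, and conclude $\partial_s f \ge 0$. The only cosmetic difference is that you invoke the symmetry $f(s,t)=f(t,s)$ to dispatch the $t$-monotonicity, whereas the paper just says ``vice versa''; this is the same idea.
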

\begin{proof}
   Partial differentiation of $f$ with respect to $s$ yields
   \begin{align*}
      \pa{s}f(s,t)\ =\ 
      1+\frac{s-t}{\sqrt{ (s-t)^2+\ga^2  }}\
      \ge\ \frac{\sqrt{(s-t)^2+\ga^2}-|s-t|}{\sqrt{ (s-t)^2+\ga^2}}\
      \ge\ 0.
      &\qedhere
   \end{align*}
\end{proof}

\begin{theorem}\label{theorem:schurvar:estimate}
   Let $\mT=\tm{T_{11}}{T_{12}}{T_{12}^*}{T_{22}}$ with domain $\mD(T_{12})\oplus\mD(T_{12}^*)\subseteq\hil_1\oplus\hil_2$ 
   be a selfadjoint block operator matrix such that the conditions \ref{mT1}, \ref{B1}, \ref{A1}, \ref{A2}, \ref{D1} and \ref{D2a} hold.
  Then $T_{11}$ is bounded with respect to $T_{12}^*$; let $\ax$, $\abx$, $\Dclo$ and $\Dcup$ such that 
   \begin{align*}
      \|T_{11}x\|\ &\le\ \ax \|x\| + \abx \|T_{12}^*x\|, 
      && x\in\mD(T_{12}^*),
      \\[1ex]
      \Dclo\,\|y\|^2\ &\le\ \scalar{y}{T_{22}y}\ \le\ \Dcup\,\|y\|^2,
      && y\in\hil_2.
   \end{align*}
   Further, let $T_{12}^*$ be bijective
%    and assume that there exists a number $b>0$ such that $\|T_{12}^*x\|\ge b\|x\|$ for all $x\in\mD(T_{12}^*)$.
   and assume that for all $\la\in(\Dup,\infty)$ the Schur complement
   \begin{align*}
      \mD(\SC{1}(\la))\ &=\
      \{x\in\mD(T_{12}^*)\, :\, (T_{22}-\la)^{-1}T_{12}^*x\in\mD(T_{12})\},\\
      \SC{1}(\la)\ &=\ T_{11}-\la -T_{12}(T_{22}-\la)^{-1}T_{12}^*,
   \end{align*}
   is selfadjoint and that $\mD(\SC{1}(\la))=:\mD(\SC{1})$ is independent of $\la$.
   Additionally suppose that there exists a $\la_0\in(\Dup,\infty)$ such that $\dim\specspace_{(-\infty,0)}\SC{1}(\la_0)<\infty$.
   If the spectrum of the operator $T_{12}T_{12}^*$ satisfies
   \label{first:BB*Eigenvalues}%
   \begin{align*}
      \spectrum(T_{12}T_{12}^*)\ =\
      \pointspec(T_{12}T_{12}^*)\ =\ \{\nu_j\, :\, j\in\N\}\
%       \subseteq\ (0,\, \infty),
      \quad\text{with}\quad
      0<\nu_1\le\nu_2\le\dots\
   \end{align*}
   where the eigenvalues are counted with their multiplicities, 
   then the block operator matrix $\mT$ has discrete point spectrum $\la_1\le\la_2\le\dots\la_N$  in $(\Dup, \lae)$. 
   More precisely, if $n_0$ is as in Theorem~\thmref{theorem:schurvar:var1}, that is,
   $n_0 = \min\limits_{\la>\Dup}\dim\specspace_{(-\infty,0)}\SC{1}(\la)$,
   then for all $ 1\le \eignumber\le N$
   the eigenvalues $\la_\eignumber$ of\ $\mT$ in $(\Dup, \lae)$ satisfy the estimates 
   \begin{alignat}{2}
      \label{eq:schurvar:lale}
      \la_{\eignumber}\ &\le\
      \frac{\abx}{2} \sqrt{\nu_{\eignumber+n_0}}  + 
      \sqrt{ \nu_{\eignumber+n_0} + {\ts\frac{1}{4}} (\abx\sqrt{\nu_{\eignumber+n_0}}+ |\ax-\Dcup|)^2  }
      \ + \ts\frac{1}{2} \left( \ax + \Dup  \right)
      ,\quad
      \\[2ex]
      \label{eq:schurvar:lage}
      \la_{\eignumber}\ &\ge\
      \sqrt{\nu_{\eignumber+n_0} + {\ts\frac{1}{4}} (\Aclo - \Dclo)^2 }\
      + \ts\frac{1}{2}(\Alo + \Dclo).
   \end{alignat}
% 
%%% -------------------------------------- %%%{{{
% %%% andere Reihenfolge der Terme
%    \begin{alignat}{2}
%       \label{eq:schurvar:lale}
%       \la_{\eignumber}\ &\le\ 
%       \frac{1}{2} \left( \ax + \Dup + \abx\sqrt{\nu_{\eignumber+n_0}} \right) + 
%       \sqrt{ \nu_{\eignumber+n_0} + {\ts\frac{1}{4}} (\abx\sqrt{\nu_{\eignumber+n_0}}+\|T_{22}\|+\ax)^2  }
%       \ ,\quad
%       &  1&\le \eignumber\le N,\\
%       \label{eq:schurvar:lage}
%       \la_{\eignumber}\ &\ge\ \frac{1}{2}(\Alo -\|T_{22}\|) + \sqrt{\nu_{\eignumber+n_0}}\ , 
%       & 1&\le \eignumber\le N.
%    \end{alignat}
%%% -------------------------------------- %%%}}}

\end{theorem}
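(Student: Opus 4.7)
The plan is to apply the variational characterisation of Theorem~\ref{theorem:schurvar:var2} and to evaluate (resp.\ bound) the minmax expression
\begin{align*}
   \la_\eignumber \;=\; \min_{\msub{L\subseteq\mathcal D(T_{12}^*)}{\dim L=\eignumber+n_0}}
      \ \max_{x\in \noz L}\ \sup_{y\in\noz{\mathcal D(T_{12})}} \la_+\mv{x\\y}
\end{align*}
on two carefully chosen classes of test subspaces, using the spectral decomposition $T_{12}T_{12}^*e_j = \nu_j e_j$ with eigenvectors $(e_j)_{j\in\N}$ forming an orthonormal basis of $\hil_1$ (which exists because $T_{12}T_{12}^*$ has purely discrete spectrum and $T_{12}^*$ is bijective). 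Together with the explicit formula for $\la_+\tv{x}{y}$ from \eqref{eq:schurvar:la} and Lemma~\thmref{lemma:schurvar:analysisII} for the monotonicity of $f(s,t)=s+t+\sqrt{(s-t)^2+\gamma^2}$, both bounds will essentially reduce to elementary quadratic inequalities.

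\emph{Upper bound.} I take the test subspace $L_\eignumber=\mathrm{span}(e_1,\dots,e_{\eignumber+n_0})$. By the spectral theorem every $x\in L_\eignumber\setminus\{0\}$ satisfies $\|T_{12}^*x\|^2\le \nu_{\eignumber+n_0}\|x\|^2$. Combining this with the $T_{12}^*$\nobreakdash-boundedness~\ref{A2} of $T_{11}$ via Cauchy--Schwarz gives $\scalar{x}{T_{11}x}/\|x\|^2 \le \ax + \abx\sqrt{\nu_{\eignumber+n_0}}$, while \ref{D1} yields $\scalar{y}{T_{22}y}/\|y\|^2\le \Dup$ and Cauchy--Schwarz yields $4|\scalar{y}{T_{12}^*x}|^2/(\|x\|^2\|y\|^2)\le 4\nu_{\eignumber+n_0}$. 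Since Lemma~\thmref{lemma:schurvar:analysisII} says that $f$ increases in each of its arguments (and $\sqrt{\cdot^2+\gamma^2}$ increases in $\gamma^2$), inserting these bounds into \eqref{eq:schurvar:la} and taking the supremum over $y$ and the maximum over $x\in L_\eignumber$ produces
\begin{align*}
   \la_\eignumber \;\le\; \tfrac{1}{2}(\ax+\abx\sqrt{\nu_{\eignumber+n_0}}+\Dup)
      + \tfrac{1}{2}\sqrt{(\ax-\Dup+\abx\sqrt{\nu_{\eignumber+n_0}})^2+4\nu_{\eignumber+n_0}},
\end{align*}
which is $\le$ the claimed bound~\eqref{eq:schurvar:lale} because $(a+b)^2\le(|a|+b)^2$ for $b\ge 0$.

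\emph{Lower bound.} Here I return to the characterisation $p(x)=\sup_y \la_+\tv{x}{y}$ and work with $\sig{1}^x$, which is more convenient than guessing a clever~$y$. Fix an arbitrary $L\subseteq\mathcal D(T_{12}^*)$ with $\dim L=\eignumber+n_0$ and set $M:=\mathrm{span}(e_1,\dots,e_{\eignumber+n_0-1})$. Since $M^\perp\subseteq\hil_1$ has codimension $\eignumber+n_0-1$, the intersection $L\cap M^\perp$ has dimension at least one, so I can pick $x\in L\cap M^\perp$ with $x\neq 0$. Expanding $x$ in the basis $(e_j)$ gives $\|T_{12}^*x\|^2\ge \nu_{\eignumber+n_0}\|x\|^2$. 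For any $\mu>\Dup$ the operator inequalities $T_{11}\ge\Aclo$ and $(\mu-T_{22})^{-1}\ge(\mu-\Dclo)^{-1}\mathrm{id}$ yield
\begin{align*}
   \sig{1}^x(\mu)/\|x\|^2 \;\ge\; (\Aclo-\mu) + \frac{\nu_{\eignumber+n_0}}{\mu-\Dclo}.
\end{align*}
The right-hand side is nonnegative precisely when $(\mu-\Aclo)(\mu-\Dclo)\le \nu_{\eignumber+n_0}$, i.e.\ when $\mu\le\mu_+$ for
$\mu_+:=\tfrac{1}{2}(\Aclo+\Dclo)+\sqrt{\nu_{\eignumber+n_0}+\tfrac{1}{4}(\Aclo-\Dclo)^2}$. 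Since $\sig{1}^x$ is strictly decreasing with unique zero $p(x)$ (Proposition~\thmref{prop:schurvar:summary}\thmref{item:schurvar:decrease}), this forces $p(x)\ge \mu_+$ (the case $\mu_+\le\Dup$ being trivial from $\la_\eignumber>\Dup$). Taking the maximum over $x\in L$ and then the minimum over $L$ yields~\eqref{eq:schurvar:lage}.

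The chief technical issue is really only bookkeeping: one has to check that the chosen $x\in L\cap M^\perp$ can genuinely be used in the minmax (which is why Theorem~\thmref{theorem:schurvar:var2} was stated in the form~\eqref{eq:schurvar:minmaxsup} with $\mathcal D(T_{12}^*)$), and one must handle the three cases for the position of $\mu_+$ relative to $\Dup$. Once these are verified, both bounds follow from the explicit formula for $\la_+$ and elementary monotonicity, without further spectral-theoretic input beyond what Propositions~\thmref{prop:schurvar:schur}--\thmref{prop:schurvar:inclusion} already provide.
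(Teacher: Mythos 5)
Your proof is correct. The \emph{upper bound} argument is essentially the one in the paper: a test subspace spanned by the first $\eignumber+n_0$ eigenvectors of $T_{12}T_{12}^*$, the inequalities \ref{item:schurvar:T11}, \ref{item:schurvar:Dup}, \ref{item:schurvar:CS}, the monotonicity Lemma~\thmref{lemma:schurvar:analysisII}, and the final $(a+b)^2\le(|a|+b)^2$ step.

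The \emph{lower bound} argument is a genuinely different and, I think, more economical route than the paper's. The paper fixes $y=T_{12}^{-1}x$ in the $\la_+$-functional, substitutes $\xi=T_{12}^{-1}x$, and is then left to show
$\minp_L\maxp_{\xi}\ \|T_{12}\xi\|/\|\xi\|\ \ge\ \sqrt{\nu_{\eignumber+n_0}}$, which it does by a second application of the variational principle, this time to the auxiliary block operator $\widetilde\mT_0=\tm{0}{T_{12}^*}{T_{12}}{0}$ together with Remark~\thmref{remark:schurvar:squares}. You instead bypass the choice of a clever $y$ and go back to the characterisation $\la_\eignumber=\minp_L\maxp_{x\in\noz L}p(x)$: by the Courant--Fischer intersection argument, every $L\subseteq\mD(T_{12}^*)$ of dimension $\eignumber+n_0$ contains a nonzero $x$ orthogonal to $e_1,\dots,e_{\eignumber+n_0-1}$, so $\|T_{12}^*x\|^2\ge\nu_{\eignumber+n_0}\|x\|^2$; the operator bounds $T_{11}\ge\Aclo$ and $(\mu-T_{22})^{-1}\ge(\mu-\Dclo)^{-1}$ (valid for $\mu>\Dup\ge\Dclo$) then force $\sig{1}^x(\mu)\ge 0$ on the window $\Dup<\mu\le\mu_+$, hence $p(x)\ge\mu_+$. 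The two arguments yield the same constant $\mu_+$, but yours avoids the detour through $\widetilde\mT_0$ and its variational principle, at the cost of requiring the eigenvectors of $T_{12}T_{12}^*$ to form an orthonormal basis of $\hil_1$ (which does hold under the theorem's hypothesis that $\sigma(T_{12}T_{12}^*)=\sigma_p(T_{12}T_{12}^*)$ is discrete with $\nu_1>0$). The only small point to spell out: you should note that once $\sig{1}^x(\mu_+)\ge 0$ for some $\mu_+>\Dup$, the function $\sig{1}^x$ does in fact have a zero (it is continuous, decreasing, and unbounded below by Proposition~\thmref{prop:schurvar:summary}\ref{item:schurvar:decrease}), so $p(x)$ is finite and the conclusion $p(x)\ge\mu_+$ is legitimate; you state this implicitly but it deserves a word.
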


\begin{proof}
   Since $T_{12}^*$ is closed, its resolvent is bounded by the closed graph theorem.
   Hence Proposition~\ref{prop:schurvar:summaryadd} yields that $(\Dup, \lae)\neq\emptyset$ so that all assumptions of Theorem~\thmref{theorem:schurvar:var2} are satisfied.
   In particular, the index shift $n_0$ is finite.
   To prove inequalities~\eqref{eq:schurvar:lale} and \eqref{eq:schurvar:lage},
   we estimate the right hand side of \eqref{eq:schurvar:minmaxmax}. 
   Note that $\mD(T_{22}) = \hil_2$ and that  
   \begin{enumerate}
      \item\label{item:schurvar:T11}
      $\scalar{x}{T_{11}x}\ \le\ |\scalar{x}{T_{11}x}|\ \le\ \|x\|\, \|T_{11}x\|\ \le\ \|x\|\,(\ax\|x\|+\abx\|T_{12}^*x\|)$, \quad $x\in\mD(T_{12}^*)$,

      \item\label{item:schurvar:Aclo}
      $\scalar{x}{T_{11}x}\ \ge\ \Aclo \|x\|^2$, \quad $x\in\mD(T_{11})$,

      \item\label{item:schurvar:Dup}
      $\scalar{y}{ T_{22}y}\ \le\ \Dup\|y\|^2, \quad y\in\hil_2$,

      \item\label{item:schurvar:Dclo}
      $\scalar{y}{ T_{22}y}\ \ge\ \Dclo\|y\|^2, \quad y\in\hil_2$,

      \item\label{item:schurvar:CS}
      $ |\scalar{y}{T_{12}^*x}|^2\ \le \|y\|^2\, \|T_{12}^*x\|^2$, \quad $x\in\mD(T_{12}^*)$,\ \ $y\in\hil_2$.
	 
   \end{enumerate}

   First we prove~\eqref{eq:schurvar:lale}.
   With the help of inequalities
   \ref{item:schurvar:T11}, \ref{item:schurvar:Dup} and \ref{item:schurvar:CS}
   and the auxiliary Lemma~\ref{lemma:schurvar:analysisII} we find for 
   $x\in\mD(T_{12}^*)\setminus\{0\}$, $y\in\mD(T_{12})\setminus\{0\}$:
   \begin{align*}
      \la_+\!\mv{x\\y}
      & = \frac{1}{2}\left(
	 \frac{\scalar{x}{T_{11}x}}{\|x\|^2} + \frac{\scalar{y}{T_{22}y}}{\|y\|^2} + 
	 \sqrt{ \biggl(\frac{\scalar{x}{T_{11}x}}{\|x\|^2} - \frac{\scalar{y}{T_{22}y}}{\|y\|^2}\biggr)^2 + \frac{4|\scalar{y}{T_{12}^*x}|^2}{\|x\|^2\, \|y\|^2} }\
      \right)
      \\[1ex]
      &\le\ \frac{1}{2} \left(
      \ax + \frac{\abx\|T_{12}^* x\|}{\|x\|} + \Dup + 
      \sqrt{ \biggl(\ax+ \frac{\abx\|T_{12}^*x\|}{\|x\|} - \Dup \biggr)^2
      + \frac{4|\scalar{y}{T_{12}^*x}|^2}{\|x\|^2\, \|y\|^2}
      }\
      \right)
      \\[1ex]
      &\le\ \frac{1}{2} \left( \ax + \Dup + \frac{\abx\|T_{12}^* x\|}{\|x\|} + 
      \sqrt{ \biggl( \frac{\abx\|T_{12}^*x\|}{\|x\|} + |\ax - \Dup| \biggr)^2 + \frac{4\,\|T_{12}^*x\|^2}{\|x\|} }\
      \right).
   \end{align*}
   The right hand side is independent of $y$ and monotonously increasing in $\|T_{12}^*x\|$.
   For given $n\in\N$ let $\mathscr L_n$ be an $n$-dimensional subspace of the spectral space $\specspace_{[\nu_1,\nu_n]}(T_{12}T_{12}^*) $. 
   Then for every $x\in\mathscr L_n$ we have that $\|T_{12}^*x\|^2=\scalar{x}{T_{12}T_{12}^*x}\le \nu_n\|x\|^2$. 

   Observe that $\mD(\SCfs{1})=\mD(T_{12})^*$, thus the minimax principle \eqref{eq:schurvar:minmaxmax} shows that 
   \begin{align*}
      \la_\eignumber\ &= 
      \minp_{ \msub{L\subseteq\mathcal D(T_{12})^*}{\dim L = \eignumber+n_0} }\ 
      \maxp_{x\in \noz{L}}\ 
      \sup_{y\in\noz{\mathcal D(T_{12})}}
      \la_+\mv{x\\y}\
      \le 
      \maxp_{ x\in \noz{\mathscr L_{\eignumber+n_0}} }\ 
      \sup_{y\in\noz{\mathcal D(T_{12})} } \la_+\mv{x\\y}\\[2ex]
      &\le\ \max_{x\in \noz{\mathscr L_{\eignumber+n_0}} }
      \frac{1}{2} \Biggl( \ax + \Dup + \frac{\abx \|T_{12}^*x\|}{\|x\|}
      + \sqrt{ \biggl(\frac{\abx\|T_{12}^*x\|}{\|x\|} + |\ax- \Dup| \biggr)^2
	 + \frac{4\,\|T_{12}^*x\|^2}{\|x\|^2}  }\hspace{1ex}
      \Biggr)\\[2ex]
      &\le\ 
      \frac{1}{2} \left( \ax + \Dup + \abx\sqrt{\nu_{\eignumber+n_0}} + 
	 \sqrt{  (\abx\sqrt{\nu_{\eignumber+n_0}}+ |\ax-\Dup| )^2 + 4\,\nu_{\eignumber+n_0}
	 }\,
      \right)
   \end{align*}
   which proves~\eqref{eq:schurvar:lale}.
   Note that if $\abx\sqrt{\nu_{n_0}} \ge \ax - \Dup$, then the above estimates is true with $\ax -\Dup$ instead of $|\ax -\Dup|$.
%    This proves that the eigenvalues grow at most of order $1$ with respect to $\sqrt{\nu_n}$. 
% In particular, if the sequence $(\nu_n)_n$ is bounded, then also $(\la_n)_n$ is bounded.
% \\

   In order to show~\eqref{eq:schurvar:lage}, we choose a particular $y\in\mD(T_{12})$. 
   Since by assumption $T_{12}^{*-1}$ exists and is bounded by $b^{-1}$,  also $T_{12}^{-1}$ exists and is bounded by $b^{-1}$.
   For every $x\in\mD(T_{12}^*)$ the element $y(x):=T_{12}^{-1}x$ exists and lies in $\mD(T_{12})$. 
   Therefore, again by~\eqref{eq:schurvar:minmaxmax} and the inequalities
   \ref{item:schurvar:Aclo} and \ref{item:schurvar:Dclo} we obtain
   \begin{align}
      \nonumber
      \la_\eignumber\ &= 
      \minp_{ \msub{L\subseteq\mathcal D(T_{12}^*)}{\dim L = \eignumber+n_0} }\ 
      \maxp_{x\in \noz{L}}\ 
      \sup_{y\in\noz{\mathcal D(T_{12})}}
      \la_+\mv{x\\y} \\[1ex]
      \label{eq:schurvar:T22}
      &\ge 
      \minp_{ \msub{L\subseteq\mathcal D(T_{12}^*)}{\dim L = \eignumber+n_0} }\ 
      \maxp_{x\in \noz{L}}\ 
      \sup_{y\in\noz{\mathcal D(T_{12})}}\
      \frac{1}{2}\bigg(
      \Alo + \Dclo 
      \sqrt{ (\Alo - \Dclo)^2 +  \frac{4\,|\scalar{x}{T_{12}y}|^2}{\|x\|^2\, \|y\|^2} }
      \bigg)\\[1ex]
      \nonumber
      &\ge 
      \minp_{ \msub{L\subseteq\mathcal D(T_{12}^*)}{\dim L = \eignumber+n_0} }\ 
      \max_{x\in \noz{L}}\ 
      \frac{1}{2}\bigg(
      \Alo + \Dclo 
      + \sqrt{ (\Alo - \Dclo)^2 + \frac{4\,\scalar{x}{x}^2}{\|T_{12}^{-1}x\|^2\,\|x\|^2} }
      \bigg)\\[1ex]
      \nonumber
      &=
      \minp_{ \msub{L\subseteq\mathcal D(T_{12}^*)}{\dim L = \eignumber+n_0} }\ 
      \max_{x\in \noz{L}}\ 
      \frac{1}{2}\bigg(
      \Alo + \Dclo 
      + \sqrt{ (\Alo - \Dclo)^2 + 4\, \|T_{12}^{-1}x\|^{-2} \|x\|^2 }
      \bigg)
      \\[1ex]
      \label{eq:schurvar:laBelow}
      &=\ 
      \frac{1}{2}( \Alo + \Dclo )
      + \sqrt{ {\ts\frac{1}{4}}(\Alo - \Dclo)^2 +
      \Big(\smash{
      \minp_{ \msub{L\subseteq\mathcal D(T_{12}^*)}{\dim L = \eignumber+n_0} }\  }
      \max_{x\in \noz{L}}\ 
      \|T_{12}^{-1}x\|^{-1} \|x\| \Big)^2 }
      .
   \end{align}
%    If $L_n\subseteq\mD(T_{12}^*)$ is an $n$-dimensional subspace , also the subspace
For every $n$-dimensional subspace $L_n\subseteq\mD(T_{12}^*)$, also $T_{12}^{-1}L_n\subseteq\mD(T_{12}^*T_{12})$ is $n$-dimen\-sional. 
   Hence it follows that 
   \begin{align*}
      \minp_{ \msub{L\subseteq\mathcal D(T_{12}^*)}{\dim L = \eignumber+n_0} }\ 
      \max_{x\in\noz{L}}\ 
      \|T_{12}^{-1}x\|^{-1}\,\|x\|
      \ &=\ 
      \minp_{ \msub{L\subseteq\mathcal D(T_{12}^*)}{\dim L = \eignumber+n_0} }\ 
      \max_{\xi\in T_{12}^{-1}\noz{L}}\ 
      \|\xi\|^{-1}\,\|T_{12}\xi\| \\
      \ &=\ 
      \min_{\msub{ L\subseteq\mathcal D(T_{12}^*T_{12}) }{\dim L = \eignumber+n_0} }\ 
      \max_{\xi\in \noz{L}}\ 
      \|\xi\|^{-1}\,\|T_{12}\xi\|.
   \end{align*}
   The squares of the nonzero eigenvalues of 
   $\widetilde\mT_0=\tm{0}{T_{12}^*}{T_{12}}{0} = 
   \tm{0}{\vphantom{T_{11}}\id}{\vphantom{T_{11}}\id}{0} \mT_0 
   \tm{0}{\vphantom{T_{11}}\id}{\vphantom{T_{11}}\id}{0} $
   are the eigenvalues $\nu_1\le\nu_2\le\dots$ of $T_{12}T_{12}^*$
   (see Remark~\thmref{remark:schurvar:squares}).
   On the other hand, the variational principle of Theorem~\thmref{theorem:schurvar:var2} applied to $\widetilde\mT_0$ shows that
   \begin{align*}
      \sqrt{\nu_n}\ =\ \la_n\ &=\ 
      \min_{ \msub{L\subseteq\mathcal D(T_{12}^*T_{12})}{\dim L=n} }\
      \max_{\xi\in \noz{L}}\ \max_{y\in\noz{\mathcal D(T_{12}^*)}}\
      \frac{|\scalar{y}{T_{12}\xi}|}{\|y\|\,\|\xi\|}  \\
      &\le\ 
      \min_{ \msub{L\in\mathcal D(T_{12}^*T_{12})}{\dim L=n} }\
      \max_{\xi\in \noz{L}}\ 
      \frac{|\scalar{T_{12}\xi}{T_{12}\xi}|}{\|T_{12}\xi\|\,\|\xi\|}  
      \ =\ 
      \min_{ \msub{L\in\mathcal D(T_{12}^*T_{12})}{\dim L=n} }\
      \max_{\xi\in \noz{L}}\ 
      \frac{\|T_{12}\xi\|}{\|\xi\|}.
   \end{align*}
   Inserting into \eqref{eq:schurvar:laBelow} yields
   \begin{align*}
      \la_\eignumber\ \ge\ \ts\frac{1}{2}(\Alo + \Dclo)
      + \sqrt{
      {\ts\frac{1}{4}}(\Alo - \Dclo)^2 + \nu_{\eignumber+n_0}}.
      &\qedhere
   \end{align*}

\end{proof}

\begin{remark}
Theorem~\thmref{theorem:schurvar:estimate} can be regarded as a perturbation result for the eigenvalues of the block operator matrix $\tm{0}{T_{12}}{T_{12}^*}{0}$ under the unbounded perturbation $\tm{T_{11}}{0}{0}{T_{22}}$ since in the case $T_{11}=T_{22}=0$ the spectral shift $n_0$ vanishes  and the estimates~\eqref{eq:schurvar:lale} and \eqref{eq:schurvar:lage} reduce to $\la_\eignumber = \sqrt{\nu_\eignumber}$.
If the sequence $(\nu_n)_n$ of the eigenvalues of $T_{12}T_{12}^*$ is unbounded, then $\la$ has the same asymptotics as $\sqrt{\nu_n}$,
i.e., $\frac{\la_n}{\sqrt{\nu_{n+n_0}} } \to 1$ for $n\to\infty$.
\end{remark}

If also the operator $T_{11}$ is bounded, then the estimate for $\la_n$ from above can be further improved.

\begin{theorem}\label{theorem:schurvar:var3}
   In addition to the assumptions in Theorem~\thmref{theorem:schurvar:estimate}, let $T_{11}$ and $T_{22}$ be bounded.
   Then there are real numbers $\Aclo$ and $\Acup$ such that 
   \begin{alignat*}{3}
      \Aclo\,\|x\|^2\ &\le\ \scalar{x}{T_{11}x}\ &\le&\ \Acup\,\|x\|^2,
      \qquad 
      &x\in\hil_1.
   \end{alignat*}
   Let $n_0=\min\limits_{\la>\Dcup}\dim\specspace_{(-\infty,0)}\SC{1}(\la)$.
   Then the eigenvalues of the block operator matrix $\mT$ in $(\Dcup,\, \lae)$, 
   enumerated such that $\Dcup < \la_1\le \la_2\le \dots $, 
   can be estimated by
   \begin{alignat}{2}
      \label{eq:schurvar:Est}
      \la_\eignumber\ &\le\
      \sqrt{\nu_{\eignumber+n_0} + \ts\frac{1}{4} (\Acup - \Dcup)^2 }\
      + \ts\frac{1}{2} ( \Acup + \Dcup ),
      \qquad &1\le \eignumber\le N, \\[2ex]
      \la_\eignumber\ &\ge\
      \sqrt{\nu_{\eignumber+n_0} + \ts\frac{1}{4} (\Aclo - \Dclo)^2 }\
      + \ts\frac{1}{2} (\Aclo + \Dclo) ,
      \qquad &1\le \eignumber\le N.
   \end{alignat}
   where $0\, <\, \nu_1\, \le\, \nu_2\, \le\, \dots $ are the eigenvalues of $T_{12}^{}T_{12}^*$, see Theorem~\thmref{theorem:schurvar:estimate}.
   The index shift $n_0$ is given by $n_0=\min\limits_{\la>\Dcup}\dim\specspace_{(-\infty,0)} \SC{1}(\la)$.
\end{theorem}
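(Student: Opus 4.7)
The plan is to imitate the argument of Theorem~\ref{theorem:schurvar:estimate}, but exploit the new assumption that $T_{11}$ is \emph{bounded from above} by $\Acup$ (rather than merely $T_{12}^*$-bounded) to sharpen the upper estimate. For the lower bound, the proof of Theorem~\ref{theorem:schurvar:estimate} already delivers exactly the inequality claimed here, since that argument used only $\scalar{x}{T_{11}x}\ge \Aclo\|x\|^2$ and $\scalar{y}{T_{22}y}\ge \Dclo\|y\|^2$, and the bijectivity of $T_{12}^*$, none of which are affected by adding the new boundedness hypotheses. I would therefore simply cite that part of the proof.

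For the upper bound I would start from the variational characterisation \eqref{eq:schurvar:minmaxmax} of $\la_\eignumber$, which is applicable because all hypotheses of Theorem~\ref{theorem:schurvar:var2} are satisfied: $T_{22}$ is bounded by \ref{D2a}, $T_{12}^*$ is bijective, so $(\Dcup,\lae)\neq\emptyset$ by Proposition~\ref{prop:schurvar:summaryadd}, and moreover $\mD(\SC{1}(\la))$ is independent of $\la$ because now $T_{11}$ is everywhere defined. Fix $x\in\mD(T_{12}^*)\setminus\{0\}$ and $y\in\mD(T_{12})\setminus\{0\}$; using $\scalar{x}{T_{11}x}\le \Acup\|x\|^2$, $\scalar{y}{T_{22}y}\le \Dcup\|y\|^2$ and the Cauchy--Schwarz estimate $|\scalar{y}{T_{12}^*x}|^2\le \|y\|^2\|T_{12}^*x\|^2$ in the explicit formula \eqref{eq:schurvar:la} for $\la_+\tv{x}{y}$, together with the monotonicity provided by Lemma~\ref{lemma:schurvar:analysisII}, yields
\begin{align*}
   \la_+\mv{x\\y}\ \le\ \tfrac12\Bigl(\Acup+\Dcup + \sqrt{(\Acup-\Dcup)^2 + 4\,\|T_{12}^*x\|^2/\|x\|^2}\,\Bigr).
\end{align*}
The right hand side is independent of $y$, so it is also an upper bound for $\sup_{y}\la_+\tv{x}{y}$.

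To pass to the minimax, I would pick for each $\eignumber$ an $(\eignumber+n_0)$-dimensional test subspace $\mathscr L_{\eignumber+n_0}\subseteq \specspace_{[\nu_1,\nu_{\eignumber+n_0}]}(T_{12}T_{12}^*)\subseteq \mD(T_{12}^*)$, on which $\|T_{12}^*x\|^2 = \scalar{x}{T_{12}T_{12}^*x}\le \nu_{\eignumber+n_0}\|x\|^2$. Inserting this into the previous bound and using \eqref{eq:schurvar:minmaxmax} gives
\begin{align*}
   \la_\eignumber\ \le\ \max_{x\in\noz{\mathscr L_{\eignumber+n_0}}}\sup_{y\in\noz{\mathcal D(T_{12})}}\la_+\mv{x\\y}\ \le\ \tfrac12(\Acup+\Dcup)+\sqrt{\tfrac14(\Acup-\Dcup)^2+\nu_{\eignumber+n_0}},
\end{align*}
which is exactly \eqref{eq:schurvar:Est}. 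The existence of such an $\mathscr L_{\eignumber+n_0}$ is guaranteed by the hypothesis that $T_{12}T_{12}^*$ has purely discrete spectrum with the listed eigenvalues; finiteness of $n_0$ follows from Proposition~\ref{prop:schurvar:n0finite}(i) once we observe that $T_{12}^*$ is compactly invertible (a consequence of its bijectivity together with the discreteness of $\spectrum(T_{12}T_{12}^*)$) and $T_{11}$ is bounded. There is no genuine obstacle here; the only delicate point is bookkeeping with Lemma~\ref{lemma:schurvar:analysisII} to be sure the monotonicity is applied in the right direction so that each pointwise replacement $\scalar{x}{T_{11}x}/\|x\|^2\mapsto \Acup$ and $\scalar{y}{T_{22}y}/\|y\|^2\mapsto \Dcup$ indeed increases $\la_+$.
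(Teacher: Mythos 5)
Your proposal is correct and follows exactly the paper's own argument: the lower bound is inherited verbatim from Theorem~\ref{theorem:schurvar:estimate}, and the upper bound is obtained by substituting the bound $\scalar{x}{T_{11}x}\le\Acup\|x\|^2$ for the relative bound on $T_{11}$ in the proof of \eqref{eq:schurvar:lale}, applying Lemma~\ref{lemma:schurvar:analysisII} for monotonicity, and then testing the minimax \eqref{eq:schurvar:minmaxmax} on a spectral subspace of $T_{12}T_{12}^*$. The paper simply says ``by a reasoning analogous to that of the proof of \eqref{eq:schurvar:lale}'' where you have spelled the test-space step out; your extra remark about finiteness of $n_0$ via compact invertibility is unnecessary (it is already among the hypotheses of Theorem~\ref{theorem:schurvar:estimate}), but harmless.
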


\begin{proof}
   We only need to show \eqref{eq:schurvar:Est}.
   If we use $\scalar{x}{T_{11}x} \le \Acup\,\|x\|^2$, $x\in\hil_1$
   instead of inequality \ref{item:schurvar:T11} 
   in the proof of formula~\eqref{eq:schurvar:lale},
   we obtain with the help of the auxiliary Lemma~\thmref{lemma:schurvar:analysisII}
    \begin{align*}
      \la_+\mv{x\\y}\!
      &= \frac{1}{2}\Biggl( \!\scalar{x}{T_{11}x} + \scalar{y}{T_{22}y}
	    +
	 \sqrt{ ( \scalar{x}{T_{11}x}-\scalar{y}{T_{22}y} )^2 + 4|\scalar{y}{T_{12}^*x}|^2}
      \Biggr)\\[1ex]
      &\le\ \frac{1}{2}(\Acup +\Dcup) + 
      \sqrt{ \ts\frac{1}{4} (\Acup - \Dcup)^2  +\|T_{12}^*x\|^2}
   \end{align*}
   for all $\ttv{x}{y}\in\mD(\mT)$ with $\|x\|=\|y\|=1$.
   Now formula \eqref{eq:schurvar:Est} follows by a reasoning analogous to that of the proof of \eqref{eq:schurvar:lale}.
\end{proof}

\section{Application to the angular part of the Dirac equation in the Kerr-Newman background metric} %%%{{{
\label{sec:application}

%%%{{{
The Kerr-Newman metric describes the spacetime in the exterior of an electrically charged rotating massive black hole.
A spin-$\frac{1}{2}$ particle with mass $m$ and electrical charge $e$ outside the black hole obeys the Dirac equation
\begin{align}\label{eq:DE}
   (\widehat\FA + \widehat\FR)\widehat\Psi\, =\, 0
\end{align}
where $\widehat\Psi$ is a four-component wave function describing the particle and $\widehat\FA$ and $\widehat\FR$ are $4\times 4$ differential expressions that contain partial derivatives with respect to all four spacetime coordinates.
It can be shown that by a suitable ansatz the Dirac equation~\eqref{eq:DE} can be decoupled into a system of two ordinary differential equations (\cite{chandrasekhar}, \cite{thesis}, \cite{WY06}): the radial equation that contains only derivatives with respect to the radial coordinate and the angular equation that contains only derivatives with respect to the angular coordinate $\theta$.
For recent results on the radial equation see~\cite{Schmid} and \cite{WY06}.
The angular equation is given by
\begin{align*}
   (\FA-\la)\Psi = 0 
   \hspace{4ex}\text{on } 
   \hspace{1ex} (0,\pi).
\end{align*}
with the differential expression 
\begin{align}\label{eq:FA}
   \FA\, =\, 
   \begin{pmatrix}
      -am\cos\theta & \FB_+ \\
      \FB_- & am\cos\theta 
   \end{pmatrix}
   \quad\text{ on }\quad (0,\pi),
\end{align}
where
\begin{align}
   \FB_\pm\, =\, \pm\diff{\theta} + \frac{k+\frac{1}{2}}{\sin\theta} + a\om\sin\theta.
\end{align}
The number $k\in\Z$ describes the motion of the electron in the plane of symmetry. The parameter $a := J/M \in\R$ describes the rotation of the black hole where $J$ is the angular momentum and $M$ is the mass of the black hole.

For the following results on the angular operator we refer to~\cite{thesis}.
Let 
\begin{align*}
   \hil\, :=\, \Ltwo{(0,\pi)}{\rd\theta} \times \Ltwo{(0,\pi)}{\rd\theta}
\end{align*}
with the norm $\|\ttv{f}{g}\| = \sqrt{ \ltwo{f} + \ltwo{g} }\ $ where $\ltwo{\cdot}$ denotes the usual norm on $\Ltwo{(0,\pi)}{\rd\theta}$.
It can be shown that on $\hil$ the formal expression $\FA$ has the unique selfadjoint realisation
\begin{align}\label{eq:angschur:angularOp}
   \mA\ =\ \begin{pmatrix} -D & B \\B^* & D \end{pmatrix},\qquad
   \mD(\mA) = \mD(B^*)\oplus\mD(B)
\end{align}
where $D$ is the operator of multiplication  by the function $(0,\pi)\rightarrow\R,\ \theta\mapsto am\cos\theta$ and $B$ is the first order differential operator given by
\begin{align*}
   \mD(B) = \bigl\{f\in\hil\, :\, f \text{ is absolutely continuous, }\FB_+ f \in\hil \bigr\},
   \quad B f = \FB_+ f.
\end{align*}
$B$ is closed and its inverse can be computed explicitely.
It turns out that $B^{-1}$ is a Hilbert-Schmidt operator, hence it is compact.
Consequently, also the angular operator $\mA$ in the special case $a=0$ is compactly invertible.
Since all terms in $\mA$ involving the parameter $a$ are bounded, a perturbation argument shows that $\mA$ is compactly invertible for every value of $a$.
Furthermore, it can be shown that the spectrum of $\mA$ consists of simple eigenvalues only. 

\begin{remark}\label{remark:symmetry}
   It can be shown that if $\theta \mapsto (f(\theta), g(\theta))^t$ is an eigenfunction of $\mA$, then it follows that there is a $\gamma\in\C$ with $|\gamma| = 1$ such that $(g(\pi-\theta), f(\pi-\theta))^t = \gamma (f(\theta), g(\theta))^t$, $\theta\in(0,\pi)$.

   \par
   In the special case $a=0$ the eigenvalues can be calculated explicitly; one obtains
   \begin{align*}\ts
      \pointspec(\mA)\, =\, \{\sign(n)( |k+\frac{1}{2}| -\frac{1}{2} + n )\, :\, n\in\Z\setminus\{0\} \}
      \qquad\text{for}\quad a=0.
   \end{align*}
\end{remark}

The operator $BB^*$ is a Sturm-Liouville operator with spectrum consisting of discrete simple eigenvalues $0< \nu_1< \nu_2 < \dots $ only. 
By Sturm's comparison theorem we obtain the following two-sided estimates for the eigenvalues $\nu_n$ of $T_{12}T_{12}^*$:
\begin{align}\label{eq:sturm}
   \ts
   \max\{0,\, (|k+\frac{1}{2}|-\frac{1}{2}+n)^2 + \Omega_-\}
   \,\le\, \nu_n\le\,
   (|k+\frac{1}{2}|-\frac{1}{2}+n)^2 + \Omega_+
\end{align}
with 
\begin{align}\label{eq:Omega}
   \Omega_-\, &=\,\ts 2(k+\frac{1}{2}) a\om - | a\om|,& %\\
   \Omega_+\, &=\, \begin{cases}
      a^2\om^2 + \frac{1}{4} + 2(k+\frac{1}{2})a\om\hspace{3ex}
      &\text{if } 2a\om\notin [-1,\,1\,],\\[1ex]
      2(k+\frac{1}{2}) a\om + | a\om|
      &\text{if } 2a\om\in [-1,\,1\,].
   \end{cases}
\end{align}

\begin{remark}
   In the case $a=0$ these estimates give the correct eigenvalues of $BB^*$.
   The corresponding eigenfunctions are hypergeometric functions.
%    \hfill\remarkend
\end{remark}

For the block operator matrix $\mA$, the Schur complement for $\la\in\rho(D)$ is given by
\begin{align}\label{eq:angschur:schur}
   \mD(\SC{1}(\la))\ &=\ \{f\in\mD(B^*)\, :\, (D-\la)^{-1}B^* f\in\mD(B) \},\\
   \SC{1}(\la)\ &=\ -D-\la - B(D-\la)^{-1}B^*.
\end{align}

\begin{lemma}\label{lemma:angschur:prerequisites}
   The angular operator fulfils conditions \ref{mT1}, \ref{B1}, \ref{B2}, \ref{A1}, \ref{A2}, \ref{D1} and \ref{D2a} of the preceding section, in particular, we have 
   \begin{alignat}{6}%\label{eq:angschur:Dbounds}
	 \label{eq:angschur:angopA1}\tag{\AOne$'$}
	 \Aclo\ &:=\ & -|am|\ &\le\
	 \scalar{x}{-Dx}\ &\le &\ |am|
	 &\ =:\ &\Acup,\qquad
	 &x&\in\hil\\[0.5ex]
	 \label{eq:angschur:angopD1}\tag{\DOne$'$}
	 \Dclo\ &:=\ & -|am|\ &\le\
	 \scalar{x}{Dx}\ &\le &\ |am|
	 &\ =:\ &\Dcup,
	 &x&\in\hil,\\[0.5ex]
	 \label{eq:angschur:angopA2}\tag{\ATwo$'$), (\DTwoa$'$}
	 &&&&\makebox[0cm][r]{$\|{-D}\,\|\ =\ \|D\,\|\ $}= &\ |am|\makebox[0cm][l]{,}
   \end{alignat}
   and $\spectrum(D)=\spectrum(-D)=\essspec(D) = [-|am|, |am|\,]$.
   For all $\la\in\mathbb (|am|,\infty)$, the form
   \begin{align}
      \mD(\SCfs{1}(\la)) = \mD(B^*),
      \qquad 
      \SCfs{1}(\la)[f,g]\ :=\ \scalar{f}{(-D-\la)g} - \scalar{B^*f}{(D-\la)^{-1}B^*g},
   \end{align}
   is symmetric, semibounded from below and closed. 
   Further, the operator $\SC{1}(\la)$ is the selfadjoint operator associated with $\SCfs{1}(\la)$, 
   % das Komma oben gehoert hin! 
   and its domain is independent of $\la$, more precisely, we have
   \begin{align}\label{eq:angschur:SDdomain}
      \mD(\SC{1}(\la))\ =\ \mD(BB^*), 
      \qquad \la\in(|am|,\infty).
   \end{align}
\end{lemma}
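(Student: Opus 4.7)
The plan is to verify the conditions from Section~\ref{sec:SC} one by one and then invoke the relevant propositions already established. I would start by noting that $D$ is the bounded multiplication operator associated with the continuous function $\theta\mapsto am\cos\theta$, so $\|D\| = \|{-D}\| = |am|$ and the two-sided bounds \eqref{eq:angschur:angopA1}–\eqref{eq:angschur:angopD1} are immediate. Condition (\DTwoa) holds since $D$ is bounded, and (\ATwo) is trivial with $\widetilde\a=0$ because $-D$ is bounded. Condition (\mT1) is the definition of $\mA$, and (\BOne) follows from the stated fact that $B$ is closed and densely defined with adjoint $B^*$. For (\BTwo) I would use that $B$ has a bounded inverse (since $B^{-1}$ is Hilbert–Schmidt, it is in particular bounded); then $(B^*)^{-1} = (B^{-1})^*$ is also bounded, so $B^*$ is surjective.

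For the spectrum of $D$, I would argue that multiplication by the continuous, strictly monotone function $\theta\mapsto am\cos\theta$ on $(0,\pi)$ has spectrum equal to the essential range $[-|am|,|am|]$; since the preimage of any single point is a null set, $D$ has no point spectrum, so $\spectrum(D)=\essspec(D) = [-|am|,|am|]$. With the conditions verified, the assertions about $\SCfs{1}(\la)$ follow directly from Proposition~\ref{prop:schurvar:form}, and the identity $\SC{1}(\la) = \SCmin{1}(\la)$ together with selfadjointness follows from Proposition~\ref{prop:schurvar:inclusion}.

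The main obstacle is the $\la$-independence of $\mD(\SC{1}(\la))$ and its identification with $\mD(BB^*)$. For $\la>|am|$ the resolvent $(D-\la)^{-1}$ is multiplication by the $C^\infty$ function $\phi_\la(\theta) := (am\cos\theta-\la)^{-1}$, which together with its derivative is bounded on $(0,\pi)$ and bounded away from zero. Writing $B = \frac{d}{d\theta}+q(\theta)$ with $q(\theta)=(k+\frac12)/\sin\theta + a\om\sin\theta$, the Leibniz rule yields
\begin{align*}
   B(\phi_\la\, u) \,=\, \phi_\la'\, u + \phi_\la\, Bu
\end{align*}
for every absolutely continuous $u$ for which both sides make sense.

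Applying this with $u=B^*f$, I would argue as follows. If $f\in\mD(BB^*)$, then $B^*f$ is absolutely continuous with $BB^*f\in\LLtwo$, and since $\phi_\la,\phi_\la'$ are bounded, the identity shows $\phi_\la B^*f\in\mD(B)$. Conversely, if $\phi_\la B^*f\in\mD(B)$, then $\phi_\la B^*f$ is absolutely continuous and, dividing by the smooth nonvanishing function $\phi_\la$, so is $B^*f$; the Leibniz identity then gives $\phi_\la\, B(B^*f) = B(\phi_\la B^*f) - \phi_\la'B^*f \in\LLtwo$, and since $\phi_\la$ is bounded below in modulus by $(\la-|am|)^{-1}$, we conclude $BB^*f\in\LLtwo$, i.e.\ $f\in\mD(BB^*)$. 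The delicate point is the unboundedness of $q$ at the endpoints; however, this is irrelevant here because the identity relates $B(\phi_\la u)$ to $Bu$ directly through $\phi_\la$ and $\phi_\la'$, both of which are bounded on all of $(0,\pi)$. This establishes \eqref{eq:angschur:SDdomain} and completes the lemma.
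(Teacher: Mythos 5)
Your proposal is correct and follows essentially the same route as the paper's proof: the verification of the conditions is routine, the form and selfadjointness statements are imported from Propositions~\ref{prop:schurvar:form} and~\ref{prop:schurvar:inclusion}, and the domain identity $\mD(\SC{1}(\la))=\mD(BB^*)$ is established by the same Leibniz computation with the smooth, bounded, bounded-below multiplier $\phi_\la=(am\cos\theta-\la)^{-1}$, in both directions. Your remark that the singular part $q$ of $B$ drops out of the commutator and only $\phi_\la$ and $\phi_\la'$ matter is exactly the point the paper's calculation implicitly exploits.
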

\begin{proof}
   Since $D$ is the bounded operator given by multiplication with the continuous, nowhere constant function $am\cos\theta$, $\theta\in(0,\pi)$, the assertions concerning the spectrum of $D$ and relations~\eqref{eq:angschur:angopA1}, \eqref{eq:angschur:angopD1} and \eqref{eq:angschur:angopA2} are clear.
   Hence conditions \ref{A1} and \ref{D1} are satisfied with $\Dcup = |am|$ and $\Alo = -|am|$, 
   and \ref{A2} and  \ref{D2a} hold because $D$ is bounded with $\|D\|=|am|$.
%    Conditions \ref{mT1} and \ref{B1} hold because we have already shown in section~\ref{section:angular:weylApplication} that the angular operator~\eqref{eq:angschur:angularOp} is selfadjoint for every $a\in\mathbb R$. 
   Since $\sigma(D)=[-|am|, |am|\,]$, the sesquilinear forms $\SCfs{1}(\la)$, $\la\in(|am|,\infty)$, are well defined, and, by Proposition~\thmref{prop:schurvar:form}, they are symmetric, semibounded from below and closed.
   Proposition~\thmref{prop:schurvar:inclusion} implies that for $\la\in(|am|,\infty)$ the operator  $\SC{1}(\la)$ is the selfadjoint operator associated with $\SCfs{1}(\la)$.\\
   To prove~\eqref{eq:angschur:SDdomain}, fix $f\in\mD(BB^*)$ and $\la\in(|am|, \infty)$. 
   We have to show $(D-\la)^{-1}B^*f\in\mD(B)$.
   Since both $(am\cos\theta -\la)^{-1}$  and $B^*f$ are absolutely continuous, we have
   \begin{multline*}
      \FB_+ (D-\la)^{-1}B^*f(\theta)\\
      \begin{aligned}
	 &=\
	 \Bigl( \diff{\theta} + \frac{k+\frac{1}{2}}{\sin\theta} + a\om\sin\theta \Bigr)
	 (am\cos\theta -\la)^{-1}
	 \Bigl( -\diff{\theta} + \frac{k+\frac{1}{2}}{\sin\theta} + a\om\sin\theta \Bigr) f(\theta) \\[1ex]
	 &=\ 
	 (am\cos\theta -\la)^{-1}
	 \Bigl( \diff{\theta} + \frac{k+\frac{1}{2}}{\sin\theta} + a\om\sin\theta \Bigr)
	 \Bigl( -\diff{\theta} + \frac{k+\frac{1}{2}}{\sin\theta} + a\om\sin\theta \Bigr) f(\theta)\\
	 &\qquad + \Bigl( \diff{\theta} (am\cos\theta -\la)^{-1} \Bigr) 
	 \Bigl( -\diff{\theta} + \frac{k+\frac{1}{2}}{\sin\theta} + a\om\sin\theta \Bigr) f(\theta)\\[1ex]
	 &=\ 
	 (D-\la)^{-1} BB^* f(\theta) +
	 \Bigl( \diff{\theta} (am\cos\theta -\la)^{-1} \Bigr) B^*f(\theta).
      \end{aligned}
   \end{multline*}
   Observe that the first term on the first line is the formal differential expression associated with $B$.
   Since, by assumption, $f\in\mD(BB^*)$ and since both $(D-\la)^{-1}$ and $\diff{\theta}(am\cos\theta-\la)^{-1}$ are bounded operators on $\hil$, it follows that $(D-\la)^{-1}B^*f\in\mD(B)$, and consequently $f\in\mD(\SC{1}(\la))$.\\
   Conversely, assume $f\in\mD(\SC{1}(\la))$ for some $\la\in(|am|, \infty)$.
   The function $am\cos\theta-\la$ is differentiable on $(0,\pi)$, hence we have
   \begin{multline*}
      \FB_+ B^*f\ =-\
      \Bigl( \diff{\theta} + \frac{k+\frac{1}{2}}{\sin\theta} + a\om\sin\theta \Bigr)
      B^*f(\theta)\\
      \begin{aligned}
	 &=\ 
	 \Bigl( \diff{\theta} + \frac{k+\frac{1}{2}}{\sin\theta} + a\om\sin\theta \Bigr)
	 (am\cos\theta-\la)(am\cos\theta -\la)^{-1} \\
	 & \phantom{=\ \bigl(} \times 
	 \Bigl( -\diff{\theta} + \frac{k+\frac{1}{2}}{\sin\theta} + a\om\sin\theta \Bigr) f(\theta)\\[1ex]
	 &=\ 
	 (am\cos\theta -\la)
	 \Bigl( \diff{\theta} + \frac{k+\frac{1}{2}}{\sin\theta} + a\om\sin\theta \Bigr)
	 (am\cos\theta -\la)^{-1}\\
	 &\hspace{11pt} \times 
	 \Bigl( -\diff{\theta} + \frac{k+\frac{1}{2}}{\sin\theta} + a\om\sin\theta \Bigr) f(\theta)\\
	 &\hspace{11pt} + 
	 \Bigl( \diff{\theta} (am\cos\theta -\la)\Bigr) 
	 (am\cos\theta -\la)^{-1} 
	 \Bigl( -\diff{\theta} + \frac{k+\frac{1}{2}}{\sin\theta} + a\om\sin\theta \Bigr) f(\theta)\\[1ex]
	 &=\ 
	 (D-\la)
	 B(D-\la)^{-1} B^* f(\theta) +
	 \Bigl( \diff{\theta} (am\cos\theta -\la) \Bigr)(D-\la)^{-1} B^*f(\theta).
      \end{aligned}
   \end{multline*}
   Since $D-\la$ and $\diff{\theta}(am\cos\theta-\la)$ are bounded operators on $\hil$, it follows that the function above is also an element of $\hil$, hence we have $B^*f\in\mD(B)$ which implies that $f\in\mD(BB^*)$.
\end{proof}

%%%}}}

\subsection{Explicit bounds for the eigenvalues of $\mA$} %%%{{{
\label{subsec:explicite}

As mentioned already earlier the spectrum of the angular operator consists only of isolated simple eigenvalues without accumulation points in $(-\infty,\infty)$.
We also know that the eigenvalues depend continuously on the parameter $a$. 
To express this dependence explicitly we frequently write $\la_n(a)$.
Therefore we can enumerate the eigenvalues $\la_{\Aeignumber}(a),\ \Aeignumber\in\Z\setminus\{0\}$, unambiguously by requiring that 
$\la_{\Aeignumber}(a)$ is the analytic continuation of $\la_n(0)=\sign(n)\bigl(|k+\frac{1}{2}|-\frac{1}{2}+ |n| \bigr)$ in the case $a=0$.
Since all eigenvalues are simple, it follows that $\la_n(a)<\la_m(a)$ for $n<m$.

For fixed Kerr parameter $a$ we define $m_\pm\in\Z$ such that
\label{first:mplus}
\begin{multline*}
   \dots\ \le\ \la_{m_--2}(a)\ \le\ \la_{m_--1}(a)\ <\ 
   -|am|\ \le\ \la_{m_-}(a)\ \le\ 
   \dots \\
   %    \dots\ \la_{-1}<\ 0\ <\ \la_1\ 
   \dots
   \le\ \la_{m_+}(a)\ \le\ |am| 
   <\ \la_{m_++1}(a)\ \le\ \la_{m_++2}(a)\ \le\ \dots
\end{multline*}
i.e., $\spectrum(\mA)\cap [-|am|,|am|\,] = \{\la_{\Aeignumber}(a)\,:\, m_-\le \Aeignumber \le m_+,\ n\neq 0\}$ and the number of eigenvalues of $\mA$ in the interval $[-|am|, |am|\,]$ is given by
\begin{align*}
   \#\, \bigl(\sigma(\mA)\cap [-|am|, |am|\,]\,\bigr)\ =\
   \begin{cases}
      m_+-m_-\qquad  &\text{if } 0\in[m_-, m_+],\\
      m_+-m_-+1\quad\quad  &\text{if } 0\notin[m_-, m_+].
   \end{cases}
\end{align*}
Observe that  $m_+$ and $m_-$ depend on the physical parameters $a$, $m$, $\om$ and $k$.

\begin{theorem}\label{theorem:angschur:var1}
   Let $n_0=\dim\specspace_{(-\infty, 0)}\SC{1}(\la_0)$ for some $\la_0\in(|am|, \la_{m_++1})$.
   Then the eigenvalues of the angular operator $\mA$ to the right of $|am|$ are given by
   \begin{align}\label{eq:angschur:variaton}
      \la_{m_++\Aeignumber}(a)\ =\ 
      \min_{ \msub{L\subseteq\mathcal D(BB^*)}{\dim L = \Aeignumber+n_0} }\ 
      \max_{x\in \noz{L}}\ 
      \max_{y\in\noz{\mathcal D(B)}}\ \la_+\mv{x\\y},
      \qquad \Aeignumber\in\N. 
   \end{align}
%    where $\La(\Aeignumber)$ denotes the set of all $\Aeignumber$-dimensional subspaces of $\mD(BB^*)$ and $\noz{L}:=L\setminus\{0\}$ for subspaces $L\in\La(\Aeignumber)$.
   Furthermore, the eigenvalues can be estimated by
   \begin{align}
      \label{eq:angschur:Est}
      \sqrt{\nu_{n_0+\Aeignumber}} - |am|\ \le\ 
      \la_{m_++\Aeignumber}(a)\ &\le\ \sqrt{\nu_{n_0+\Aeignumber}} + |am| ,
      \qquad \Aeignumber\in\N, 
   \end{align}
   where $\nu_{\Aeignumber+n_0}$ are the eigenvalues of $BB^*$.
   Explicit estimates for $\la_n$ in terms of the physical parameters $a,\ m$ and $\om$ are 
   \begin{align*}
      \lalo_{n+n_0}\ \le\  \la_{m_++\Aeignumber}(a)\ \le\
      \laup_{n+n_0},
      \qquad n\in\N
   \end{align*}
   with
   \begin{align*}\ts
      \lalo_{n+n_0}\, &:=\, \ts
      \max\Bigl\{ |am|,\ 
	 \re\Bigl(
	    \sqrt{\bigl(|k+\frac{1}{2}| -\frac{1}{2} + n_0+ \Aeignumber\bigr)^2  + \Om_-} -|am| \Bigr) 
	 \Bigr\},\\[3ex]
      \laup_{n+n_0}\, &:=\, \ts
      \sqrt{\bigl(|k+\frac{1}{2}| -\frac{1}{2} + n_0+ \Aeignumber\bigr)^2  + \Om_+} + |am|.
   \end{align*}

\end{theorem}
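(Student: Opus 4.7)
The plan is to reduce the three claims to the abstract results of Section~\ref{sec:variatonal}, applied to $\mA$ with the identifications $T_{11}=-D$, $T_{12}=B$, $T_{22}=D$, and then to combine the resulting bounds on $\sqrt{\nu_n}$ with the Sturm comparison estimates~\eqref{eq:sturm}--\eqref{eq:Omega}.

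First I would collect the standing hypotheses. Lemma~\thmref{lemma:angschur:prerequisites} has already verified conditions \ref{mT1}, \ref{B1}, \ref{A1}, \ref{A2}, \ref{D1}, \ref{D2a} with the sharp bounds $\Aclo=\Dclo=-|am|$ and $\Acup=\Dcup=|am|$, and has shown $\mD(\SC{1}(\la))=\mD(BB^*)$ independently of $\la\in(|am|,\infty)$. Since $B^{-1}$ is Hilbert--Schmidt, $B^*$ is bijective with bounded inverse, which supplies condition \ref{B2} and the bijectivity hypothesis needed in Theorem~\thmref{theorem:schurvar:estimate}; boundedness of $T_{11}=-D$ together with compactness of $B^{*-1}$ lets Proposition~\thmref{prop:schurvar:n0finite}(i) conclude that $n_0$ is finite.

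Second, Theorem~\thmref{theorem:schurvar:var2} then yields formula~\eqref{eq:schurvar:minmaxmax} for $\mA$, giving the representation~\eqref{eq:angschur:variaton}. The eigenvalues of $\mA$ in $(|am|,\lae)$, which in the enumeration of the present section are precisely $\la_{m_++1}(a) \le \la_{m_++2}(a) \le \dots$, thus admit the stated minmax formula. The enumeration alignment is automatic because all eigenvalues of $\mA$ are simple and both enumerations list the same eigenvalues in increasing order. Third, Theorem~\thmref{theorem:schurvar:var3} applies since $T_{11}$ and $T_{22}$ are bounded; substituting $\Acup=\Dcup=|am|$ and $\Aclo=\Dclo=-|am|$ makes the squared differences vanish and collapses the midpoint terms to $\pm|am|$, which is exactly~\eqref{eq:angschur:Est}.

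For the explicit physical bounds I would insert the two-sided Sturm estimates~\eqref{eq:sturm}--\eqref{eq:Omega} for $\nu_{n+n_0}$ into~\eqref{eq:angschur:Est} and use monotonicity of $\sqrt{\cdot}\pm|am|$. The upper bound gives $\laup_{n+n_0}$ directly. For the lower bound one additionally invokes the trivial inequality $\la_{m_++\Aeignumber}(a)>|am|$ to justify the outer $\max$ with $|am|$, which simultaneously absorbs the case in which the Sturm lower bound on $\nu_{n+n_0}$ forces the radicand to be negative (when $\Om_-$ is large and negative), whence the real part in the definition of $\lalo_{n+n_0}$. The only nontrivial point of bookkeeping is to check that $n_0=\dim\specspace_{(-\infty,0)}\SC{1}(\la_0)$ is independent of the choice $\la_0\in(|am|,\la_{m_++1})$ and coincides with the minimum appearing in Theorem~\thmref{theorem:schurvar:var1}; this follows because the map $\la\mapsto\dim\specspace_{(-\infty,0)}\SC{1}(\la)$ is non-decreasing and jumps only at the eigenvalues of $\mA$, so it is constant on the gap $(|am|,\la_{m_++1})$. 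No deeper analytic obstacle is expected, since all the hard work has been done in Section~\ref{sec:variatonal}.
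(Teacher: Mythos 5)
Your proof is correct and follows essentially the same route as the paper's: invoke Lemma~\thmref{lemma:angschur:prerequisites} for the hypotheses, apply Theorem~\thmref{theorem:schurvar:var2} (formula~\eqref{eq:schurvar:minmaxmax}) for the minimax representation, apply Theorem~\thmref{theorem:schurvar:var3} with $\Acup=\Dcup=|am|$, $\Aclo=\Dclo=-|am|$ for~\eqref{eq:angschur:Est}, use Proposition~\thmref{prop:schurvar:n0finite}~(i) for finiteness of $n_0$, and then feed the Sturm bounds~\eqref{eq:sturm} into~\eqref{eq:angschur:Est}; your handling of the outer $\max$ with $|am|$ and of the real part (to absorb a possibly negative radicand) matches the paper's. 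The one point you leave implicit that the paper states explicitly is that $\lae=\infty$: since $B^*$ is surjective and $\mA$ has compact resolvent, Corollary~\thmref{cor:schurvar:essspec} gives $\essspec(\SC{1})=\essspec(\mA)\cap(|am|,\infty)=\emptyset$, and it is precisely this that lets the minimax formula~\eqref{eq:angschur:variaton} cover \emph{all} $n\in\N$ rather than only finitely many eigenvalues below $\lae$. You should spell this step out; otherwise your argument only yields the formula for $\la_{m_++n}(a)\in(|am|,\lae)$ without showing that this set exhausts the eigenvalues above $|am|$.
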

\begin{proof}
   By Lemma~\thmref{lemma:angschur:prerequisites}, the angular operator satisfies conditions \ref{mT1}, \ref{B1}, \ref{B2}, \ref{A1}, \ref{A2}, \ref{D1} and \ref{D2a}, and the domain of the operators $\SC{1}(\la)$ does not depend on $\la$ for $\la\in(|am|,\infty)$.
   Since $B^*$ is surjective, we have 
   \begin{align*}
      \essspec(\SC{1}) = \essspec(\mA)\cap(|am|, \infty) = \emptyset
   \end{align*}
   by Corollary~\thmref{cor:schurvar:essspec}.
   Formula~\eqref{eq:angschur:variaton} now follows from Theorem~\thmref{theorem:schurvar:var2} with $\Dup=|am|$ and $\lae=\infty$.
   Since $D$ is bounded and 
   \begin{align*}
      -|am|\, \|x\|^2\ \le\ \scalar{x}{Dx}\ \le\ |am|\, \|x\|^2,
      \qquad x\in\hil, 
   \end{align*}
   application of Theorem~\thmref{theorem:schurvar:var3} with $\Dcup=\Acup=|am|$ and $\Dclo = \Aclo = -|am|$ yields the estimates~\eqref{eq:angschur:Est}.
   By Theorem~\thmref{theorem:schurvar:var2} and Proposition~\ref{prop:schurvar:n0finite}, % we have 
   $n_0=\min\limits_{\la\in(|am|,\infty)} \dim\specspace_{(-\infty,0)}\SC{1}(\la) < \infty $. 
   Since $(|am|,\la_{m_++1})\subseteq\rho(\SC{1})$, the index shift $n_0$ is constant in this interval.
   Hence also the assertion concerning $n_0$ is proved.
   The explicit two-sided estimates for the eigenvalues $\la_{m_++\Aeignumber}$ are obtained if we insert the estimates~\eqref{eq:sturm} into~\eqref{eq:angschur:Est} and observe that $\la_{m_++\Aeignumber}>|am|$  by definition.
\end{proof}

A result similar to Theorem~\thmref{theorem:angschur:var1}  follows directly from standard perturbation theory (see, e.g., \cite{kato}) applied to the angular operator with $m$ as perturbation parameter.
For convenience, we state this result in the next theorem.
Since with the method from perturbation theory no index shift $n_0$ occurs, a comparison of the results of the following theorem and of Theorem~\thmref{theorem:angschur:var1} leads to a condition for $n_0=0$ (see Propositions~\thmref{prop:angschur:n00} and \thmref{prop:angschur:n00B}).

% We have already mentioned in section~\ref{section:estimate:AnalyticPerturbationTheory} that all eigenvalues are simple and that they are analytic functions with respect to the Kerr parameter $a$ and the mass $m$. 
% Hence, if we assume the eigenvalues to be ordered such that in the case $a=0$ we have $\la_\Aeignumber= |k+\frac{1}{2}| - \frac{1}{2} + \Aeignumber$ for $\Aeignumber\in\mathbb Z\setminus\{0\}$ (cf. lemma~\thmref{lemma:perturbation:exact}), then the same ordering persists also for $a\neq 0$ and $m\neq 0$, i.e., $\la_\Aeignumber(a, m)<\la_{\Aeignumber'}(a, m)$ whenever $\Aeignumber< \Aeignumber'$ for $\Aeignumber,\,  \Aeignumber'\in\mathbb Z\setminus\{0\}$.

\begin{theorem}\label{theorem:angschur:SPT}
   Let $\la_{\Aeignumber}$ be the $\Aeignumber$th eigenvalue of the angular operator $\mA$ with the ordering described above. Then for all $\Aeignumber\in\N$ we have
   \begin{align*}\ts
      \la_{\Aeignumber}\ &\ge\
      \re\Bigl(\sqrt{\bigl(|k+\frac{1}{2}| -\frac{1}{2} + \Aeignumber \bigr)^2  + \Om_-}\ \Bigl)\ - |am|\
      \\[2ex]
      \la_{\Aeignumber}\ &\le\
      \sqrt{\bigl(|k+\frac{1}{2}| -\frac{1}{2} + \Aeignumber\bigr)^2  + \Om_+} + |am|.
   \end{align*}
   The functions $\Om_-$ and $\Om_+$ are defined in~\eqref{eq:Omega}.
\end{theorem}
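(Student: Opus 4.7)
The plan is to view $\mA$ as a bounded selfadjoint perturbation of its mass-zero reduction and to invoke the standard perturbation theory for analytic families of selfadjoint operators (as in \cite{kato}).

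First I would write $\mA = \mA_0 + V$ where
\[
   \mA_0\ :=\ \begin{pmatrix} 0 & B \\ B^* & 0 \end{pmatrix},\qquad
   V\ :=\ \begin{pmatrix} -D & 0 \\ 0 & D \end{pmatrix},
\]
both acting on $\mD(B^*)\oplus\mD(B)$. Then $V$ is bounded and selfadjoint with $\|V\|=\|D\|=|am|$, while $\mA_0$ is the angular operator for mass zero. By Remark~\ref{remark:schurvar:squares} applied with $T_{12}=B$, the spectrum of $\mA_0$ is pure point and simple, equal to $\{\pm\sqrt{\nu_n}\,:\,n\in\N\}$ where $0<\nu_1<\nu_2<\dots$ are the eigenvalues of $BB^*$ from~\eqref{eq:sturm}.

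Next I would set up the analytic family $\mA(t):=\mA_0+tV$, $t\in[0,1]$. Since $\mA(t)$ is itself an angular operator with mass $tm$, the general simplicity statement for $\spectrum(\mA)$ recalled after~\eqref{eq:angschur:angularOp} applies to every $\mA(t)$; in particular each $\mA(t)$ has simple discrete spectrum. Kato's perturbation theorem for analytic families of type~(A) then makes the eigenvalues of $\mA(t)$ analytic functions of $t$ with no crossings, hence continuously labelled. The Hellmann-Feynman formula gives, for any such branch $\la(t)$ with normalised eigenvector $\psi(t)$,
\[
   \diff{t}\la(t)\ =\ \scalar{\psi(t)}{V\psi(t)},
   \qquad
   \Bigl|\diff{t}\la(t)\Bigr|\ \le\ \|V\|\ =\ |am|,
\]
so integrating over $[0,1]$ yields $|\la(1)-\la(0)|\le|am|$. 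Applied to the branch starting from $\sqrt{\nu_\Aeignumber}$ at $t=0$, and provided its endpoint at $t=1$ is recognised as the $\la_\Aeignumber$ of the theorem, this gives $|\la_\Aeignumber-\sqrt{\nu_\Aeignumber}|\le|am|$. Combined with the Sturm-type two-sided bounds~\eqref{eq:sturm} on $\nu_\Aeignumber$ (and the identity $\sqrt{\max\{0,x\}}=\re\sqrt{x}$ for $x\in\R$) this is exactly the pair of inequalities claimed.

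The one thing that needs justification is the identification of the branches: the theorem's labelling uses analytic continuation in $a$ to $a=0$, whereas the perturbation argument uses analytic continuation in $m$ to $m=0$. Because the spectrum of $\mA$ is simple for every choice of parameters, the eigenvalue functions on the two-dimensional parameter space $(a,m)$ are single-valued and continuous, so labelling by analytic continuation is path-independent; at the common point $(a,m)=(0,0)$ both conventions assign the value $|k+\tfrac{1}{2}|-\tfrac{1}{2}+\Aeignumber=\sqrt{\nu_\Aeignumber}\bigr|_{a=0}$ to the same branch, so they must agree at every $(a,m)$. This book-keeping is the only subtlety; the perturbation estimate itself is immediate once the decomposition $\mA=\mA_0+V$ is in place.
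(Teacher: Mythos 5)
Your proposal is correct and follows essentially the same approach as the paper: decompose $\mA=\mB+V$ with $\mB=\tm{0}{B}{B^*}{0}$ and $V=\tm{-D}{0}{0}{D}$, note $\|V\|=|am|$, apply analytic perturbation theory in the mass parameter to obtain $|\la_n-\sqrt{\nu_n}|\le|am|$, and then insert the Sturm-type bounds~\eqref{eq:sturm} on $\nu_n$. Beyond the paper's terse three-line sketch you have filled in the Hellmann--Feynman mechanism explicitly and, usefully, flagged the bookkeeping point the paper leaves silent --- that the $m$-perturbation labelling of the branches must agree with the $a$-continuation labelling fixed in Section~4.1, which you correctly justify by simplicity of the spectrum over the whole parameter range.
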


\begin{proof}
   Estimates for the eigenvalues of $BB^*$ are given in~\eqref{eq:sturm}.
   Since $B$ and $B^*$ are invertible, the spectrum of $\mB=\tm{0}{B}{B^*}{0}$ is given by $\pointspec(\mB) = \{\pm\sqrt{\nu_n}\, :\, \nu_n \in\sigma(BB^*)\}$.
   Now, application of analytic perturbation theory to the operators $\mB$ and $\mA$ with $m$ as perturbation parameter yields $\sqrt{\nu_n} - |am| \le \la_n \le \sqrt{\nu_n} + |am|$.
\end{proof}

\begin{remark}\label{remark:laQ}
   In addition to the estimates for the eigenvalues of $\mA$ presented in this paper, there is also another method to derive a lower bound for the modulus of the eigenvalues of $\mA$, see~\cite{thesis}.
   This method making use of sesquilinear forms yields the following bound:
   \begin{align}\label{eq:laQ}
	 |\la_n|\, \ge\, \laQ
   \end{align}
   where 
   \begin{align*}
	 \la_Q\, :=\,
      \begin{cases}
	 \sign(k+\frac{1}{2})(a\om + k + \frac{1}{2}) = |a\om+k+\frac{1}{2}|\qquad
	 &\text{if}\ a\om\in [\, -|k+\frac{1}{2}|,\ |k+\frac{1}{2}, ]\\[2ex]
	 2\sqrt{a\om(k+\frac{1}{2})}
	 &\text{if}\ \sign( |k+\frac{1}{2}| )a\om \ge |k+\frac{1}{2}|.
      \end{cases}
   \end{align*}
   In the case $\sign( |k+\frac{1}{2}| )a\om \ge |k+\frac{1}{2}|$ this methods yields no bound for the eigenvalues.
\end{remark}

%%%}}}

\subsection{The index shift $n_0$} %%%{{{
\label{subsec:indexshift}
The index shift $n_0$ does not depend on the choice of $\la_0\in(|am|, \la_{m_++1})$ but, of course, it depends on the physical parameters $a$, $m$, $\om$ and $k$.

The following lemma gives a sufficient condition for the index shift to be nontrivial.
\begin{lemma}\label{lemma:angschur:n0neq0}
   If there exists an eigenvalue $\mu$ of $\mA$ such that 
   \begin{align}\label{eq:angschur:n0mu}
      2\,|am| - \la_{m_++1}\ <\ \mu\ <\ \la_{m_++1},
   \end{align}
   then we have $n_0 \ge 1$. 
   If in addition $\la_{m_++1}\le 3\, |am|$, then there is at least one eigenvalue of 
%    we have $m_+ -m_- + 1\ge 1$ for the number of eigenvalues of 
   $\mA$ in $[-|am|, |am|\,]$.
\end{lemma}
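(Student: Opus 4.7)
The plan is to exhibit a concrete test vector in $\mD(B^*)$ on which the Schur form $\SCfs{1}(\la)$ is strictly negative at some $\la$ in $(|am|, \la_{m_++1})$. Because each eigenvalue curve of $\la\mapsto\SC{1}(\la)$ is strictly decreasing in $\la$ by Proposition~\thmref{prop:schurvar:summary}, and crosses zero only at eigenvalues of $\mA$ exceeding $|am|$ by Corollary~\thmref{cor:schurvar:essspec}, no such crossing can occur inside $(|am|, \la_{m_++1})$. Hence $\dim\specspace_{(-\infty,0)}\SC{1}(\la)$ is constant on that interval and equals $n_0$, so producing a single $\la$ there with $\dim\geq 1$ forces $n_0\geq 1$.

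First, the combination of $\mu<\la_{m_++1}$ and the defining property of $m_+$ (no eigenvalue of $\mA$ lies strictly between $|am|$ and $\la_{m_++1}$) forces $\mu\leq|am|$. Pick an eigenfunction $(f_0,g_0)^t\in\mD(B^*)\oplus\mD(B)$ of $\mA$ for the eigenvalue $\mu$. The invertibility of $B$ and $B^*$ rules out $f_0=0$ or $g_0=0$, and the reflection symmetry of Remark~\thmref{remark:symmetry} yields $\|f_0\|=\|g_0\|$.

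Next, I would take $f_0$ as the test vector. Using the eigenvalue equations $B^*f_0=(\mu-D)g_0$ and $-Df_0=\mu f_0-Bg_0$ together with the spectral calculus of the bounded multiplication operator $D$, a direct computation — in which the cross term $\mu(\|f_0\|^2-\|g_0\|^2)$ vanishes thanks to $\|f_0\|=\|g_0\|$ — collapses to
\[
\SCfs{1}(\la)[f_0]\ =\ -(\la-\mu)\int_{-|am|}^{|am|}\frac{\la+\mu-2t}{\la-t}\,d\nu(t),\qquad \la>|am|,
\]
where $\nu$ is the (nonzero) scalar spectral measure of $g_0$ relative to $D$. For $\la>2|am|-\mu$ both the prefactor $-(\la-\mu)$ and the integrand are of definite (opposite) sign on $[-|am|,|am|]$, so $\SCfs{1}(\la)[f_0]<0$. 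The hypothesis $\mu>2|am|-\la_{m_++1}$ is precisely $2|am|-\mu<\la_{m_++1}$, so the interval $(\max(|am|,2|am|-\mu),\la_{m_++1})$ is nonempty and contained in $(|am|,\la_{m_++1})$; any $\la$ therein completes the proof of $n_0\geq 1$.

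For the second assertion, the extra hypothesis $\la_{m_++1}\leq 3|am|$ gives $2|am|-\la_{m_++1}\geq -|am|$, so the standing assumption forces $\mu>-|am|$. Together with $\mu\leq|am|$ from above, this places $\mu$ itself in $[-|am|,|am|]$, which is the claimed eigenvalue. The main computational hurdle lies in the algebraic reduction to the single spectral-measure integral displayed above; after expanding $(D-\la)^{-1}(\mu-D)=-I-(\la-\mu)(D-\la)^{-1}$ and invoking $\|f_0\|=\|g_0\|$ to cancel the remaining inhomogeneous term, the rest is a transparent sign check.
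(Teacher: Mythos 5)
Your proof is correct and follows essentially the same strategy as the paper's: take the eigenfunction component $f_0$ of $\mA$ for $\mu$ as the test vector, use the symmetry relation $\|f_0\|=\|g_0\|$ to cancel the inhomogeneous term, and show the Schur form $\SCfs{1}(\la)[f_0]$ is strictly negative for suitable $\la\in(|am|,\la_{m_++1})$. Your spectral-measure identity
\[
\SCfs{1}(\la)[f_0]\ =\ -(\la-\mu)\int_{-|am|}^{|am|}\frac{\la+\mu-2t}{\la-t}\,d\nu(t)
\]
is equivalent to the paper's formula $\SCfs{1}(\la)[x]=(\mu-\la)(\ltwo{x}^2+\ltwo{y}^2)-(\mu-\la)^2\scalar{(D-\la)^{-1}y}{y}$, and the sign analysis of the integrand makes the choice of $\la$ slightly more transparent than the paper's explicit $\widetilde\la$; these are cosmetic differences. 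Two small remarks: your constancy argument for $\dim\specspace_{(-\infty,0)}\SC{1}(\la)$ on $(|am|,\la_{m_++1})$ works (decreasing eigenvalue curves plus no zero crossing in the resolvent interval), but the paper shortcuts this by invoking Lemma 2.5 of \cite{eschwe}, which directly identifies $n_0$ with the maximal dimension of a negative subspace of the form; and you should note explicitly that since $\SC{1}(\la)$ is the operator associated to the closed form $\SCfs{1}(\la)$, negativity of the form on a single vector already forces $\sigma(\SC{1}(\la))\cap(-\infty,0)\neq\emptyset$, hence a nontrivial spectral subspace.
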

% \begin{samepage}
\begin{proof}
   Recall that $\la_{m_++1}$ is the first eigenvalue of $\mA$ which is greater than $|am|$, hence we have $\mu\le|am|$.
   If we also know $\la_{m_++1}\le 3\, |am|$, then \eqref{eq:angschur:n0mu} shows that 
   $\mu > 2|am| - \la_{m_++1}\ge -|am|$. Hence the eigenvalue $\mu$ of $\mA$ lies in $[-|am|, |am|\,]$.\\
   It remains to be shown that $n_0\ge 1$. 
   Recall that $n_0=\min\limits_{\la>|am|}\dim\specspace_{(-\infty,0)}\SC{1}(\la)$ and that the right hand side is constant on the resolvent set of $\mA$ and non-decreasing with increasing $\la$.
   Hence $n_0 = n(\la) :=\dim\specspace_{(-\infty,0)}\SC{1}(\la)$ for all $\la\in(|am|, \la_{m_++1})$.
%    In \cite[lemma 2.5]{eschwe} it has been shown that $n_0$  is equal to the dimension of every maximal subspace of $\mN(\la)=\{x\in\mD : \SCfs{1}(\la)[x] < 0 \} \cup \{0\}$ where $\mD$ is any linear manifold with $\mD(\SC{1})\subseteq\mD\subseteq\mD(\SCfs{1})$ which is independent of $\la$. 
% 
Let $\mD$ be an arbitrary linear manifold such that $\mD(\SC{1})\subseteq\mD\subseteq\mD(\SCfs{1})$.
In \cite[lemma 2.5]{eschwe} it has been shown that $n_0$  is equal to the dimension of every maximal subspace of 
\begin{align*}
   \mN(\la)\, :=\, \{x\in\mD : \SCfs{1}(\la)[x] < 0 \} \cup \{0\}.
\end{align*}
Since $\mD(\SCfs{1}) = \mD(B^*)$ does not depend on $\la$, we can choose $\mD=\mD(B^*)$.
Therefore it suffices to show that there exists $x\in\mD(B^*)$, $x\neq 0$ and $\widetilde\la\in (|am|, \la_{m_++1})$ such that $\SCfs{1}(\widetilde\la)[x]<0$ because then $x$ spans a onedimensional subspace in $\mN(\widetilde\la)$.
   Since $\mu$ is an eigenvalue of $\mA$, there exists an element $\ttv{x}{y}\in\mD(B^*)\oplus\mD(B)$ such that $(\mA-\mu)\tv{x}{y} =0$, i.e.,
   \begin{align*}
      (-D-\mu)x + By &= 0, &
      B^*x + (D-\mu)y &= 0.
   \end{align*}
   In particular, for $\la\in\rho(D)$ we have $(D-\la)^{-1}B^*x = - (D-\la)^{-1}(D-\mu)y = -y + (\mu-\la)(D-\la)^{-1}y$ and
   $\scalar{B^*x}{y} = \scalar{x}{By}= \scalar{x}{(D+\mu)x}$.
   Thus for every $\la>|am|$
   \begin{align*}
      \SCfs{1}(\la)[x]\ &=\ \bigscalar{x}{(-D-\la)x} - \bigscalar{B^*x}{(D-\la)^{-1}B^*x}\\
      &=\ -\la \ltwo{x}^2 - \scalar{x}{Dx} + \scalar{B^*x}{y} - (\mu-\la)\bigscalar{(D-\la)^{-1}B^*x}{y}\\
      &=\ (\mu-\la)\,\bigl(\ltwo{x}^2 + \ltwo{y}^2 \bigr) - (\mu-\la)^2\,\bigscalar{(D-\la)^{-1}y}{y}.
   \end{align*}
   Since $\la>|am|=\|D\|$, we have $0 < -\bigscalar{(D-\la)^{^-1}y}{y} \le (\la- |am|)^{-1}\ltwo{y}^2$.
   Furthermore, 
%    we know from lemma~\thmref{lemma:symmetry:sym}\ref{item:symmetry:sym}
it follows from Remark~\thmref{remark:symmetry} 
   that $|x(\theta)| = |y(\pi-\theta)|$ for all $\theta\in(0,\pi)$ which implies $\ltwo{x}=\ltwo{y}$. 
   Thus we have
   \begin{align*}
      \SCfs{1}(\la)[x]\ \le\ \ltwo{x}^2\, (\la-|am|)^{-1} (\mu-\la)(\mu+\la-2|am|).
   \end{align*}
   For $\widetilde\la := \la_{m_++1} - \frac{1}{2}( \mu + \la_{m_++1} -2|am|)$ it follows from \eqref{eq:angschur:n0mu} that $\widetilde\la\in(|am|,\la_{m_++1})$.
   Moreover, we have $\mu-\widetilde\la < 0$ and $\mu + \widetilde\la - 2|am| = \frac{1}{2}(\mu+\la_{m_++1}) -|am| > 0$ by~\eqref{eq:angschur:n0mu}. 
   Hence it follows that\\
   \hspace*{\fill}\parbox[b]{0.9\textwidth}{
	 \begin{align*}
	       \SCfs{1}(\widetilde\la)[x]\ &\le\ \ltwo{x}^2\, (\widetilde\la-|am|)^{-1} (\mu-\widetilde\la)(\mu+\widetilde\la-2|am|)\ <\ 0.
	 \end{align*}
   }\hfill
\end{proof}
% \end{samepage}

Recall that $\nu_n$, $n\in\N$, are the eigenvalues of $BB^*$.

\begin{proposition}\label{prop:angschur:n00}
   \begin{enumerate}
      \item 
      If there exists $j_0\ge 2$ such that 
      \begin{align}\label{eq:angschur:interval}
	 \begin{aligned}
	    \sqrt{\nu_{n_0+j_0}} - \sqrt{\nu_{n_0+j_0-1}}\ >\ 2\,|am|
	    \quad\text{and}\quad
% 	    \\
	    \sqrt{\nu_{n_0+j_0+1}} - \sqrt{\nu_{n_0+j_0}}\ >\ 2\,|am|,
	 \end{aligned}
      \end{align}
      then $n_0=m_+$. 
     
     \item
     If $\|B^{*-1}\|^{-1} > 2|am|$, then the angular operator $\mA$ has no eigenvalues in the interval $[-|am|, |am|\,]$ and we have $n_0=0$ and $m_+=0$.

   \end{enumerate}
\end{proposition}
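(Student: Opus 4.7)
The strategy is to contrast the two eigenvalue enclosures already available for $\mA$. Theorem~\thmref{theorem:angschur:var1} gives
\begin{align*}
   \la_{m_++n}(a)\ \in\ [\sqrt{\nu_{n_0+n}}-|am|,\ \sqrt{\nu_{n_0+n}}+|am|],\qquad n\in\N,
\end{align*}
for the eigenvalues strictly above $|am|$ enumerated sequentially, whereas the analytic perturbation argument used in the proof of Theorem~\thmref{theorem:angschur:SPT}, applied to both branches of $\pointspec(\tm{0}{B}{B^*}{0})=\{\pm\sqrt{\nu_n}\}$, yields
\begin{align*}
   \la_n(a)\in[\sqrt{\nu_n}-|am|,\sqrt{\nu_n}+|am|]\ \text{ and }\ \la_{-n}(a)\in[-\sqrt{\nu_n}-|am|,-\sqrt{\nu_n}+|am|],\quad n\in\N,
\end{align*}
in the analytic-continuation indexing of Section~\ref{subsec:explicite}. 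The delicate point in part~(i) is that these two enumerations have to be reconciled; the spectral gap in the hypothesis is precisely what isolates exactly one eigenvalue in the critical window, so that the matching becomes unambiguous.

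For part~(i) set $I:=[\sqrt{\nu_{n_0+j_0}}-|am|,\,\sqrt{\nu_{n_0+j_0}}+|am|]$. The two gap inequalities together with the monotonicity of $(\nu_n)_n$ imply that $I$ is disjoint from $[\sqrt{\nu_n}-|am|,\,\sqrt{\nu_n}+|am|]$ for every $n\in\N$ with $n\neq n_0+j_0$. Moreover $\sqrt{\nu_{n_0+j_0}}>\sqrt{\nu_{n_0+j_0-1}}+2|am|\ge 2|am|$, hence $I\subseteq(|am|,\infty)$ and no negative-indexed eigenvalue $\la_{-n}(a)$ can lie in $I$ either. The perturbation enclosures therefore isolate $\la_{n_0+j_0}(a)$ as the unique eigenvalue of $\mA$ in $I$; since $\la_{m_++j_0}(a)\in I$ by Theorem~\thmref{theorem:angschur:var1}, the two eigenvalues coincide, which in the analytic indexing forces $m_++j_0=n_0+j_0$, hence $m_+=n_0$.

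For part~(ii), first note that $\|B^{*-1}\|^{-1}=\inf_{x\neq 0}\|B^*x\|/\|x\|=\sqrt{\nu_1}$ by the spectral theorem applied to $BB^*$. The perturbation enclosures from the first paragraph then give $\la_1(a)\ge\sqrt{\nu_1}-|am|>|am|$ and $\la_{-1}(a)\le-\sqrt{\nu_1}+|am|<-|am|$, so $\mA$ has no eigenvalues in $[-|am|,|am|]$, which by the definition of $m_\pm$ means $m_+=0$. To conclude $n_0=0$ I would apply Proposition~\thmref{prop:schurvar:summaryadd} to $\SC{1}$: inequality~\eqref{eq:schurvar:b} holds with $b=\sqrt{\nu_1}$, and $\sigma(D)\subseteq[-|am|,|am|]$ gives the lower bound $d(\la)\ge(\la+|am|)^{-1}$ in~\eqref{eq:schurvar:dla}. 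At $\la=|am|$ the expression $d(\la)b^2+\Alo-\la$ from~\eqref{eq:schurvar:est1} evaluates to $\nu_1/(2|am|)-2|am|$, strictly positive precisely under $\sqrt{\nu_1}>2|am|$, so a $\delta>0$ and a right neighborhood of $|am|$ on which~\eqref{eq:schurvar:est1} is satisfied do exist; Proposition~\thmref{prop:schurvar:summaryadd}~(v) then yields $\dim\specspace_{(-\infty,0)}\SC{1}(\la)=0$ throughout that neighborhood, so $n_0=0$.
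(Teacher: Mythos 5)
Your proof is correct and follows essentially the same route as the paper. In part (i) you are a bit more explicit than the paper in verifying that the interval $I$ is separated from the neighbouring perturbation windows and from the negative branch, but the mechanism (combining the enclosure from Theorem~\thmref{theorem:angschur:var1} with the analytic-perturbation enclosure $\sign(n)\sqrt{\nu_{|n|}}\pm|am|$ and matching the unique eigenvalue in $I$) is exactly the paper's argument. In part (ii) the conclusion $m_+=0$ is identical; for $n_0=0$ the paper estimates $\SCfs{1}(\la)[x]$ directly and invokes the $\mN(\la)$ characterisation from Lemma~\thmref{lemma:angschur:n0neq0}, whereas you feed $b=\sqrt{\nu_1}$ and $d(\la)=(\la+|am|)^{-1}$ into Proposition~\thmref{prop:schurvar:summaryadd}; the computation $\nu_1/(2|am|)-2|am|>0 \Longleftrightarrow \sqrt{\nu_1}>2|am|$ is the same inequality the paper derives by hand, so this is only a cosmetic difference in which auxiliary result you cite.
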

\begin{proof}
   \begin{proofenumerate}
      \item
      From standard perturbation theory we know that
      \begin{align}\label{eq:angschur:StanPerPos}
	 \sign(n)\sqrt{\nu_{|\Aeignumber|}} - |am| \ 
	 \le\ \la_{\Aeignumber}\ \le\ 
	 \sign(n)\sqrt{\nu_{|\Aeignumber|}} + |am|,
	 \qquad \Aeignumber\in\Z\setminus\{0\}.
      \end{align}
      Hence~\eqref{eq:angschur:interval} implies that the angular operator
      $\mA$ has exactly one eigenvalue in the interval $[\,\sqrt{\nu_{n_0+j_0}}-|am|,\ \sqrt{\nu_{n_0+j_0}}+|am|\,] $.
      Since by~\eqref{eq:angschur:interval} and~\eqref{eq:angschur:Est} both $\la_{n_0+j_0}$ and $\la_{m_++j_0}$ lie in this interval, it follows that $n_0=m_+$.

      \item Assume that $\|B^{*-1}\|^{-1}>2|am|$. 
%       Then it follows from $\|(BB^*)^{-1}\|^{-1} = \|B^{*-1}\|^{-2} > 4\,|am|^{2}$ that $\ltwo{BB^*x}> 2\,|am|$ for all $x\in\mD(BB^*)$ with $\|x\|=1$. 
% In particular, 
      Then we have $\sqrt{\nu_1}> 2\,|am|$ for the smallest eigenvalue $\nu_1$ of $BB^*$. 
      From the estimate~\eqref{eq:angschur:StanPerPos} we obtain that
      \begin{align*}
	 \la_{-1}\le -\sqrt{\nu_1} + |am| < -|am|.
	 \quad\text{and}\quad
	 \la_1 \ge \sqrt{\nu_1} - |am| > |am| 
      \end{align*}
      Hence the angular operator $\mA$ has no eigenvalues in $[-|am|, |am|\,]$ which implies $m_+=0$.\\
      For $\la>|am|$ define the set $\mN(\la)$ as in the proof of Lemma~\thmref{lemma:angschur:n0neq0}.
      Since $n_0$ is equal to the maximal dimension of subspaces of $\mN(\la)$ for $\la\in(|am|, \la_{m_++1})$, it suffices to show that $\mN(\la) = \{0\}$ for $\la$ close enough to $|am|$.
      To this end fix an arbitrary $x\in\mD(\SCfs{1})=\mD(B^*)$.
      Then it is easy to see that for all $\la>|am|$
      \begin{align*}
	 \SCfs{1}(\la)[x]\ &=\ \bigscalar{x}{(-D-\la)x} - \bigscalar{B^*x}{(D-\la)^{-1}B^*x}\\[0.5ex]
	 &\ge\ (-|am|-\la)\,\ltwo{x}^2 + (|am|+\la)^{-1}\,\ltwo{B^*x}^2\\[0.5ex]
	 &\ge\ (|am|+\la)^{-1}\,\ltwo{x}^2\, 
	 \bigl( \,\|B^{*-1}\|^{-2} - (\la+|am|)^2 \bigr).
      \end{align*}
      Since by assumption $\|B^{*-1}\|^{-1}>2|am|$, we have $\SCfs{1}(\la)[x] > 0$ for all $x\in\mD(B^*)\setminus\{0\}$ if $\la$ is sufficiently close to $|am|$. 
      \qedhere

   \end{proofenumerate}
\end{proof}

The next proposition follows immediately from Proposition~\thmref{prop:angschur:n00}.
Recall that $\lalo_{n+n_0}$ and $\laup_{n+n_0}$ are the upper and lower bounds provided by the variational principle for the $(m_+ + n)$-th eigenvalue of $\mA$ (see Theorem~\thmref{theorem:angschur:var1}).
\begin{proposition}\label{prop:angschur:n00B}
   \begin{enumerate}
      \item 
      If there exists $j_0\ge 2$ such that 
      \begin{align}
	 \lalo_{n_0+j_0} - \laup_{n_0+j_0-1}\ >\ 0
	 \qquad\text{and}\qquad
	 \lalo_{n_0+j_0+1} - \laup_{n_0+j_0}\ >\ 0,
      \end{align}
     then $n_0=m_+$. 
     
     \item
     If $\lalo_{1} > |am|$, then the angular operator $\mA$ has no eigenvalues in the interval $[-|am|, |am|\,]$ and we have $n_0=0$ and $m_+=0$.
   \end{enumerate}
\end{proposition}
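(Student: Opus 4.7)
The plan is to reduce each hypothesis of Proposition~\thmref{prop:angschur:n00B} directly to the corresponding hypothesis of Proposition~\thmref{prop:angschur:n00}, by unfolding the explicit definitions of $\lalo_m$ and $\laup_m$ from Theorem~\thmref{theorem:angschur:var1} and then invoking the two-sided Sturm estimates~\eqref{eq:sturm} for the eigenvalues $\nu_n$ of $BB^*$. Once both translations are in place, the two assertions follow verbatim from the previous proposition, which is the reason it is announced as immediate.

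For part~(i), I would start from the observation that $\laup_{n_0+j_0-1} = \sqrt{(|k+\frac{1}{2}|-\frac{1}{2}+n_0+j_0-1)^2+\Omega_+} + |am| > |am|$, since $\nu_{n_0+j_0-1}>0$. Hence the hypothesis $\lalo_{n_0+j_0} > \laup_{n_0+j_0-1}$ forces $\lalo_{n_0+j_0} > |am|$, and therefore the outer maximum in the definition of $\lalo_{n_0+j_0}$ must be attained by its second argument, so that
\begin{align*}
   \re\sqrt{(|k+{\textstyle\frac{1}{2}}|-{\textstyle\frac{1}{2}}+n_0+j_0)^2+\Omega_-}\ -\ |am|\ >\ \sqrt{(|k+{\textstyle\frac{1}{2}}|-{\textstyle\frac{1}{2}}+n_0+j_0-1)^2+\Omega_+}\ +\ |am|.
\end{align*}
Applying the lower Sturm bound in~\eqref{eq:sturm} to the left-hand side and the upper Sturm bound to the right-hand side then gives $\sqrt{\nu_{n_0+j_0}} - \sqrt{\nu_{n_0+j_0-1}} > 2|am|$, and the same reasoning applied to the second hypothesis yields $\sqrt{\nu_{n_0+j_0+1}} - \sqrt{\nu_{n_0+j_0}} > 2|am|$. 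These are precisely the hypotheses of Proposition~\thmref{prop:angschur:n00}(i), which then concludes $n_0 = m_+$.

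For part~(ii) I would argue in the same style: the hypothesis $\lalo_1 > |am|$ again forces the outer maximum in the definition of $\lalo_1$ to be attained by the square-root term, giving $\re\sqrt{(|k+\frac{1}{2}|+\frac{1}{2})^2+\Omega_-} > 2|am|$. Since $BB^*$ is positive with smallest eigenvalue $\nu_1$ and $B^*$ is bijective with bounded inverse, the standard identity $\|(B^*)^{-1}\|^{-2}=\|(BB^*)^{-1}\|=\nu_1^{-1}$ yields $\|(B^*)^{-1}\|^{-1}=\sqrt{\nu_1}$; one more appeal to the lower Sturm bound in~\eqref{eq:sturm} produces $\|(B^*)^{-1}\|^{-1}=\sqrt{\nu_1} > 2|am|$, which is exactly the hypothesis of Proposition~\thmref{prop:angschur:n00}(ii). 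That proposition then simultaneously delivers $n_0=0=m_+$ and the absence of eigenvalues of $\mA$ in $[-|am|,|am|\,]$.

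I do not foresee any genuine obstacle: each of the two reductions is a short calculation, and the only point requiring a moment of care is the handling of the outer $\max$ in the definition of $\lalo_m$. This is resolved in both parts by the trivial observation that the right-hand side of the hypothesis strictly exceeds $|am|$, which isolates the non-trivial argument of the $\max$ and allows the Sturm estimates to be applied directly.
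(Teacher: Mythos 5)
Your reduction to Proposition~\thmref{prop:angschur:n00} via the explicit definitions of $\lalo_m$, $\laup_m$ and the Sturm bounds~\eqref{eq:sturm} is exactly the intended argument --- the paper gives no proof and simply states that the proposition follows immediately from Proposition~\thmref{prop:angschur:n00}, and you have supplied the correct chain of inequalities that makes this precise. One small slip in part~(ii): the intermediate identity should read $\|(B^*)^{-1}\|^{2}=\|(BB^*)^{-1}\|=\nu_1^{-1}$ (not $\|(B^*)^{-1}\|^{-2}$), but your final conclusion $\|(B^*)^{-1}\|^{-1}=\sqrt{\nu_1}$ is nonetheless correct and the argument is unaffected.
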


Note that we have 
\begin{multline*}
   \lim\limits_{N\rightarrow\infty} \lalo_{N+1} - \laup_N\  \\
   \begin{aligned}
      &=\ -2|am| + 
      \lim\limits_{N\rightarrow\infty} 
      \sqrt{\ts (|k+\frac{1}{2}| + N + \frac{1}{2})^2 + \Om_-}
      -\sqrt{\ts (|k+\frac{1}{2}| + N - \frac{1}{2})^2 + \Om_+}\\
      &=\ -2|am| + 1;
   \end{aligned}
\end{multline*}
therefore, (i) of Proposition~\thmref{prop:angschur:n00B} holds whenever $|am|< \frac{1}{2}$.

It can be shown that for fixed parameters $a$, $m$ and $\om$, we have 
$\lim\limits_{|k|\rightarrow\infty}\|B^{-1}\| = 0$, see~\cite{thesis}.
Hence, if the norm of the wave number $k$ is large enough, then the angular operator has no eigenvalues in $[-|am|,|am|\,]$ and the index shift $n_0$ vanishes.

%%%}}}

\subsection{Comparison with numerical values} %%%{{{
\label{subsec:comparison}
Suffern, Fackerell and Cosgrove~\cite{SFC83} have obtained numerical approximations of the eigenvalues $\la$ of the angular operator by expanding the solution of the angular equation in terms of hyper\-geo\-metric functions. 
They derived a three-term recurrence relation for the coefficients in the series ansatz. 
For the eigenvalue $\la$ they found an expansion with respect to $a(m-\om)$ and $a(m+\om)$ 
\begin{align*}
   \la\, =\, 
   \sum\limits_{r,s} C_{r,s} a^{r+s} (m-\om)^r(m+\om)^s
\end{align*}
with the coefficients $C_{r,s}$ obtained from the recurrence relation for the eigenfunctions.
The numerical values of~\cite{SFC83} will be denoted by $\lanumS[n]$. 
They differ from the eigenvalues given in this work by a factor $-1$ due to the choice of the sign of the constant of separation in the separation ansatz for the Dirac equation.

\par\medskip
For fixed values of $am$ and $a\om$,
Tables~\thmref{table:comparison:la2} and~\thmref{table:comparison:la1} 
contain the numerical values for the first positive and first negative eigenvalues $\lanumS[\pm 1]$ tabulated in \cite{SFC83} and the analytical
lower and upper bounds  $\lalo_1$ and $\laup_1$ from Theorem~\thmref{theorem:angschur:var1} for wave numbers $k=-5,\, \dots,\, 4$.
In addition, the values for the lower bound $\laQ$ from Remark~\thmref{remark:laQ} are listed.
For all physical parameters under consideration, apart from the case $am=0.25$, $a\om=0.75$, $k=-1$, we have
\begin{align*}
   \|B^{-1}\|^{-1}\ =\ \sqrt{\nu_1}\ 
   \ge\ \re \Bigl( \sqrt{\ts (|k+\frac{1}{2}|+\frac{1}{2})^2 + \Om_-}\ \Bigr)\
   >\ 2|am| 
\end{align*}
where $\nu_1$ is the first eigenvalue of $BB^*$, see~\eqref{eq:sturm}, so that we have $n_0=0$ and $m_+=0$ by Proposition~\thmref{prop:angschur:n00}~(ii).
Therefore, the first positive eigenvalue is indeed the analytic continuation of the first positive eigenvalue in the case $a=0$.
The case $am=0.25$, $a\om=0.75$, $k=-1$ is discussed in the subsequent remark.

\begin{remark}[$am=0.25$, $a\om=0.75$, $k=-1$]\
%    \vspace{-\topsep}
   \begin{enumerate}\label{remark:comparison:discussion}
      
      \item 
      In this case, the bound $\laQ$ from Remark~\thmref{remark:laQ} is not defined because of $\sign(k+\frac{1}{2})a\om = -0.75 < -\frac{1}{2}-|k+\frac{1}{2}|$.

      \item
      The inequalities~\eqref{eq:sturm} yield no positive upper bound for $\|B^{-1}\|$ so that we cannot use Proposition~\thmref{prop:angschur:n00}~(ii) to conclude $n_0=m_+=0$.
      However, since $|am|<\frac{1}{2}$, it follows from Proposition~\thmref{prop:angschur:n00B}~(i) that $n_0=m_+$.
      By Theorem~\thmref{theorem:angschur:SPT} we still have 
      \begin{align}\label{eq:comparison:mPerturbation}
	 -0.25\ \le\ \sqrt{\nu_1} - |am| \le\ \la_1\le \sqrt{\nu_1} + |am|\ \le 1.28078
      \end{align}
      where $\la_1$ is the analytic continuation of the first positive eigenvalue in the case $a=0$
      and $\nu_1$ is the first eigenvalue of $BB^*$ which we have estimated according to~\eqref{eq:sturm}.
%       The lower bound can be further improved if we use $\|B^{-1}\|^{-1} = \sqrt{\nu_1}$.

      \item
      Even a positive lower bound for $\la_1$ can be obtained by means of analytic perturbation theory if $a$ is treated as the perturbation parameter.
      For $a=0$ we have $\la_n = \sign(n)\bigl( |k+\frac{1}{2}|-\frac{1}{2}+ |n| \bigr) = n$; 
      hence for the given physical parameters we obtain 
      \begin{align}\label{eq:comparison:aPerturbation}
	 n - 0.75\ \le \la_n\ \le\ n + 0.75, \qquad n\in\Z\setminus\{0\}.
      \end{align}
      In particular it follows that $0.25\le\la_1$. 
      For all other values of $n$, however, the bounds $\lalo_n$ and $\laup_n$ obtained from the more elaborate estimates in Theorem~\thmref{theorem:angschur:var1} (where $m$ plays the role of the perturbation parameter) yield tighter bounds than the formula above as can be seen in Table~\ref{table:comparison:higher2-k0,-1}.
   \end{enumerate}
   Combining~\eqref{eq:comparison:mPerturbation} and~\eqref{eq:comparison:aPerturbation} we obtain
   $0.25\ \le\ \la_1\ \le\ 1.28078$.
%    \hfill\remarkend
\end{remark}

\begin{remark}\label{remark:comparison:discussion2}
   In some cases, the bounds can be further improved.
      For $am=0.005$ and $a\om = 0.015$ and $k\in\{-5,\, \dots,\, 4\}$ we have $\spectrum(\mA)\cap [-|am|, |am|\,] =\emptyset$ because of $|\la_{\pm 1}| \ge \lalo_1 = \sqrt{\nu_1} - |am| > |am|$.
      Furthermore, $\|B^{-1}\|^{-1} = \sqrt{\nu_1} > |am|$ so that the assumption of Lemma~3.38 of \cite{thesis} is satisfied.
      Hence it follows:
   \begin{enumerate}
      \item For $k = 0,\, \dots,\, 4$ we have $\bigl(-\|B^{-1}\|^{-1}, -|am|\bigr)\cap\spectrum{(\mA)} = \emptyset$, hence
      \begin{align*}
	 \la_{-1}\ \le\ \|B^{-1}\|^{-1}\ =\ -\laup_1 - |am|. 
      \end{align*}
      
      \item For $k = -5,\, \dots,\, -1$ we have $\bigl(-\|B^{-1}\|^{-1}, -|am|\bigr)\cap\spectrum{(\mA)} = \emptyset$, hence
      \begin{align*}
	 \la_{1}\ \ge\ \|B^{-1}\|^{-1}\ =\ \lalo_1 + |am|. 
      \end{align*}
      
   \end{enumerate}
   Analogously, for $am=0.25$, $a\om=0.75$ the upper bound for $\la_{-1}$ can be improved if $k=0,\,\dots,\,4$ and the lower bound for $\la_{1}$ can be improved if $k=-5,\,\dots,\, -2$.\\
   Note, however, that for $k=-1$ the assumptions of Lemma~3.38 in \cite{thesis} are not fulfilled.
%    \hspace*{\fill}\remarkend
\end{remark}

The discussion in Remarks~\thmref{remark:comparison:discussion} and \thmref{remark:comparison:discussion2} shows that it is not easy to decide a priori which analytic bound gives the sharpest bound for the eigenvalues of $\mA$. 
It seems that often a combination of the various estimates yields the best result.

It can be seen from the tables that in most cases the estimate $\lalo_1$ yields the sharpest lower bound.
On the other hand, Figures~\ref{figure:comparison:varya:k0_m0025_a} and~\ref{figure:comparison:varya:m025_om075_k0} suggest that for increasing $am$ and $a\om$ the estimate $\laQ$ provides a better lower bound for the smallest positive eigenvalue than $\lalo$ does. \\
% 
% In tables~\thmref{table:comparison:la2} and~\thmref{table:comparison:la1} we have also listed the numerical values, denoted by $\lanumplus[1]$ and $\lanumminus[1]$, that we have obtained by solving the continued fractions directly since they seem to be more reliable than the values originally given by the authors in \cite{SFC83}. 

In Figure~\ref{figure:comparison:la1} the numerical values $\lanumplusS[1]$ and $\lanumminusS[1]$ together with the analytic lower bounds $\pm\laQ$ 
and the analytic upper and lower bounds $\lalo_1$ and $\laup_1$ from Theorem~\ref{theorem:angschur:var1} for the lowest eigenvalues as functions of $k$ are plotted. 

\begin{table}[h] %%%{{{
% table for  $am=0.25$,\quad $a\om =0.75$  %%%{{ {\label{table:comparison:la2}
% \begin{table}[h]
   \begin{center}
      \begin{tabular}{r|rrSSr}
	 \hline\hline
	 \multicolumn{6}{c}{ $am=0.25$,\quad $a\om =0.75$}\\
	 \hline\hline
	 \multicolumn{6}{c}{}\\
	 & \multicolumn{1}{c}{$\laQ$}&
	 \multicolumn{1}{c}{$\lalo_1$}&
	 \multicolumn{1}{>{\columncolor{SuffernColour}}c}{$\lanumplusS[1]$}& 
	 \multicolumn{1}{>{\columncolor{SuffernColour}}c}{$\lanumminusS[1]$}&
	 \multicolumn{1}{c}{$\laup_1$}\\
	 \hline
	 $k=-5$ 
	 &3.75000   &3.93330   &4.29756 &-4.34936 &4.61606
	 \\                                                        
	 $-4 $                                                     
	 &2.75000   &2.91228   &3.30870 &-3.37371 &3.65037
	 \\                                                        
	 $-3$                                                      
	 &1.75000   &1.87132   &2.32657 &-2.41349 &2.71221
	 \\                                                        
	 $-2 $                                                     
	 &0.75000   &0.75000   &1.35984 &-1.48903 &1.85078
	 \\                                                        
	 $-1$                                                      
	 &undefined&(0.25000) &0.44058 &-0.67315 &(1.28078)
	 \\                                                        
	 $0$                                                       
	 &1.22474   &0.75000   &1.59764 &-1.47645 &1.85078
	 \\                                                        
	 $1$                                                       
	 &2.25000   &2.09521   &2.65654 &-2.57663 &2.90754
	 \\                                                        
	 $2$                                                       
	 &3.25000   &3.21410   &3.68229 &-3.62219 &3.93273
	 \\                                                        
	 $3$                                                       
	 &4.25000   &4.27769   &4.69685 &-4.64856 &4.94707
	 \\                                                        
	 $4$                                                       
	 &5.25000   &5.31776   &5.70622 &-5.66583 &5.95636
	 \\
      \end{tabular}
   \end{center}
      \bigskip
      \caption{%
%       \normalsize
	 Analytic bounds and numerical approximations for the first positive and first negative 
	 eigenvalue of $\mA$.
	 The estimate $\laQ$ from Remark~\thmref{remark:laQ} is a lower bound for 
	 $|\la_{\pm 1}|$.
	 $\lalo_1$ and $\laup_1$ from Theorem~\thmref{theorem:angschur:var1} are upper and lower 
	 bounds for $\la_{\pm1}$.
	 The values $\lanumplusS[1]$ and $\lanumminusS[1]$ for the first positive and the first
	 negative eigenvalue of $\mA$ are taken from~\cite{SFC83}.
	 We have obtained the numerical values $\lanumplus[1]$ and $\lanumminus[1]$ 
	 by approximating a solution of the continued fraction equation for $\la$.
	 Note that for $k=0,\,\dots,\, 4$ the upper bound for $\la_{-1}$ can be further improved, while for $k=-2,\,\dots,\, -5$ the lower bound for $\la_1$ can be improved, see Remark~\thmref{remark:comparison:discussion2}.
	 For $k=-1$ see the discussion in Remark~\thmref{remark:comparison:discussion}.
\label{table:comparison:la2} 
  }

\end{table} %%%}}}

\begin{table}[h] %%%{{{
% table for  $am=0.005$,\quad $a\om =0.015$  %%%{{ {\label{table:comparison:la1}
\begin{center}
      \begin{tabular}{r|rrSSr}
	 \hline\hline
	 \multicolumn{6}{c}{ $am=0.005$,\quad $a\om =0.015$}\\
	 \hline\hline
	 \multicolumn{6}{c}{}\\
	 &\multicolumn{1}{c}{$\laQ$}&
	 \multicolumn{1}{c}{$\lalo_1$}& 
	 \multicolumn{1}{>{\columncolor{SuffernColour}}c}{$\lanumplusS[1]$}& 
	 \multicolumn{1}{>{\columncolor{SuffernColour}}c}{$\lanumminusS[1]$}&
	 \multicolumn{1}{c}{$\laup_1$}
	 \\
	 \hline
	 $k=-5$ 
	 &4.48500 &4.97998 &4.98591 &-4.98682 &4.99299
	 \\                                                    
	 $-4 $                                                 
	 &3.48500 &3.97997 &3.98611 &-3.98723 &3.99373
	 \\                                                    
	 $-3$                                                  
	 &2.48500 &2.97996 &2.98643 &-2.98786 &2.99498 
	 \\                                                    
	 $-2 $                                                 
	 &1.48500 &1.97994 &1.98700 &-1.98901 &1.99749
	 \\                                                    
	 $-1$                                                  
	 &0.48500 &0.97989 &0.98834 &-0.99170 &1.00500
	 \\                                                    
	 $0$                                                   
	 &0.51500 &0.99500 &1.01167 &-1.00836 &1.01989
	 \\                                                    
	 $1$                                                   
	 &1.51500 &2.00249 &2.01300 &-2.01101 &2.01994 
	 \\                                                    
	 $2$                                                   
	 &2.51500 &3.00498 &3.01357 &-3.01215 &3.01996
	 \\                                                    
	 $3$                                                   
	 &3.51500 &4.00623 &4.01389 &-4.01278 &4.01997
	 \\                                                    
	 $4$                                                   
	 &4.51500 &5.00699 &5.01409 &-5.01318 &5.01998
	 \\
      \end{tabular}
\end{center}
      \bigskip
      \caption{\label{table:comparison:la1}%
%       \normalsize\medskip\medskip
%       \parbox{\textwidth}{
	    Analytic bounds and numerical approximations for the first positive and first negative 
	    eigenvalue of $\mA$.
	    The estimate $\laQ$ from Remark~\thmref{remark:laQ} is a lower bound for $|\la_{\pm 1}|$.
	    $\lalo_1$ and $\laup_1$ obtained in Theorem~\thmref{theorem:angschur:var1} 
	    are upper and lower bounds for $\la_{\pm 1}$. 
	    The values $\lanumplusS[1]$ and $\lanumminusS[1]$ are the first positive and the 
	    first negative eigenvalue of $\mA$ calculated numerically by 
	    Suffern et~al.~\cite{SFC83}.
% 	    while we have obtained the values $\lanumplus[1]$ and $\lanumminus[1]$ 
% 	    by approximating a solution of the continued fraction equation for $\la$.
	    Note that for $k=0,\,\dots,\, 4$ the upper bound for $\la_{-1}$ can be further improved, while for $k=-1,\,\dots,\, -5$ the lower bound for $\la_1$ can be improved, see Remark~\thmref{remark:comparison:discussion2}.
	 }
\end{table} %%%}}}

\begin{figure}[h] % figure:comparison:la1 %%%{{{
\normalsize
   \begin{center}
      % gnuplot source file: lowest.txt
      \input{plots/lowest2Qloup.tex}
   \end{center}

   \ \vspace*{\fill}

   \begin{center}
      % gnuplot source file: lowest.txt
      \input{plots/lowest1Qloup.tex}
      \medskip
      \caption{\label{figure:comparison:la1}% 
%       \medskip%\medskip
%       \parbox{\textwidth}{
	 The plots show the lower bound $\laQ$ for the absolute value of the eigenvalues of $\mA$ and the analytic upper and lower bounds $\lalo_1$ and $\laup_1$ from Theorem~\ref{theorem:angschur:var1}
	 for two different values of $(am,\, a\om)$.
	 In addition, the numerical values for the first positive and first negative eigenvalue of $\mA$ from~\cite{SFC83} are plotted.
	 The analytic bounds have not been plotted in the interval $(-1,0)$ because for wave numbers $k$ in that interval the angular operator is not uniquely defined as a selfadjoint operator. 
	 Note that for $(am,a\om)=(0.25,\ 0.75)$ the bound $\laQ$ is not defined for $k\in(-1.25,\, 0)$.
	 In the case $(am, a\om)=(0.005,0.015)$ the analytic lower and upper bounds $\lalo_1$ and $\laup_1$ are so close to each other that they seem to coincide in this resolution.
      }
   \end{center}
\end{figure}

%%%}}}

% table for  $am=0.25$,\quad $a\om =0.75$, $k=0, -1$ %%%{{{\label{table:comparison:higher1-k0,-1} 
\begin{table}[h] 
      \normalsize
   \begin{center}
      \begin{tabular}{r|rSSr}
	 \hline\hline
	 \multicolumn{5}{c}{ $am=0.25,\quad a\om=0.75$}\\
	 \hline\hline
	 \multicolumn{5}{c}{}\\
	 % k=0
	 $k=0$& 
	 \multicolumn{1}{c}{$\lalo_n$}& 
	 \multicolumn{1}{>{\columncolor{SuffernColour}}c}{$\lanumplusS[n]$}& 
	 \multicolumn{1}{>{\columncolor{SuffernColour}}c}{$\lanumminusS[n]$}&
	 \multicolumn{1}{c}{$\laup_n$}
	 \\
	 \hline
	 $n=1$& 0.75000&  1.59764&  -1.47645&  1.85078\\
	 $  2$& 1.75000&  2.22587&  -2.23549&  2.60850\\
	 $  3$& 2.75000&  3.17408&  -3.16265&  3.50000\\
	 $  4$& 3.75000&  4.13127&  -4.12446&  4.44076\\
	 $  5$& 4.75000&  5.10533&  -5.10083&  5.40388\\
	 
	 \multicolumn{5}{c}{}\\ 
	 % k=-1
	 $k=-1$& 
	 \multicolumn{1}{c}{$\lalo_n$}& 
	 \multicolumn{1}{>{\columncolor{SuffernColour}}c}{$\lanumplusS[n]$}& 
	 \multicolumn{1}{>{\columncolor{SuffernColour}}c}{$\lanumminusS[n]$}&
	 \multicolumn{1}{c}{$\laup_n$}
	 \\
	 \hline
	 $n=1$& (0.25000)& 0.44058&  -0.67315&  (1.28078) \\
	 $  2$& 1.33114&   1.84225&  -1.87948&   2.26556  \\
	 $  3$& 2.48861&   2.90717&  -2.92301&   3.26040  \\
	 $  4$& 3.55789&   3.93475&  -3.94336&   4.25780  \\
	 $  5$& 4.59768&   4.94973&  -4.95513&   5.25625  \\
% 	 $n=1$& $-.25 + .707107 \im$ & 0.44058&  0.44025&  -0.67315&  -0.67284&  1.28078  \\
      \end{tabular}
   \end{center}

      \bigskip
      \caption{\label{table:comparison:higher2-k0,-1}%
%       \parbox{\textwidth}{
	 For $am=0.25$, $a\om=0.75$ and $k=0,\ -1$ the numerical values $\lanumminusS[n]$ and $\lanumplusS[n]$ and the lower and upper bounds $\lalo_n$ and $\laup_n$ from Theorem~\thmref{theorem:angschur:var1} are shown.
	 For $k=-1,\, n=1$ we refer to Remark~\thmref{remark:comparison:discussion}.
      } 
%       \par\ \\[3ex]
%       \par\medskip
\end{table}
%%%}}}

% \bigskip\bigskip\bigskip

% table for  $am=0.005$,\quad $a\om =0.015$, $k=0, -1$  %%%{{{
\begin{table}[h] 
   \normalsize
   \begin{center}
      \begin{tabular}{r|rSSr}
	 \hline\hline
	 \multicolumn{5}{c}{ $am=0.005,\quad a\om=0.015$}\\
	 \hline\hline
	 \multicolumn{5}{c}{}\\
	 % k=0
	 $k=0$& 
	 \multicolumn{1}{c}{$\lalo_n$}& 
	 \multicolumn{1}{>{\columncolor{SuffernColour}}c}{$\lanumplusS[n]$}& 
	 \multicolumn{1}{>{\columncolor{SuffernColour}}c}{$\lanumminusS[n]$}&
	 \multicolumn{1}{c}{$\laup_n$}
	 \\
	 \hline
	 $n=1$& 0.99500&  1.01167&  -1.00836& 1.01989\\
	 $  2$& 1.99500&  2.00435&  -2.00369& 2.01249\\
	 $  3$& 2.99500&  3.00273&  -3.00245& 3.01000\\
	 $  4$& 3.99500&  4.00180&  -4.00184& 4.00875 \\
	 $  5$& 4.99500&  5.00158&  -5.00148& 5.00800\\

	 \multicolumn{5}{c}{}\\ 
	 % k=-1
	 $k=-1$& 
	 \multicolumn{1}{c}{$\lalo_n$}& 
	 \multicolumn{1}{>{\columncolor{SuffernColour}}c}{$\lanumplusS[n]$}& 
	 \multicolumn{1}{>{\columncolor{SuffernColour}}c}{$\lanumminusS[n]$}&
	 \multicolumn{1}{c}{$\laup_n$}
	 \\
	 \hline
	 $n=1$&  0.97989&  0.98834&  -0.99170&  1.00500\\
	 $  2$&  1.98749&  1.99567&  -1.99636&  2.00500\\
	 $  3$&  2.99000&  2.99730&  -2.99759&  3.00500\\
	 $  4$&  3.99125&  3.99803&  -3.99819&  4.00500\\
	 $  5$&  4.99200&  4.99845&  -4.99855&  5.00500\\
      \end{tabular}
      \bigskip
      \caption{\label{table:comparison:higher1-k0,-1}%
%       \medskip\medskip
%       \parbox{\textwidth}{
	 For $am=0.015$, $a\om=0.025$ and $k=0,\ -1$ the numerical values $\lanumminusS[n]$ and $\lanumplusS[n]$ and the lower and upper bounds $\lalo_n$ and $\laup_n$ from Theorem~\thmref{theorem:angschur:var1} are shown.} 
   \end{center}
\end{table}
%%%}}}

\begin{figure}[h] %%%{{{ % figure:comparison:varya:k0_m0025_a
      \normalsize
   \begin{center}
      \input{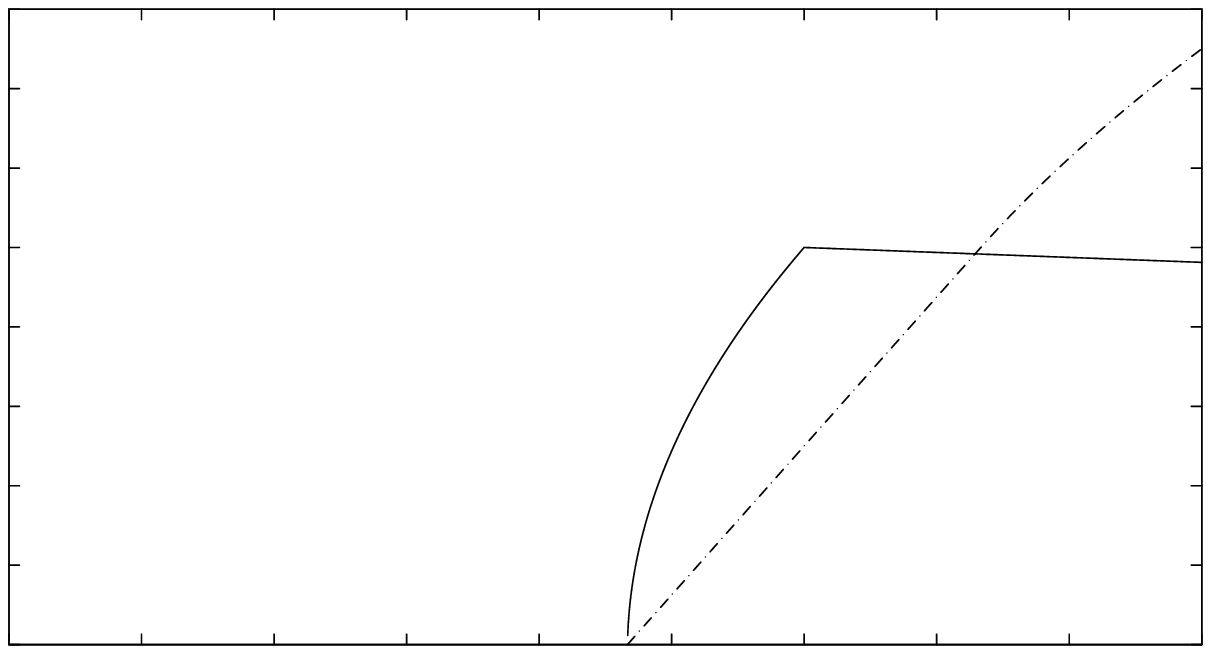}\\[4ex] 	% varya.txt
      \input{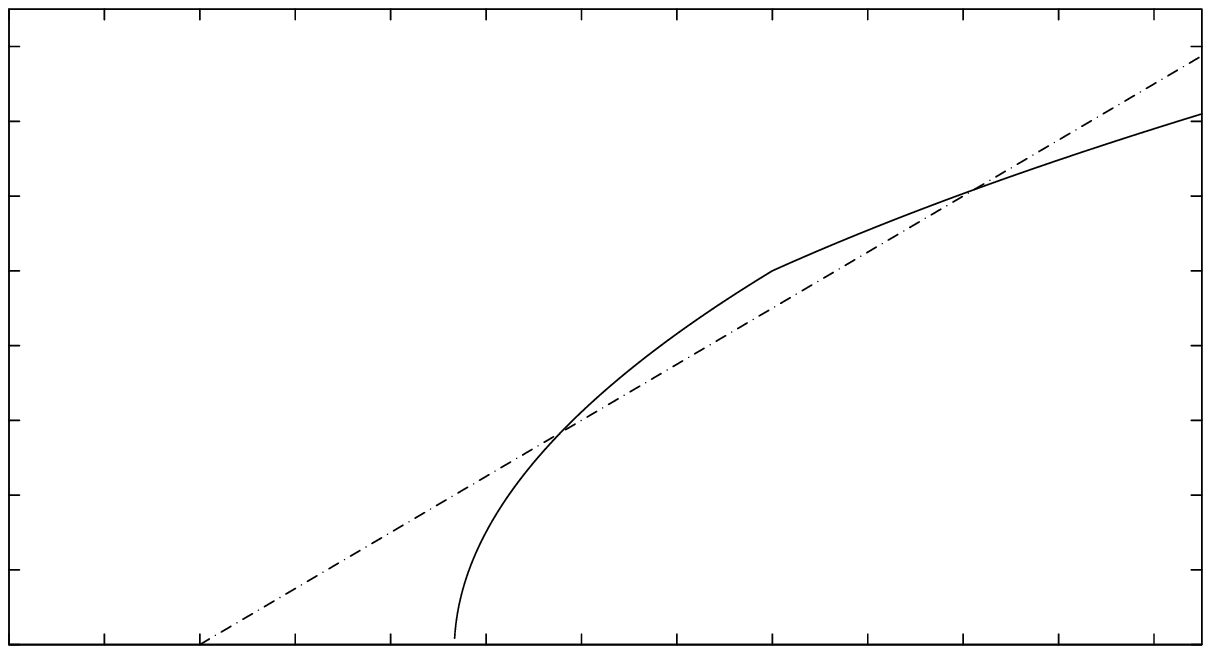}\\ 		% varya.txt
      \input{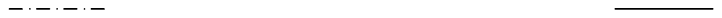}\\	 	% varya.txt
      \medskip
      \caption{\label{figure:comparison:varya:k0_m0025_a}%
      %\label{figure:comparison:varya:k0}
%       \medskip \medskip
% 	 \parbox{\textwidth}{
	 Lower bounds for the eigenvalue of $\mA$ with smallest modulus for 
	 $k,\ m$ and $\om$ fixed. %\newline
	 Note that for both cases of $k$, for each of the plotted estimates there is an interval for $a$ 
	 where it provides a larger lower bound for the modulus of the eigenvalues of $\mA$ 
	 than the other bound.
      }
   \end{center}
\end{figure} %%%}}}

\begin{figure}[h] %%%{{{ %figure:comparison:varya:m025_om075_k0
      \normalsize
   \begin{center}

      \input{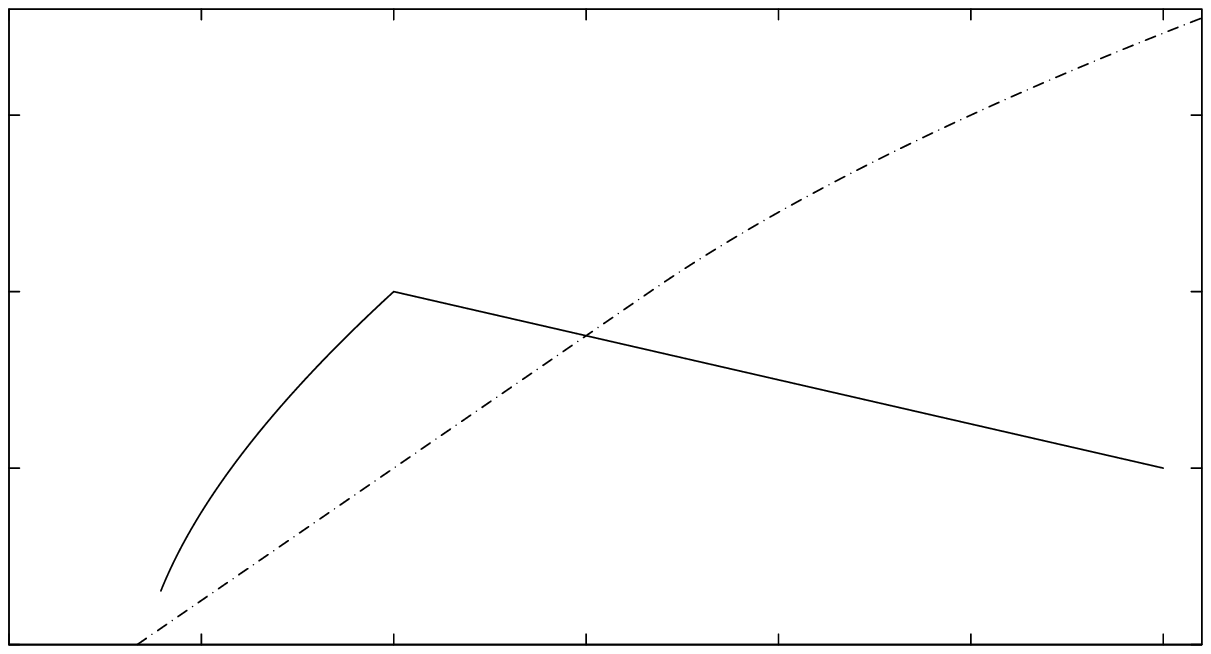}\\[4ex] 	% varya.txt
   \end{center}
   \begin{center}
      \input{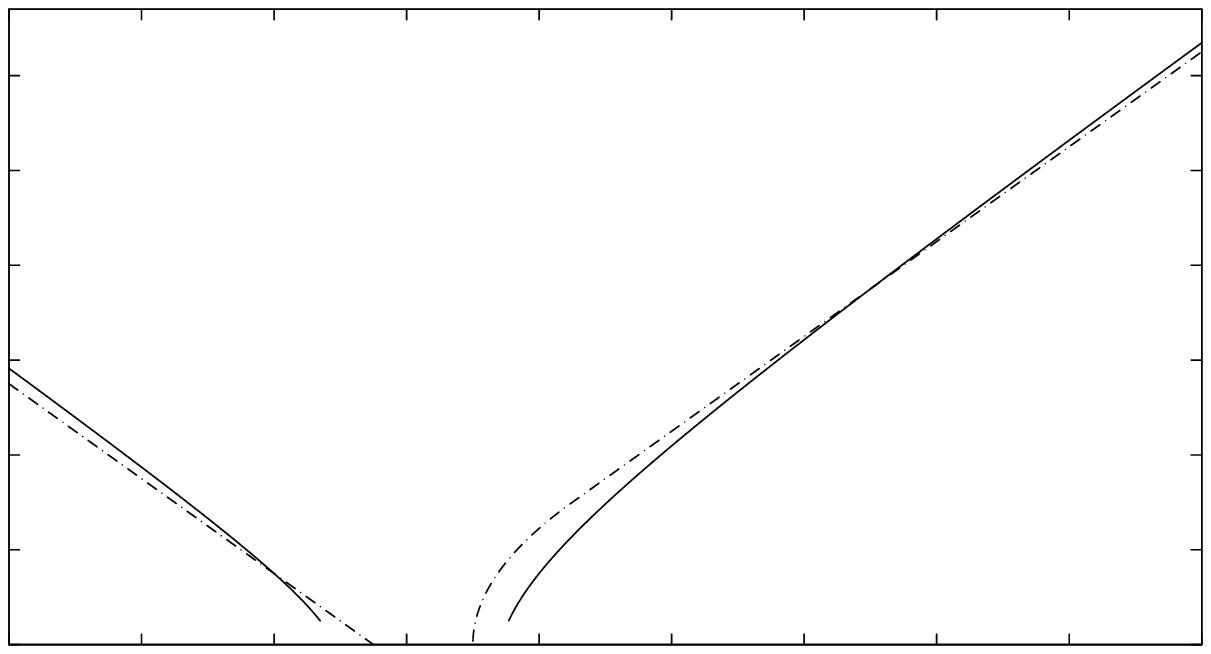}\\ 	% varya.txt
      
      \input{plots/lowerboundslegend}\\ 	% varya.txt
      \medskip
      \caption{\label{figure:comparison:varya:m025_om075_k0}%
      %\label{figure:comparison:varya:highk}
%       \medskip \medskip
%       \parbox{\textwidth}{
	    Lower bounds for the modulus of the eigenvalues $\la$ of $\mA$ for $m=0.25$ and $\om=0.75$ fixed. %\newline
      In the first graph, the bounds are plotted as functions of $a$ for $k=0$ fixed.
      The second graph shows the bounds as functions of $k$ with $a=1$ fixed. Note that physically only the values for integral values of $k$ make sense.
      }
   \end{center}
\end{figure} %%%}}}

%%%}}}
%%%}}}

\clearpage
% \begin{acknowledgements} %%%{{{
   \hspace*{\fill}
   {\bf Acknowledgement.}
   \hspace*{\fill}\\[1ex]
      The author wishes to express her gratitude to 
%       the Deutsche Studienstitung (German National Academic Foundation) and 
      the DFG (German Research Foundation), grant TR 368/6-1 for the financial support.

% \end{acknowledgements}
%%%}}}

% \bibliography{/home/monika/texmf/bibtexfiles/lit}
% \bibliographystyle{alpha}

%%%}}}

\end{document}